\documentclass[a4paper,UKenglish,cleveref, autoref, thm-restate]{lipics-v2021}

\usepackage[T1]{fontenc}
\usepackage{amsmath,amssymb}
\usepackage{enumitem}
\usepackage{amsmath}
\usepackage{xpatch}

\xapptocmd\normalsize{%
 \setlength\abovedisplayskip{5pt}
 \setlength\belowdisplayskip{5pt}
 \setlength\abovedisplayshortskip{0pt}
 \setlength\belowdisplayshortskip{0pt}
}{}{}

\usepackage{amsthm}
\makeatletter
\def\thm@space@setup{\thm@preskip=4pt \thm@postskip=4pt}
\makeatother

\allowdisplaybreaks
\newtheorem{postulate}{Postulate}{\bfseries}{\rmfamily}

\title{A Behavioural Theory of Probabilistic Algorithms Using Probabilistic Abstract State Machines}

\titlerunning{Behavioural Theory of Probabilistic Algorithms}

\author{Flavio Ferrarotti}{Software Competence Center Hagenberg, Austria}{johnqpublic@dummyuni.org}{https://orcid.org/0000-0003-2278-8233}{The research was supported by the Austrian ministries BMIMI, BMWET and the State of Upper Austria in the frame of the SCCH COMET module DEPS (FFG 888338) and the SCCH COMET competence center INTEGRATE (FFG 892418).}

\author{Klaus-Dieter Schewe}{Institut Nationale Polytechnique de Toulouse / IRIT CNRS, Toulouse, France}{kd.schewe@gmail.com}{https://orcid.org/0000-0002-8309-1803}{Part of the research has been supported by the ANR project EBRP: EventB-Rodin-Plus under grant no. ANR-19-CE25-0010.}

\authorrunning{F. Ferrarotti and K.-D. Schewe} 

\Copyright{Flavio Ferrarotti and Klaus-Dieter Schewe}

\ccsdesc[500]{Theory of computation}

\keywords{behavioural theory, probabilistic algorithm, bounded exploration, slicing, probabilistic Abstract State Machines} 

\category{} 

\relatedversion{} 

\nolinenumbers 

\begin{document}

\maketitle

\begin{abstract}
We motivate an axiomatic definition of probabilistic algorithms (PAs) by four postulates covering random branching time, abstract states, background, and random bounded exploration. Then, we introduce probabilistic Abstract State Machines (pASMs) and show that they specify PAs. Finally, we prove that every PA satisfying these postulates can be simulated step-by-step by a behaviourally equivalent pASM with the same signature and background.
\end{abstract}

\newpage
\section{Introduction}

\emph{Probabilistic algorithms} (PAs) are algorithms that involve some random choices so that the average expected outcome is better and more efficiently reachable than the outcome of either determining first the best choice or exploring all possible choices~\cite[p.233ff]{brassard:1988}. They are ubiquitous in modern computer science, with applications ranging from sorting and search~\cite{karp:dam1991} to cryptographic protocols~\cite{rabin:1980}, Monte Carlo methods in numerical computation and statistical physics~\cite{brassard:1988}, and machine learning through stochastic optimisation and genetic algorithms~\cite{holland:scholar2012}, to name just a few. The practical success of randomisation as an algorithmic design principle is beyond dispute. On the formal side, probabilistic Turing machines~\cite{Gill:1977} define the class of computable probabilistic functions, and significant work exists on rigorous specification and verification of probabilistic systems, including probabilistic program refinement~\cite{mciver:2005} and probabilistic model checking~\cite{baier:2008}.


What is nonetheless missing is a behavioural theory in the sense of Gurevich's programme~\cite{gurevich:lipari1995}: a model-independent axiomatic characterisation of probabilistic algorithms together with a proof that a natural abstract machine model captures exactly this class in the sense of step-by-step behavioural equivalence, that is, every algorithm satisfying the axioms can be simulated by the machine model so that each of its computation steps corresponds to a single step of the machine, producing the same observable behaviour (including the same possible outcomes and their probabilities), and vice versa. Existing formal approaches as those mentioned above 
are grounded in specific mathematical models rather than derived from first principles. A behavioural theory provides this foundation and underpins the ASM method for specification, verification, and refinement~\cite{boerger:2003,boerger:fac2003}. We develop such a theory within the framework of behavioural theories and Abstract State Machines (ASMs), briefly reviewed next.

\vspace*{0.2cm}
\noindent
\textbf{Behavioural Theories and the ASM Method.}
A \emph{behavioural theory} consists of a set of \emph{postulates} defining a class of algorithms, an \emph{abstract machine model} representing it, and a \emph{capture theorem} establishing that the two coincide via step-by-step behavioural equivalence. The ur-instance is Gurevich's theory of sequential algorithms~\cite{gurevich:tocl2000}, the \emph{sequential ASM thesis}, based on three postulates: \emph{sequential time}, \emph{abstract states}, and \emph{bounded exploration}, together with an implicit \emph{background} postulate providing Boolean logic and \textit{undef}. The corresponding model is the \emph{sequential ASM}, using assignments, conditionals, and bounded parallelism.

These ideas have been generalised to other classes, including parallel algorithms captured by parallel ASMs with unbounded parallelism~\cite{blass:tocl2003,blass:tocl2008}. A simplified axiomatisation with four postulates was later given by Ferrarotti et al.~\cite{ferrarotti:tcs2016}, using multiset comprehension terms in bounded exploration witnesses. Further behavioural theories cover recursive~\cite{boerger:fi2020}, concurrent~\cite{boerger:ai2016}, bulk synchronous parallel~\cite{ferrarotti:scp2019}, and reflective algorithms~\cite{schewe:scp2022,schewe:arxiv2025}.


The corresponding machine models support rigorous specification~\cite{boerger:2003,boerger:2018,boerger:2024}, refinement~\cite{boerger:fac2003}, and verification via dedicated logics. These include logics for deterministic ASMs~\cite{staerk:jucs2001}, general ASM logics for non-deterministic and concurrent systems~\cite{ferrarotti:igpl2017,ferrarotti:amai2018}, and extensions to reflective and temporal reasoning~\cite{schewe:scp2021,ferrarotti:abz2024}.


\vspace*{0.2cm}
\noindent
\textbf{Contribution.} We make two main contributions. The first, presented in Section~\ref{sec:postulates}, is an \emph{axiomatic characterisation of PAs} by four postulates. Identifying the right postulates is the central challenge: they must be simultaneously necessary, intuitively compelling, and strong enough to support a capture theorem. The postulate for \emph{random branching time} stipulates that for each state there exists a probability distribution over its possible successor states, reconceiving the very notion of a run as a sequence in which each successor state is drawn from that distribution. 
The postulates for \emph{abstract states} and \emph{background} both extend their counterparts in the parallel ASM thesis~\cite{ferrarotti:tcs2016}. The former is strengthened to require that isomorphisms between states preserve the probability distribution over their successors. The latter is extended by adding probability values, i.e.\ the set $[0,1] \subset \mathbb{R}$ equipped with multiplication, truncated addition, and ordering, as a background class for expressing and combining probabilities in state transitions.
The conceptually novel postulate is \emph{probabilistic bounded exploration}, whose key addition over its deterministic counterpart is a \emph{slicing condition}: sharpening the choice condition in a witness term yields a sub-algorithm with fewer possible runs but preserves the states that do occur, with probabilities renormalised by conditionalisation. This condition is what makes it possible to separate, within a single postulate, the combinatorial structure of updates from the probabilistic structure of choices, and identifying it is the conceptual core of the axiomatisation.

The second key contribution,  presented in Section~\ref{sec:thesis}, is the \emph{capture theorem}: every PA satisfying the four postulates is behaviourally equivalent to a \emph{probabilistic ASM} (pASM). The pASM model (Section~\ref{sec:pasm}) augments the standard parallel ASM model with a \emph{probabilistic choice rule}. This rule assigns a weight to each potential outcome and selects one based on its normalized probability (the relative likelihood of that outcome compared to all others). That pASMs satisfy the postulates is rather straightforward. The converse is the paper's main technical achievement. The proof reduces the problem to the parallel case by decomposing each step of a PA into two conceptual substeps: a parallel substep that determines, in parallel, all characterising tuples for possible successor states along with their probabilities, and a probabilistic substep that selects among them. This decomposition is non-obvious: it requires showing that the slicing condition guarantees the existence of characterising tuples, that these tuples are critical with respect to (probabilistic) witness terms, and that the complex isolating-formula machinery from the simplified parallel ASM thesis~\cite{ferrarotti:tcs2016} applies to both substeps. The resulting construction yields, for each state, a pASM rule capturing precisely the correct update sets and probability distribution; a structural similarity argument then allows us to collapse infinitely many state-specific rules into a single finite pASM rule, completing the proof.

\vspace*{0.2cm}
\noindent
\textbf{Scope and Related Work.} The theory covers standard classes of probabilistic algorithms~\cite{brassard:1988,karp:dam1991}, including  numerical, Las Vegas, Monte Carlo, and Sherwood algorithms, as well as genetic algorithms~\cite{holland:scholar2012}. Quantum algorithms~\cite{nielsen:2001} fall outside the present scope, as their randomness resides in states rather than transitions. 
There is an overlap with a behavioural theory of non-deterministic algorithms, which has been developed recently \cite{schewe:eb80}: Gurevich's sequential ASM thesis has been extended to capture choices among an a priori fixed set of alternatives that does not depend on the state~\cite{gurevich:tocl2000} (see also~\cite{boerger:2003}), and Schewe et al. further extend the theory to cover unbounded choice without unbounded parallelism, which was only sketched in~\cite{schewe:gki60-2016}. The present work is complementary to these developments. 
 On the verification side, a corresponding logic for pASMs is a natural next step not pursued~here.

\section{Axiomatisation of Probabilistic Algorithms} \label{sec:postulates}

We start with motivating and defining postulates that define the class of probabilistic algorithms (PAs). As already indicated above we will emphasise probabilistic choices in state transitions, but disregard randomised states. This is in accordance with the kind of probabilistic algorithms treated by Brassard and Bratley \cite[Chap.8]{brassard:1988} and Karp \cite{karp:dam1991}. It also covers randomness as in genetic algorithms \cite{koza:1993}, but it excludes quantum algorithms, where states themselves need to be considered as being randomised.

\subsection{Probabilistic Branching Time}
\label{ssec:rbt}

In all existing behavioural theories for sequential, parallel, concurrent, and other classes of algorithms, an algorithm $\mathcal{A}$ comprises a set $\mathcal{S}$ of states together with a distinguished subset $\mathcal{I} \subseteq \mathcal{S}$ of initial states. This remains the same for PAs. What changes is the nature of state transitions. Because a PA involves random choices, there is no single successor state; instead we have a \emph{state transition relation} $\tau \subseteq \mathcal{S} \times \mathcal{S}$, and we write $\tau(S) = \{ S^\prime \mid (S, S^\prime) \in \tau \}$ for the set of \emph{successor states} of $S$.

The successor states in $\tau(S)$ arise from a random choice made at $S$, so $\tau(S)$ must carry a probability distribution. 
In a single computational step, a probabilistic algorithm selects among a set of alternatives that it can explicitly enumerate at the current state; as only finitely many alternatives can be considered in one step\footnote{While probabilistic programs abstractly utilize continuous distributions, their operational semantics on digital hardware are constrained to finite floating-point or discrete machine representations. A state-dependent, finite-but-unbounded choice space thus precisely models the concrete transition systems of these algorithms, as uncountably infinite branching cannot be implemented at machine level}, we assume that $\tau(S)$ is finite for every state $S$. A probability measure on a finite set requires no measure-theoretic apparatus: it is simply a function $d_S \colon \mathcal{P}(\tau(S)) \to [0,1]$ with $d_S(\tau(S)) = 1$ and $d_S\!\left(\bigcup_{i \in I} X_i\right) = \sum_{i \in
I} d_S(X_i)$ for any family $\{X_i\}_{i \in I}$ of pairwise disjoint sets $X_i \subseteq \tau(S)$.

\begin{postulate}[Random Branching Time]\label{p-rbt}
A probabilistic algorithm $\mathcal{A}$ comprises a set of states $\mathcal{S}$, a set of initial states $\mathcal{I} \subseteq \mathcal{S}$ and for each state $S$ a probability measure $d_S : \mathcal{S} \rightarrow [0,1]$.
\end{postulate}

 A state $S^\prime$ with $d_S(S^\prime) > 0$ is called a \emph{possible successor state} of $S$. Then for any $X \subseteq \mathcal{S}$ we have $d_S(X) = \sum_{S^\prime \in X} d_S(S^\prime)$, which gives the probability of $X$ containing a possible successor state of $S$. In particular, $d_S(\mathcal{S}) = 1$.
 
 With Postulate \ref{p-rbt} we already get the definition of a {\em run} of a PA $\mathcal{A}$ as a sequence of states $S_0, S_1, \dots$ with $S_0 \in \mathcal{I}$ and $S_i \in \mathcal{S}$ with $d_{S_i}(S_{i+1}) > 0$ for all $i \ge 0$. The probability of such a run is $\prod_{i \geq 0} d_{S_i}(S_{i+1})$, which is well-defined, as the sequence of partial products $\{ \prod_{i=0}^k d_{S_i}(S_{i+1}) \}_{k \ge 0}$ is monotone decreasing. 

 As usual in behavioural theories, algorithms are treated as abstract objects defined by their state space and transitions. Behavioural equivalence therefore amounts to equality of the induced transition behaviour on states.
 Two PAs $\mathcal{A}$ and $\mathcal{A}^\prime$ are \emph{behaviourally equivalent} if they have the same runs and their probability measures coincide, i.e.\ $d_S(S^\prime) = d^\prime_S(S^\prime)$ holds for all states $S, S^\prime \in \mathcal{S}$.

\subsection{Abstract States of Probabilistic Algorithms}
\label{ssec:abstract-states}

As we consider only random choices in state transitions, but not random states, we can preserve the notion of state from the parallel ASM thesis \cite{blass:tocl2003,ferrarotti:tcs2016}. For this consider a {\em signature} $\Upsilon$, i.e. a finite set of function symbols, where each $f \in \Upsilon$ has some fixed arity $n_f$. Given a set $B$, we can define a {\em structure} $S$ over $\Upsilon$ as a family of functions $f_S: B^{n_f} \rightarrow B$ (for all $f \in \Upsilon$). Then $B$ is called the {\em base set} of the structure $S$. In order to permit also partial functions, we can simply assume the presence of a special value $\textit{undef\/} \in B$ and request $f_S(b_1, \dots, b_{n_f}) = \textit{undef\/}$, whenever one of the $b_i$ is \textit{undef\/}.

An {\em isomorphism} between $\Upsilon$-structures $S$ and $S^\prime$ is a bijection $\sigma: B \rightarrow B^\prime$ between the corresponding base sets with $\sigma(\textit{undef\/}) = \textit{undef\/}$ such that $f_{S^\prime}(\sigma(b_1), \dots, \sigma(b_{n_f})) = \sigma(f_S(b_1, \dots, b_{n_f}))$ holds for all $b_1, \dots, b_{n_f} \in B$. With this we can formulate the following second postulate for probabilistic algorithms.

\begin{postulate}[Abstract State]\label{p-state}
There exists a signature $\Upsilon$ such that all states $S \in \mathcal{S}$ of a probabilistic algorithm $\mathcal{A}$ are $\Upsilon$-structures, and the sets $\mathcal{S}$, $\mathcal{I}$ are closed under isomorphisms. Whenever $d_S(S^\prime) > 0$ holds, the states $S$ and $S^\prime$ have the same base set. For any isomorphism $\sigma$ defined on state $S$ we get $d_{\sigma(S')}(\sigma(S^\prime)) = d_S(S^\prime)$.
\end{postulate}

With Postulate \ref{p-state} we already get the means to analyse differences between states, for which we use the notion of update sets. For a function symbol $f \in \Upsilon$ of arity $n$ and values $b_1, \dots, b_n \in B$ for the base set $B$ of some state $S$ we call the pair $\ell = (f,(b_1, \dots, b_n))$ a {\em location} of $S$, and $b_0 = f_S(b_1, \dots, b_n)$ is called the {\em value} of this location in $S$, which we also denote as $\text{val}_S(\ell)$. An {\em update} in $S$ is a pair $(\ell, v)$ with a location $\ell$ of $S$ and a value $v \in B$. 
An {\em update set} $\Delta$ in $S$ is a set of such updates and it is called {\em consistent} iff for all $(\ell, v_1), (\ell, v_2) \in \Delta$ we have $v_1 = v_2$, i.e., there can be at most one update value for any location. For consistent $\Delta$ we define the state $S^\prime = S + \Delta$ by $\text{val}_{S^\prime}(\ell) = v$ for $(\ell, v) \in \Delta$ and $\text{val}_{S^\prime}(\ell) = \text{val}_S(\ell)$ for all other locations. We extend this definition by $S + \Delta = S$ for inconsistent $\Delta$.

\begin{lemma}\label{lem-update-set}
For any two states $S, S^\prime$ with the same base set $B$ there exists a unique minimal consistent update set $\Delta = \Delta(S,S^\prime)$ with $S^\prime = S + \Delta$.
\end{lemma}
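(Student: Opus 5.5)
The natural candidate is $\Delta(S,S^\prime) = \{ (\ell, \text{val}_{S^\prime}(\ell)) \mid \ell \text{ a location with } \text{val}_{S^\prime}(\ell) \neq \text{val}_S(\ell) \}$, the set of updates recording exactly the locations on which $S$ and $S^\prime$ disagree. I will show in turn that this set is a consistent update set, that $S + \Delta = S^\prime$, that it is contained in every consistent update set $\Delta^\prime$ with $S + \Delta^\prime = S^\prime$, and hence that it is the unique minimal such set.

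\textbf{Update set and consistency.} Since $S$ and $S^\prime$ share the base set $B$, every location of $S^\prime$ is a location of $S$, and every value lies in $B$. So $\Delta$ is an update set in $S$. It is consistent because each location $\ell$ contributes at most one pair, namely $(\ell,\text{val}_{S^\prime}(\ell))$.

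\textbf{$S + \Delta = S^\prime$.} We compare values location by location, using the definition of $S+\Delta$ for consistent $\Delta$.
\begin{itemize}
\item For $\ell$ with $(\ell,v)\in\Delta$ we have $\text{val}_{S+\Delta}(\ell) = v = \text{val}_{S^\prime}(\ell)$.
\item For any other $\ell$ we have $\text{val}_{S+\Delta}(\ell) = \text{val}_S(\ell) = \text{val}_{S^\prime}(\ell)$, the last equality because $\ell$ does not occur in $\Delta$.
\end{itemize}
Since a structure over $\Upsilon$ with base set $B$ is determined by its location values, $S+\Delta = S^\prime$.

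\textbf{Minimality.} Let $\Delta^\prime$ be any consistent update set with $S + \Delta^\prime = S^\prime$, and take $(\ell,v)\in\Delta$. If $\ell$ did not occur in $\Delta^\prime$, then $\text{val}_{S^\prime}(\ell) = \text{val}_{S+\Delta^\prime}(\ell) = \text{val}_S(\ell)$, contradicting $(\ell,v)\in\Delta$. So some $(\ell,v^\prime)\in\Delta^\prime$ exists. By consistency of $\Delta^\prime$ it is the only update of $\ell$ in $\Delta^\prime$, so $v^\prime = \text{val}_{S+\Delta^\prime}(\ell) = \text{val}_{S^\prime}(\ell) = v$. Hence $(\ell,v)\in\Delta^\prime$, and $\Delta \subseteq \Delta^\prime$.

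\textbf{Uniqueness.} Any minimal consistent $\Delta^\prime$ with $S+\Delta^\prime = S^\prime$ contains $\Delta$ by the previous step. Since $\Delta$ is itself such a set, minimality of $\Delta^\prime$ forces $\Delta^\prime = \Delta$. So $\Delta$ is in fact the least element, not merely a minimal one.

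\textbf{Main obstacle.} The only subtle point is the convention $S+\Delta^\prime = S$ for inconsistent $\Delta^\prime$. An inconsistent set can yield $S^\prime$ when $S^\prime = S$, and it need not contain $\Delta$ in general. This is harmless because the statement, and hence the minimality claim, quantifies only over consistent update sets. In the degenerate case $S = S^\prime$ we get $\Delta = \emptyset$, which is trivially least among consistent sets.
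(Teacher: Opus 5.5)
Your proof is correct and follows the paper's route exactly: the paper takes the set $\mathcal{D}$ of locations where $S$ and $S^\prime$ differ, defines $\Delta = \{ (\ell, \text{val}_{S^\prime}(\ell)) \mid \ell \in \mathcal{D} \}$, and asserts that consistency, $S + \Delta = S^\prime$, and minimality follow from the definition of $\mathcal{D}$. Your argument that $\Delta \subseteq \Delta^\prime$ for every consistent $\Delta^\prime$ with $S + \Delta^\prime = S^\prime$ spells out the minimality and uniqueness that the paper leaves implicit, and shows that $\Delta$ is in fact the least such set.
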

\vspace{-1em}
\begin{proof}
Let $\mathcal{D} = \{ \ell \mid \text{val}_{S^\prime}(\ell) \neq \text{val}_S(\ell) \}$ be the set of locations, at which the two states differ, and define the update set $\Delta = \{ (\ell, \text{val}_{S^\prime}(\ell)) \mid \ell \in \mathcal{D} \}$. Clearly, $\Delta$ is consistent, and we get $S + \Delta = S^\prime$. The minimality is a consequence of the definition of $\mathcal{D}$.
\end{proof}

\vspace{-0.5em}
As we are dealing with algorithms, we must ensure that update sets are always finite. As discussed in the context of the logic of parallel ASMs \cite{ferrarotti:igpl2017} this can be formally achieved by exploiting meta-finite structures \cite{graedel:infcomp1998}. With finite update sets it will always be the case that almost all locations in states of some run have the same value as in the initial state of that run. Therefore, without loss of generality we can assume that in every state there are only finitely many locations that have a value different from \textit{undef\/}.

For the remainder of this article let $\boldsymbol{\Delta}(S) = \{ \Delta(S, S^\prime) \mid d_S(S^\prime) > 0 \}$ denote the \emph{set of update sets} on the state $S$. Each $\Delta \in \boldsymbol{\Delta}(S)$ defines a possible successor state $S^\prime = S + \Delta$ of the state $S$ and vice versa. The probability measure $d_S$ on states thereby induces a probability measure on $\boldsymbol{\Delta}(S)$, which by a slight abuse of notation we also denote $d_S$, defined by $d_S(\Delta(S,S^\prime)) = d_S(S^\prime)$.



\subsection{Background of Probabilistic Algorithms}

In every algorithm we make implicit assumptions about values in base sets and functions on them that are available in every state. In existing behavioural theories we refer to this as the \emph{background}. For instance, we already mentioned the value $\textit{undef\/}$. Backgrounds in general have been investigated in depth by Blass and Gurevich \cite{blass:beatcs2007}. For PAs we extend the background requirements used for synchronous parallel algorithms~\cite{ferrarotti:tcs2016} by adding probability values as a background class, which are needed to express and combine probabilities in state transitions. This leads to the following third postulate.


\begin{postulate}[Background]\label{p-background}
Each state contains:

\begin{itemize}[noitemsep, topsep=0pt, parsep=0pt, partopsep=0pt]

\item truth values \textbf{true} and \textbf{false} and the Boolean operations on them;

\item a special value \textit{undef\/};

\item the equality predicate $=$;

\item all ordered pairs as well as binary function symbols $(x, y)$
for pairing and unary projection functions $\pi_1, \pi_2$ for extracting the first and second components of a pair;

\item all finite multisets together the empty multiset $\langle\rangle$, singletons $\langle x \rangle$, binary sum $x \oplus y$, and general sum $\bigoplus x$ as well as functions \textit{TheUnique\/} and \textit{AsSet\/}.

\item all probability values, i.e.\ the set $[0,1] \subset \mathbb{R}$, together with multiplication $\times$, truncated addition $x +_{[0,1]} y = \min(x + y, 1)$, the standard ordering $\leq$, and the distinguished elements $0$ and $1$.

\end{itemize}
\end{postulate}

In a multiset $x \oplus y$ the multiplicity of a value $v$ is the sum of the multiplicities of $v$ in $x$ and $y$, respectively. Analogously, the multiplicity of a value $v$ in $\bigoplus x$ is the sum of all multiplicities in the elements of $x$. We have $\textit{TheUnique\/}(\langle x \rangle) = x$ and $\textit{TheUnique\/}(x) = \textit{undef\/}$ for all $x$ that are not singleton multisets. $\textit{AsSet\/}(x)$ is the multiset that has the same elements as the multiset $x$, but all with multiplicity 1 (so this multiset is in fact a set). Multiplication $\times$ is closed on $[0,1]$ in the usual sense. Truncated addition $x +_{[0,1]} y = \min(x + y, 1)$ ensures closure of addition on $[0,1]$; it is used when combining probabilities that are guaranteed by construction to sum to at most $1$, in which case it coincides with standard addition.

\subsection{Probabilistic Bounded Exploration}

The decisive observation underlying the sequential ASM thesis~\cite{gurevich:tocl2000} as well as the (simplified) parallel ASM thesis \cite{ferrarotti:tcs2016} is that the behaviour of a sequential or parallel algorithm, respectively, is completely determined by some finite specification. This specification must contain terms defined over the signature $\Upsilon$, which are evaluated in a state in order to build the successor state. Hence there must exist a finite set of terms---ground terms in the case of the sequential thesis, multiset comprehension terms in case of the parallel thesis---that determine the update set, the application of it will result in the successor state.

For PAs the same argument applies with the difference that there are finitely many successor states, which are associated with probabilities. The evaluation of multiset comprehension terms in a state gives rise to a multiset, and the varying size of these can be exploited to capture unbounded parallelism, where the number of parallel branches depends on the state. Then multiset comprehension terms could also capture the varying number of possible successor states. We therefore define {\em witness terms} as a double multiset comprehension terms of the form
\begin{equation}\label{eq-witness}
\alpha = \langle U(y_1, \dots, y_q) \mid \psi(y_1, \dots, y_q) \rangle \; ,     
\end{equation}
where $U(y_1, \dots, y_q)$ is a multiset comprehension term of the form
\begin{equation*}
\beta = \langle (t_0, \dots, t_n) \mid \varphi(x_1, \dots, x_r, y_1, \dots, y_q) \rangle \; ,     
\end{equation*}
where the $x_i$ and $y_j$ are the only variables appearing in the terms $t_k$. 

Informally, the inner condition $\varphi$ is meant to capture parallelism, whereas the outer condition $\psi$ is to capture choices. In order to distinguish these two roles, we consider {\em sliced witness terms} of the form 
\begin{equation}\label{eq-witness-mod}
\alpha^\prime = \langle U(y_1, \dots, y_q) \mid \psi(y_1, \dots, y_q) \wedge \chi(y_1, \dots, y_q) \rangle
\end{equation}
with some additional condition $\chi$ for a witness term $\alpha$ as in (\ref{eq-witness}). 

In addition, as we also have to take probabilities of choices into account, we further consider \emph{p-witness terms}, which are multiset comprehension terms of the form $\varrho = \langle (y_1, \dots, y_q, p(y_1, \dots, y_q)) \mid \psi(y_1, \dots, y_q) \rangle$, where the term $p$ evaluates to values in $[0,1]$. We further require that in every state the values produced by a p-witness term form a valid probability distribution over the choices identified by $\psi$, i.e.\ $\sum_{\bar{b}:\, \psi(\bar{b}) = \mathbf{true}} p(\bar{b}) = 1$. We say that a witness term $\alpha$ is \emph{aligned} with a p-witness term $\varrho$ iff the defining condition $\psi(y_1, \dots, y_q)$ is the same for both. A set
$W$ of witness terms is \emph{aligned} with a set $P$ of p-witness terms iff every $\alpha \in W$ is aligned to some $\varrho \in P$ and vice versa.
With this we get the following fourth postulate for PAs. 


\begin{postulate}[Probabilistic Bounded Exploration]\label{p-bew}
For a probabilistic algorithm $\mathcal{A}$ there exists a finite set $W$ of witness terms that is aligned to a finite set $P$ of p-witness terms such that for all states $S, S^\prime$ that coincide on $W \cup P$, i.e. $\text{val}_S(\alpha) = \text{val}_{S^\prime}(\alpha)$ holds for all $\alpha \in W$ and $\text{val}_S(\varrho) = \text{val}_{S^\prime}(\varrho)$ holds for all $\varrho \in P$, we have $\boldsymbol{\Delta}(S) = \boldsymbol{\Delta}(S^\prime)$ and $d_S = d_{S^\prime}$, where $d_S$, $d_{S^\prime}$ refer to the probability measures on $\boldsymbol{\Delta}(S)$ and $\boldsymbol{\Delta}(S')$, respectively.

In addition, if $W$ is replaced by $W^\prime$ containing sliced witness terms $\alpha^\prime$ for all $\alpha \in W$, then this defines a probabilistic algorithm $\mathcal{A}^\prime$ with the same signature and background as $\mathcal{A}$ such that $\boldsymbol{\Delta}^{\mathcal{A}^\prime}(S) \subseteq \boldsymbol{\Delta}^{\mathcal{A}}(S)$ holds for all states $S$, and $d_S^{\mathcal{A}^\prime}(S^\prime) = d_S^{\mathcal{A}}(S^\prime) / \sum_{S^{\prime\prime} \in \tau^\prime(S)} d_S^{\mathcal{A}}(S^{\prime\prime})$ for $S^\prime \in \tau^\prime(S)$ and $d_S^{\mathcal{A}^\prime}(S^\prime) = 0$ otherwise, where $\tau^\prime(S) = \{ S + \Delta \mid \Delta \in \boldsymbol{\Delta}^{\mathcal{A}^\prime}(S) \}$.

\end{postulate}

Here we use superscripts $\mathcal{A}$ and $\mathcal{A}^\prime$, respectively, to distinguish between sets of update sets and probability measure associated with the PAs $\mathcal{A}$ and $\mathcal{A}^\prime$. The set $W \cup P$ of witness and p-witness terms will be called a {\em bounded exploration witness}. Note that the slicing preserves the states, but reduces the possible successor states and consequently also the possible runs, which is what we should expect, if we only restrict choices. 

Furthermore, the modification of the probability measure uses conditional probabilities for the condition $S^\prime \in \tau^\prime(S)$. This implies that the modified algorithm $\mathcal{A}^\prime$ has a bounded exploration witness $W^\prime \cup P^\prime$, where each $\varrho^\prime \in P^\prime$ has the form $\varrho^\prime = \langle (y_1, \dots, y_q, p^\prime(y_1, \dots, y_q)) \mid \psi(y_1, \dots, y_q) \rangle$ and is derived from some $\varrho = \langle (y_1, \dots, y_q, p(y_1, \dots, y_q)) \mid \psi(y_1, \dots, y_q) \rangle \in P$.

\subsection{Example: Probabilistic Turing Machines as PAs}
\label{ssec:ptm}

A natural question is whether the Postulates~\ref{p-rbt}--\ref{p-bew} are general enough to capture standard models of probabilistic computation. To explore this, we consider probabilistic Turing machines (PTMs), which extend ordinary Turing machines by allowing transitions to be governed by probability distributions. 

A PTM is a tuple $
M = (Q, \Sigma, \Gamma, \delta, q_0, q_{\mathrm{acc}}, q_{\mathrm{rej}})$, where $Q$ is a finite set of states, $\Sigma \subseteq \Gamma$ are the input and tape alphabets, $q_0$ is the initial state, $q_{\mathrm{acc}}$ and $q_{\mathrm{rej}}$ are accepting and rejecting states, and $\delta : Q \times \Gamma \to \mathcal{D}(Q \times \Gamma \times \{L,R\})$ with $\mathcal{D}(X)$ the set of discrete probability distributions on $X$: for $p \in \mathcal{D}(X)$ we have $p(x) \in [0,1]$ for $x \in X$ and $\sum_{x \in X} p(x) = 1$. Notice that $Q$ and $\Gamma$ are finite, hence only finitely many $x$ have $p(x) > 0$. A configuration is a triple $(q, w, h)$ with current state $q$, tape $w:\mathbb{Z}\to\Gamma$, and head position $h$.

Let $\mathcal{S}$ be the set of all $\Upsilon$-structures encoding configurations of a PTM $M$. Each configuration $C = (q,w,h)$ of $M$ induces a discrete probability measure $d_C$ over successor configurations via $\delta(q,w(h))$, satisfying Postulate~\ref{p-rbt}.
Encoding $C$ as a $\Upsilon$-structure with the tape as a unary function and constants for head and control state, with the base set containing tape positions, symbols, and control states, ensures isomorphisms correspond to renaming tape positions and preserve $d_C$, satisfying Postulate~\ref{p-state} (cf. Gurevich~\cite[Sec.~4.3.1]{gurevich:tocl2000}).
The background, including tape indices, symbols, Booleans, equality, arithmetic on indices and probability values, is fixed in all states, satisfying Postulate~\ref{p-background}.

Let $\mathit{prov} : Q \times \Gamma \times Q \times \Gamma \times \{L,R\} \rightarrow [0,1]$ such that $\mathit{prov}(q,a,q',a',m) = p(q',a',m)$ for $\delta(q,a) = p$. We define the following witness terms
\begin{align*}
\alpha_1 &= \langle \langle (x) \mid \mathbf{true} \rangle \mid \mathit{prov}(q, w(h), x, y, z) > 0 \rangle \\
\alpha_2 &= \langle \langle (y,h) \mid \mathbf{true} \rangle \mid \mathit{prov}(q, w(h), x, y, z) > 0 \rangle \\
\alpha_3 &= \langle \langle (z = L) \mid \mathbf{true} \rangle \mid \mathit{prov}(q, w(h), x, y, z) > 0 \rangle \\
\alpha_4 &= \langle \langle (h-1) \mid z = L \rangle \mid \mathit{prov}(q, w(h), x, y, z) > 0 \rangle \\
\alpha_5 &= \langle \langle (z = R) \mid \mathbf{true} \rangle \mid \mathit{prov}(q, w(h), x, y, z) > 0 \rangle \\
\alpha_6 &= \langle \langle (h+1) \mid z = R \rangle \mid \mathit{prov}(q, w(h), x, y, z) > 0 \rangle 
\end{align*}
and the associated p-witness term 
\[\varrho = \langle (q, w(h),x,y,z,\mathit{prov}(q,w(h),x,y,z)) \mid \mathit{prov}(q, w(h), x, y, z) > 0 \rangle. \]

Next, we show that the witness set $W = \{ \alpha_i \mid 1 \le i \le 6 \}$ and the set of p-witness terms $P = \{\varrho\}$ satisfy Postulate~\ref{p-bew} for PTMs.

Let $S,S'\in\mathcal{S}$ be $\Upsilon$-states encoding configurations $C=(q,w,h)$ and $C'=(q',w',h')$ of a PTM $M$.  If $\mathrm{val}_S(\alpha_i)=\mathrm{val}_{S'}(\alpha_i)$ for all $1 \le i \le 6$ and $\mathrm{val}_S(\varrho)=\mathrm{val}_{S'}(\varrho)$ hold, then necessarily the interpretation of $q$ and $w(h)$ coincide in $S$ and $S'$, and the same set of triples $(x,y,z)$ with $\mathit{prov}(q,w(h),x,y,z)>0$ is obtained in both states with identical probabilities. Each such triple determines an update set $\Delta_{(x,y,z)}$ which updates $q$, $w(h)$ and $h$, respectively, to the values $x$, $y$, and $h-1$ if $z = L$ or $h+1$ if $z = R$.
Hence $\boldsymbol{\Delta}(S)=\boldsymbol{\Delta}(S')$ and the induced probability measures coincide, $d_S(\Delta_{(x,y,z)}) = d_{S'}(\Delta_{(x,y,z)}) = \mathit{prov}(q,w(h),x,y,z)$. 

Thus, the finite sets $W=\{\alpha_1 ,\dots, \alpha_6\}$ and $P=\{\varrho\}$ form a bounded exploration witness for $M$ and show that the PTM $M$ satisfies Postulate~\ref{p-bew}. 
Moreover, slicing $\alpha$ to obtain $\alpha'$ corresponds to removing some of the admissible transition choices.  
This clearly yields again a PTM, but with fewer possible runs, since certain transitions have been eliminated.  
The probability distribution on the remaining runs is obtained by renormalising the original probabilities via conditionalisation, 
exactly as required by Postulate~\ref{p-bew}.

\section{Probabilistic Abstract State Machines}\label{sec:pasm}

We proceed with the definition of an abstract machine model. For this we first need to define states and their background, then continue with probabilistic state transitions.

\subsection{States of pASMs}

States and background of a \emph{probabilistic Abstract State Machine} (pASM) $M$ are defined as for parallel ASMs~\cite{ferrarotti:tcs2016} with the extension by probability values, same as in Postulate \ref{p-background}. That is, states result from the interpretation of a signature $\Upsilon$ over a base
set $B$, which must contain truth values \textbf{true} and \textbf{false}, probability values in [0,1], a special value $\textit{undef\/}$, and be closed under finite multiset construction and pairing. The background further contains the equality predicate $=$, the Boolean operators, basic arithmetic on integers, projection operators for pairs, and the multiset operators $\oplus$ for binary sum, $\bigoplus$ for general sum, \textit{TheUnique\/} and \textit{AsSet\/}. The sets $\mathcal{S}$ of all states and a specified subset $\mathcal{I}$ of initial states of $M$ are both closed under isomorphisms.

The single extension over parallel ASM states and background is that each base set $B$ must additionally contain all probability values $[0,1] \subset \mathbb{R}$ together with multiplication $\times$, truncated addition $x +_{[0,1]} y = \min(x+y,1)$, and the standard ordering $\leq$ on $[0,1]$.

The signature $\Upsilon$ of $M$, the background of $M$, and an arbitrary set $V$ of variables define the set of \emph{terms} of $M$ in the usual way. This includes tuple terms defined as shortcuts for nested pairing and multiset comprehension terms $\langle t(x_1, \dots, x_n) \mid \varphi(x_1, \dots, x_n) \rangle$ with variables $x_i$ and terms $t$ and $\varphi$; the latter should be Boolean, i.e.\ its evaluation in a state with respect to a given variable assignment should result in a truth value. Let $\mathbb{T}_M(V)$ denote the set of terms of $M$ with variables in $V$.

\subsection{Rules of pASMs}
\label{ssec:rules}

State transitions of $M$ are defined by means of pASM rules as follows.

\begin{itemize}[noitemsep, topsep=0pt, parsep=0pt, partopsep=0pt]

\item \textbf{skip} is a rule, called {\em skip rule}.

\item For $f \in \Upsilon$ and terms $t_0, t_1, \dots, t_{n_f}$ the expression $f(t_1, \dots, t_{n_f}) := t_0$ is a rule, called {\em assignment rule}.

\item For a Boolean term $\varphi$ and pASM rules $r_1$ and $r_2$ the expression $\textbf{if} \; \varphi \; \textbf{then} \; r_1 \; \textbf{else} \; r_2$ is a rule, called {\em conditional rule}.

\item For a Boolean term $\varphi$ with free variables $x_1, \dots, x_k$ and a pASM rule $r$ with free variables among the $x_i$ the expression $\textbf{forall} \; x_1, \dots, x_k \; \textbf{:} \; \varphi(x_1, \dots, x_k) \; \textbf{do} \; r(x_1, \dots, x_k)$ is a rule, called {\em parallel rule}.


\item For a Boolean term $\psi$ with free variables $x_1, \dots, x_k$, a non-negative real-valued term $p$ with free variables $x_1, \dots, x_k$ serving as a \emph{weight}, and a pASM rule $r$ with free variables among $x_1, \dots, x_k$, the expression \[\textbf{choose} \; x_1, \dots, x_k \; \textbf{:} \; \psi(x_1, \dots, x_k) \; \textbf{with weight} \; p(x_1, \dots, x_k) \; \textbf{do} \; r(x_1, \dots, x_k)\] is a rule, called \emph{probabilistic choice rule}. The actual probability assigned to each alternative is obtained by normalising all weights over the alternatives selected by $\psi$, as defined in the semantics below.

\end{itemize}

In order to avoid ambiguities the free use of parentheses is permitted. Furthermore, common shortcuts can be used, e.g. omitting \textbf{else} $r_2$ for the case of a skip rule $r_2$. The set of {\em free variables} in a rule is defined as usual.

The semantics of pASM rules is defined by a set of update sets $\boldsymbol{\Delta}_{r,\sigma}(S)$ for a state $S$ with base set $B$ and a variable assignment $\sigma: V \rightarrow B$ as well as an associated probability measure $d_{r,S,\sigma}$.

\begin{itemize}[noitemsep, topsep=0pt, parsep=0pt, partopsep=0pt]

\item For a skip rule we have $\boldsymbol{\Delta}_{\textbf{\textit{skip\/}},\sigma}(S) = \{ \emptyset \}$ and $d_{\textbf{\textit{skip\/}},S,\sigma}(\emptyset) = 1$.

\item For an assignment rule $f(t_1, \dots, t_{n_f}) := t_0$ we have $\boldsymbol{\Delta}_{r,\sigma}(S) = \{ \Delta \}$ with\\ $\Delta = \{ ((f, ( \text{val}_{S,\sigma}(t_1), \dots, \text{val}_{S,\sigma}(t_{n_f}) )), \text{val}_{S,\sigma}(t_0) ) \}$ and $d_{r,S,\sigma}(\Delta) = 1$.

\item For a conditional rule $\textbf{if} \; \varphi \; \textbf{then} \; r_1 \; \textbf{else} \; r_2$ we have $\boldsymbol{\Delta}_{r,\sigma}(S) = \boldsymbol{\Delta}_{r_1,\sigma}(S)$ and $d_{r,S,\sigma}(\Delta) = d_{r_1,S,\sigma}(\Delta)$ in case $\text{val}_{S,\sigma}(\varphi) = \textbf{true}$ holds, and $\boldsymbol{\Delta}_{r,\sigma}(S) = \boldsymbol{\Delta}_{r_2,\sigma}(S)$ and $d_{r,S,\sigma}(\Delta) = d_{r_2,S,\sigma}(\Delta)$ otherwise.


\item For a parallel rule $\textbf{forall} \; \bar{x} \; \textbf{:} \;
\varphi(\bar{x}) \; \textbf{do} \; r^\prime(\bar{x})$, where
$\bar x = (x_1,\ldots,x_k)$ and $\bar a = (a_1,\ldots,a_k)$ ranges over tuples of values in
$B$, let
\[
B_{S,\sigma} \;=\; \{\, \bar a \mid val_{S,\sigma[\bar x\mapsto\bar a]}(\varphi) = true \,\},
\]
which we assume to be finite (as is standard for parallel ASMs; see formal definition using meta-finite states in~\cite{ferrarotti:tcs2016}),  and let
\[
\Omega_{r,\sigma}(S) \;=\; \prod_{\bar a \in B_{S,\sigma}} \Delta_{r',\sigma[\bar x\mapsto\bar a]}(S)
\]
be the (finite) set of families $(\Delta_{\bar a})_{\bar a \in B_{S,\sigma}}$ of independent
local choices. We define
\[
\Delta_{r,\sigma}(S) \;=\; \Big\{\, \textstyle\bigcup_{\bar a} \Delta_{\bar a} \ \Big|\ (\Delta_{\bar a})_{\bar a} \in \Omega_{r,\sigma}(S) \,\Big\}
\]
\[
d_{r, S, \sigma}(\Delta) \;=\; \sum_{\substack{(\Delta_{\bar a})_{\bar a} \,\in\, \Omega_{r,\sigma}(S) \\ \bigcup_{\bar a} \Delta_{\bar a} \,=\, \Delta}}
\ \prod_{\bar a \in B_{S,\sigma}} d_{r',S,\sigma[\bar x\mapsto\bar a]}(\Delta_{\bar a})
\quad (\Delta \in \Delta_{r,\sigma}(S)).
\]

\item For a probabilistic choice rule $\textbf{choose} \; \bar{x} \; \textbf{:} \; \psi(\bar{x}) \; \textbf{with weight} \; p(\bar{x}) \; \textbf{do} \; r^\prime(\bar{x})$, where $\bar{x} = (x_1, \dots, x_k)$ and $\bar{a} = (a_1, \dots, a_k)$ ranges over tuples of values in $B$, let
\[
B_{S,\sigma} \;=\; \big\{\, \bar a \mid \text{val}_{S,\sigma[\bar x\mapsto\bar a]}(\psi) = true
\text{ and } \text{val}_{S,\sigma[\bar x\mapsto\bar a]}(p) > 0 \,\big\},\]
\[
\Sigma_{S,\sigma} \;=\; \sum_{\bar a \in B_{S,\sigma}} \text{val}_{S,\sigma[\bar x\mapsto\bar a]}(p) ,
\]
where $\Sigma_{S,\sigma}$ is a finite sum, as there are only finitely many non-zero values of
$p$. If $\Sigma_{S,\sigma} = 0$, set $\Delta_{r,\sigma}(S) = \{\emptyset\}$ and
$d_{r,\sigma}(\emptyset) = 1$ (i.e.\ $r$ behaves as $\textbf{skip}$). Otherwise we define
\[
\Delta_{r,\sigma}(S) \;=\; \bigcup_{\bar a \in B_{S,\sigma}} \Delta_{r',\sigma[\bar x\mapsto\bar a]}(S)
\]
\[
d_{r, S, \sigma}(\Delta) \;=\; \sum_{\substack{\bar a \,\in\, B_{S,\sigma} \\ \Delta \,\in\, \Delta_{r',\sigma[\bar x\mapsto\bar a]}(S)}}
\frac{\text{val}_{S,\sigma[\bar x\mapsto\bar a]}(p)}{\Sigma_{S,\sigma}} \cdot d_{r',S,\sigma[\bar x\mapsto\bar a]}(\Delta)
\quad (\Delta \in \Delta_{r,\sigma}(S)).
\]

\end{itemize}

Finally, a pASM $M$ comprises a background as defined above, a signature $\Upsilon$ defining the set $\mathcal{S}$ of states and the subset $\mathcal{I}$ of initial states, and a closed pASM rule $r$. This defines the state transition relation $\tau_M$ of $M$ with $(S, S^\prime) \in \tau_M$ iff there exists some update set $\Delta \in \boldsymbol{\Delta}_r(S)$ with $S + \Delta = S^\prime$. Furthermore, we get probability measures $d_S$ for all states $S$ of $M$, which are defined by $d_S(S + \Delta) = d_{r,S}(\Delta)$. Note that we can omit the variable assignment $\sigma$, because the rule $r$ is closed.

\subsection{Examples}

Brassard and Bratley~\cite[Chapter~8]{brassard:1988} classify probabilistic algorithms into four categories--\emph{numerical}, \emph{Las Vegas}, \emph{Monte Carlo}, and \emph{Sherwood}--according to the role that randomness plays in their computation and in the correctness of their results. 
We illustrate each class with a canonical algorithm, expressed as a pASM.  

Note that a probabilistic choice rule with weight $p = 1$ yields a uniform distribution over all alternatives selected by $\psi$: since all weights are equal, the normalisation by $\Sigma_{S,\sigma}$ assigns probability $1/|\{\bar{a} \mid \text{val}_{S,\sigma[\bar{x}\mapsto\bar{a}]}(\psi) = \mathbf{true}\}|$ to each alternative. This is used in the examples below to model uniform random choices.
%
%
The examples also use the standard ASM $\textbf{let } x = t \textbf{ in } r$ rule, which evaluates term $t$ in the current state, binds the result to variable $x$, and then executes rule $r$ with that binding; it can always be replaced by substituting $t$ for $x$ directly in $r$, but improves readability. As usual, two or more rules written in sequence are executed \emph{in parallel}: for rules $r_1, \dots, r_k$ this can be expressed by $\textbf{forall } i \in \{1,\dots,k\} \textbf{ do } r_i$.\\[0.1cm]
\noindent
\textbf{Numerical.} The \emph{Monte Carlo integration} method estimates $\int_0^1 f(x)\,dx$ by sampling $N$ random points $(x,y)$ uniformly from a finite grid $G \times G$, where $G = \{ k/M \mid 0 \le k \le M \} \subset [0,1]$ for a fixed precision parameter $M$, counting hits where $y \le f(x)$, and returning the fraction of hits as the estimator. It is classified as \emph{numerical} in~\cite{brassard:1988} because randomness computes an approximate value. States contain $N$, $G$, $i$, $\mathit{counter}$, and $\mathit{approx}$, with $i=0$, $\mathit{counter}=0$, $\mathit{approx}=\mathit{undef}$ initially.
\begin{align*}
&\textbf{if } i < N \;\textbf{then} \\
& \quad \textbf{choose } x : x \in G \;\textbf{with weight } 1
  \;\textbf{do} \\
& \quad\quad \textbf{choose } y : y \in G \;\textbf{with weight } 1
  \;\textbf{do} \\
& \quad\quad\quad \textbf{if } y \le f(x) \;\textbf{then }
  \mathit{counter} := \mathit{counter} + 1 \\
& \quad i := i + 1 \\
&\textbf{else } \mathit{approx} := \mathit{counter} / N
\end{align*}
\noindent
\textbf{Las Vegas.} The \emph{$n$-queens problem} asks for a placement of $n$ queens on an $n \times n$ board so that no two attack each other. A randomized solver places queens row by row, choosing uniformly among legal columns; if none exist, the attempt fails. Correctness is guaranteed, but runtime is random. States contain integer $i$ (initially $1$) for the current row, a function $\mathit{col} \colon \{1,\dots,n\} \to \{1,\dots,n\} \cup \{\mathit{undef}\}$ for queen positions, and a Boolean $\mathit{success}$ (initially~$\mathbf{true}$):
\begin{align*}
&\textbf{if } i \le n \wedge \mathit{success} \;\textbf{then}\\
&\quad \textbf{let } \mathit{ValidCols} = \{ c \in \{1,\dots,n\} \mid \forall
  j < i:\, c \neq \mathit{col}(j) \wedge |c - \mathit{col}(j)| \neq i - j \}
  \textbf{ in}\\
&\qquad \textbf{if } \mathit{ValidCols} \neq \emptyset \;\textbf{then}\\
&\qquad\quad \textbf{choose } c \in \mathit{ValidCols} \;\textbf{with weight} \; 1 \;\textbf{do} \; \mathit{col}(i) := c \; ; \; i := i + 1\\
&\qquad \textbf{else } \mathit{success} := \textbf{false}
\end{align*}


\noindent
\textbf{Monte Carlo.} The \emph{Miller--Rabin primality test} determines
whether an odd integer $n > 2$ is prime by selecting random bases $a \in
\{2,\dots,n-2\}$ and verifying modular congruences that hold for all integers
coprime to $n$ when $n$ is prime. If any test fails, $n$ is composite;
otherwise $n$ is declared \emph{probably prime}, with error probability
reducible by increasing the number of iterations $k$. States contain $n$, $k$, $i$ (initially $0$),
and $\mathit{probablyPrime}$ (initially $\mathbf{true}$), where $n-1 = 2^s d$
with $d$ odd.
\begin{align*}
&\textbf{if } i < k \wedge \mathit{probablyPrime} \;\textbf{then} \\
& \quad \textbf{choose } a : a \in \{2, \dots, n-2\} \;\textbf{with weight} \; 1 \;\textbf{do} \\
& \quad\quad \textbf{if } (a^d \bmod n) \neq 1 \wedge \forall r < s:\,
  (a^{2^r d} \bmod n) \neq n-1 \;\textbf{then} \; \mathit{probablyPrime} := \textbf{false} \\
& \quad i := i + 1 
\end{align*}
\noindent
\textbf{Sherwood.} \emph{Randomized Quicksort} guarantees correctness but
uses randomness to improve expected performance by selecting pivots uniformly at random. States contain a function $A \colon \{1,\dots,n\} \to B$ representing the array to be sorted and a set $\mathit{Tasks}$ of pairs $(l,h)$ representing unsorted subarrays $A[l\ldots h]$, initially
$\{(1,|A|)\}$. The algorithm terminates when $\mathit{Tasks}$ becomes empty.
\begin{align*}
&\textbf{let } \mathit{Active} = \{(l,h) \in \mathit{Tasks} \mid l < h\}
  \textbf{ in} \\
& \quad \textbf{if } \mathit{Active} \neq \emptyset \;\textbf{then} \\
& \quad \quad \textbf{choose } (l,h) \in \mathit{Active} \;\textbf{with weight
  } 1 \;\textbf{do} \\
& \quad \quad\quad \textbf{choose } p \in \{l,\dots,h\} \;\textbf{with weight
  } 1 \;\textbf{do} \\
& \quad \quad\quad\quad \textbf{let } (A',p') = \mathit{Partition}(A,l,h,p)
  \textbf{ in} \\
& \quad \quad \quad\quad\quad A := A' \; ; \; \mathit{Tasks} := (\mathit{Tasks} \setminus
  \{(l,h)\}) \cup \{(l,p'-1),(p'+1,h)\} \\
&\quad \textbf{else if } \mathit{Tasks} \neq \emptyset \;\textbf{then }
  \mathit{Tasks} := \emptyset
\end{align*}
Here $\mathit{Partition}(A,l,h,p)$ rearranges $A[l\ldots h]$ around pivot
$A[p]$, returning updated array $A'$ and new pivot index $p'$ such that
$A'[l\ldots p'-1] \leq A'[p'] \leq A'[p'+1\ldots h]$. Clearing trivial tasks with $l \geq h$ ensures termination.

\subsection{Probabilistic Algorithms Defined by pASMs}

The postulates in Section~\ref{sec:postulates} define the class of PAs
axiomatically, independently of any machine model. The preceding subsections have defined pASMs as a concrete machine model. The following theorem---proven in Appendix~\ref{app:pasm}---establishes that every pASM is a PA. 

\begin{theorem}[Plausibility]\label{thm-pasm}
Every pASM satisfies the defining postulates for probabilistic algorithms and therefore defines a PA.
\end{theorem}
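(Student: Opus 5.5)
We verify the four postulates in turn for an arbitrary pASM $M$ with closed rule $r$, signature $\Upsilon$ and background as in Section~\ref{sec:pasm}. The first three are routine. Postulate~\ref{p-background} holds because the pASM background is defined to contain exactly the required items. For Postulate~\ref{p-rbt}, we show by structural induction on $r$ that $\boldsymbol{\Delta}_{r,\sigma}(S)$ is finite, each of its members is a finite update set, and $d_{r,S,\sigma}$ is a probability measure. The parallel case is a finite product of probability measures pushed forward along $(\Delta_{\bar a})_{\bar a}\mapsto\bigcup_{\bar a}\Delta_{\bar a}$, so the total mass is $\prod_{\bar a}1=1$. The choice case is a finite convex mixture with weights $\text{val}(p)/\Sigma_{S,\sigma}$ summing to $1$. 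Passing from update sets to successor states $S+\Delta$ then yields $d_S$. Two distinct update sets may give the same successor, for instance an inconsistent set versus $\emptyset$. We therefore define $d_S(S')$ as the sum of $d_{r,S}(\Delta)$ over all $\Delta$ with $S+\Delta=S'$, which is still a measure. The successor has the same base set by construction.

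For Postulate~\ref{p-state}, closure of $\mathcal{S},\mathcal{I}$ under isomorphism is by definition. Invariance $d_{\sigma(S)}(\sigma(S'))=d_S(S')$ follows from the standard fact that term evaluation commutes with isomorphisms, $\text{val}_{\sigma(S),\sigma\circ\beta}(t)=\sigma(\text{val}_{S,\beta}(t))$. This requires that $\sigma$ fixes background elements such as truth values and probability values. We then prove by induction on rules that $\boldsymbol{\Delta}_{r,\sigma\circ\beta}(\sigma(S))=\sigma(\boldsymbol{\Delta}_{r,\beta}(S))$ with matching probabilities. The choice and forall cases only need that $\sigma$ bijectively maps $B_{S,\beta}$ onto $B_{\sigma(S),\sigma\circ\beta}$ and preserves the values of $p$.

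The main obstacle is Postulate~\ref{p-bew}, including its slicing clause. My plan is to adapt the construction of bounded exploration witnesses for parallel ASMs~\cite{ferrarotti:tcs2016}. First, I will normalise $r$ so that probabilistic choices are pulled outward. Concretely, I will bring $r$ into a form where a single tuple of choice variables $\bar y$, ranging over the alternatives of all nested choose rules, together with a condition $\psi(\bar y)$ and a weight $p(\bar y)$, determines a deterministic parallel rule $r_{\bar y}$. Choices nested inside forall are handled by letting $\bar y$ encode a finite function from the forall range to local choices, using multisets of pairs; the product weight is then expressed through $\bigoplus$ and $\times$.

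For each assignment $f(\bar t):=t_0$ occurring in $r_{\bar y}$ with guard $\varphi$ and parallel variables $\bar x$, I take the witness term
\[
\langle\,\langle (t_0,\bar t)\mid\varphi(\bar x,\bar y)\rangle \mid \psi(\bar y)\,\rangle .
\]
I add analogous terms for all guards and for $\psi$ itself. For $P$, I take the single aligned p-witness term
\[
\langle (\bar y, p(\bar y)/\Sigma) \mid \psi(\bar y) \rangle ,
\]
where $\Sigma$ is expressible via $\bigoplus$ applied to the weights. If the normalisation with division by $\Sigma$ is awkward inside the background operations, I will instead use the raw weights and note that equal values on $P$ still fix $d_S$.

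States that agree on these terms produce the same family of update sets, indexed by $\bar y$, with the same probabilities. Hence $\boldsymbol{\Delta}(S)$ and $d_S$ coincide. Slicing by $\chi$ corresponds to the pASM obtained by replacing $\psi$ with $\psi\wedge\chi$ in the outer choose. Its update sets form a subset of the original ones, and renormalisation by $\Sigma$ gives exactly the conditional probabilities demanded by the postulate. The delicate points, which I expect to consume most of the work, are the correctness of the normalisation step for choices nested under forall, and probability mass merging when distinct $\bar y$ yield the same update set.
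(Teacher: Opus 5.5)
Your treatment of Postulates~\ref{p-rbt}--\ref{p-background} is sound. In two respects it is more careful than the paper, which asserts Postulate~\ref{p-state} ``by construction'' and sets $d_S(S+\Delta)=d_{r,S}(\Delta)$ without dealing with distinct update sets that give the same successor.

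The gap is in Postulate~\ref{p-bew}, in the normal form on which your whole construction rests. To pull a \textbf{choose} out of a \textbf{forall} whose range $D$ has unbounded, state-dependent size, the weight of a global alternative $y$ (a finite function on $D$) must be, up to a constant factor, $\prod_{\bar a\in D} p(\bar a,y(\bar a))$. No term computes this. The background provides only binary $\times$ and binary truncated addition on $[0,1]$. The operations $\oplus$ and $\bigoplus$ are sums of \emph{multisets} (multiset union), not numeric aggregates, and there is no fold over a multiset. So neither this product nor your $\Sigma$ is a term, and the p-witness term $\langle(\bar y,p(\bar y)/\Sigma)\mid\psi(\bar y)\rangle$ does not exist. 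Your raw-weight fallback fails for the same reason. It also violates the requirement that p-witness values lie in $[0,1]$ and sum to $1$.

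The paper never needs such a term, because it uses no normal form. It builds $W_r\cup P_r$ by structural recursion on $r$:
\begin{itemize}
\item guards are conjoined into the inner condition for conditional and \textbf{forall} rules, and into the outer condition for \textbf{choose} rules;
\item $P_r=P_{r'}$ for \textbf{forall};
\item for \textbf{choose}, the body's p-witness terms are extended by the choice variables and multiplied by the normalised weight.
\end{itemize}
An induction then shows that coincidence on $W_r\cup P_r$ yields equal \emph{local} update sets and distributions. The product over a \textbf{forall} range and the mixture over alternatives are then supplied by the rule semantics, not by a term. (The paper's $\tilde p=p/\sum p$ also uses a numeric sum outside the background, but it never needs an unbounded product.) You should drop the normalisation and argue compositionally in this way.

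Second, your claim that slicing gives ``exactly the conditional probabilities demanded by the postulate'' is false whenever $\chi$ separates two alternatives that lead to the same update set. This is the mass-merging issue you flag but leave open. Take $\textbf{choose}\; y : y\in\{1,2,3\}\;\textbf{with weight}\;1\;\textbf{do}\;\textbf{if}\; y\le 2\;\textbf{then}\; a:=0\;\textbf{else}\; a:=1$ and slice by $\chi(y)\equiv y\neq 2$. Both successors survive, so the denominator is $1$ and Postulate~\ref{p-bew} demands that the probabilities $2/3$ and $1/3$ be kept. The sliced machine instead assigns $1/2$ to each. Hence the sliced pASM cannot simply serve as $\mathcal{A}'$. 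You would have to restrict to slicings that do not split the fibres of $\bar y\mapsto\Delta_{\bar y}$, or take $\mathcal{A}'$ to be the conditionalised algorithm and show separately that it is a PA. The paper's own argument for the slicing clause only establishes $\boldsymbol{\Delta}_{r'}(S)\subseteq\boldsymbol{\Delta}_r(S)$ and does not check the probability formula, so it does not settle this either.

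A smaller point, which the paper also misses: $\boldsymbol{\Delta}(S)$ in Postulate~\ref{p-bew} consists of minimal update sets, so you also need the location terms $f(\bar t)$ in $W$. For example, the minimal update set produced by $a:=0$ is empty or a singleton depending on the current value of $a$, which your witness terms do not see.
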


\section{Probabilistic ASM Thesis}\label{sec:thesis}

We will now address the capture of PAs by pASMs, i.e. we will show the converse of Theorem \ref{thm-pasm}. As the full proof of Theorem \ref{thm-capture} is rather lengthy, it is outsourced to Appendix \ref{app:capture}.

\begin{theorem}[Capture]\label{thm-capture}
For every probabilistic algorithm $\mathcal{A}$ there exists a behaviourally equivalent pASM $M$.
\end{theorem}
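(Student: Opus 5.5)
The plan is to reduce the probabilistic case to the simplified parallel ASM thesis of Ferrarotti et al.~\cite{ferrarotti:tcs2016}. Each step of $\mathcal{A}$ is split into two conceptual substeps. In the \emph{parallel substep}, all possible successor update sets are enumerated together with their probabilities. In the \emph{probabilistic substep}, one of them is selected by a \textbf{choose} rule whose weights are these probabilities.

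Fix a bounded exploration witness $W \cup P$ as in Postulate~\ref{p-bew}. For a state $S$, the first step is to show that every $\Delta \in \boldsymbol{\Delta}(S)$ is identified by a \emph{characterising tuple} $\bar b = (b_1,\dots,b_q)$ with $\text{val}_S(\psi(\bar b)) = \textbf{true}$.

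To obtain it, I use slicing. Choose $\chi$ as the equality $\bar y = \bar b$, which makes sense once the elements are named by critical terms. By Postulate~\ref{p-bew}, the sliced algorithm $\mathcal{A}'$ then satisfies $\boldsymbol{\Delta}^{\mathcal{A}'}(S) \subseteq \boldsymbol{\Delta}(S)$, with probabilities conditionalised. This also gives $d_S(\Delta) = \sum_{\bar b \mapsto \Delta} p(\bar b)$, where $p$ comes from the aligned p-witness term.

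For a fixed $\bar b$, the sliced algorithm behaves like a family of parallel algorithms. Its witness terms $\langle(t_0,\dots,t_n) \mid \varphi(\bar x,\bar b)\rangle$ are exactly parallel bounded exploration witnesses, with $\bar b$ treated as parameters. By the argument of~\cite{ferrarotti:tcs2016}, every update in $\Delta_{\bar b}$ is then a critical update: its location arguments and value are among the values of the $t_k$ under $\varphi$. The isolating-formula machinery then yields a parallel rule $r_{S}(\bar y)$ such that $\boldsymbol{\Delta}_{r_S,[\bar y\mapsto \bar b]}(S) = \{\Delta_{\bar b}\}$ for all admissible $\bar b$.

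The rule for state $S$ will then be
\[ \textbf{choose}\; \bar y : \psi(\bar y)\;\textbf{with weight}\; p(\bar y)\;\textbf{do}\; r_S(\bar y). \]
Its semantics sums the normalised weights $p(\bar b)/1$ over the tuples giving the same $\Delta$. This matches $d_S$, because the p-witness values sum to $1$. If $P$ has several p-witness terms aligned with different witness terms, I first combine them into one product choice: a nested choose over the aligned conditions, using the fact that Postulate~\ref{p-bew} fixes $d_S$ from the values of all of them jointly.

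The last step collapses the state-specific rules into one rule. As in~\cite{ferrarotti:tcs2016}, I define an equivalence on states by agreement on which equalities hold between the (finitely many) witness and p-witness terms, now including the probability components. There are only finitely many such types. By Postulate~\ref{p-bew}, states of the same type have the same update sets and the same measure, and Postulate~\ref{p-state} supplies transfer along isomorphisms. Each type is expressible by a Boolean formula $\theta_i$, so the final rule is $\textbf{if}\;\theta_1\;\textbf{then}\;r_{S_1}\;\textbf{else if}\dots$, which is finite and closed. Its update sets and $d$ agree with $\mathcal{A}$ on every state, giving behavioural equivalence with the same signature and background.

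I expect the main obstacle to be the existence and criticality of characterising tuples. The slicing condition must be shown to separate distinct successor states rather than produce mixtures. Concretely, slicing to a single $\bar b$ has to yield a \emph{single} update set, so that $\Delta_{\bar b}$ is well defined and determined by the inner multiset values. My first attempt is to apply slicing twice, once with $\bar y=\bar b$ and once with its negation. I would then combine the conditional-probability equations to show that the supports partition $\boldsymbol{\Delta}(S)$ consistently with the $p$ values. Only then does the weighted choose reproduce $d_S$ exactly.
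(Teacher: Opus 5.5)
There is a genuine gap, located exactly at the step you yourself flag, and the remedy you sketch cannot close it. The slicing clause of Postulate~\ref{p-bew} only guarantees that the sliced algorithm has \emph{some} subset $\boldsymbol{\Delta}^{\mathcal{A}'}(S)\subseteq\boldsymbol{\Delta}^{\mathcal{A}}(S)$, with $d^{\mathcal{A}'}_S$ equal to $d^{\mathcal{A}}_S$ conditioned on the surviving $\tau'(S)$.

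\begin{itemize}
\item The clause never says that slicing to a single choice yields a singleton. The conditionalisation formula holds for whatever subset survives, and it mentions neither $\chi$, nor $\bar b$, nor the p-witness values. Combining the equations for the slices $\bar y=\bar b$ and $\bar y\neq\bar b$ therefore gives only identities about $d_S$ itself. You get no disjointness, no well-defined $\Delta_{\bar b}$, and no link to $p$.
\item For the same reason, your earlier claim that slicing ``also gives'' $d_S(\Delta)=\sum_{\bar b\mapsto\Delta}p(\bar b)$ is unsupported.
\item The nested product choice for several p-witness terms is also unsupported: that $d_S$ is determined by the joint values of $W\cup P$ does not make it their product.
\item Your rule $\textbf{choose}\;\bar y:\psi(\bar y)\;\textbf{with weight}\;p(\bar y)\;\textbf{do}\;r_S(\bar y)$ needs both facts, for \emph{every} $\bar b$ with $\psi(\bar b)$ and $p(\bar b)>0$.
\item In addition, $\chi:\equiv(\bar y=\bar b)$ is generally not available as a slicing condition. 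Slicing conditions are terms over $\Upsilon$. Critical values arise as components of comprehension terms under bound variables, not as denotations of ground terms.
\end{itemize}

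The paper avoids needing a total map $\bar b\mapsto\Delta_{\bar b}$ by arguing in the opposite direction. For each $\Delta\in\boldsymbol{\Delta}(S)$ it takes from the slicing clause a slice with $\boldsymbol{\Delta}^{\mathcal{A}'}(S)=\{\Delta\}$. From that slice it reads off a ($P$-critical) characterising tuple $\bar b$ with probability $d(\bar b)$. The \textbf{choose} then ranges not over $\psi$ but over the tuples satisfying the isolating term $t^{\bar b}_\chi$ of the parallel substep $\mathcal{A}^c$ (witness $P$), with weight $d(\bar b)$. The inner \textbf{forall} uses isolating terms for $\bar a\bar b$ with respect to the flattened witness $W^p$. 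In other words, the paper takes the link between choices, update sets and probabilities as the content of the slicing clause, rather than deriving it from conditionalisation. To complete your plan you would have to adopt that reading and work with types of $\bar b$ rather than equalities $\bar y=\bar b$.

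Separately, your collapse step misuses Postulate~\ref{p-bew}: states with the same equality pattern on $W\cup P$ need not have the same update sets. You need the paper's chain of lemmas. It shows that $r_S$ is correct on isomorphic copies and on states coinciding with $S$ on $W\cup P$. Hence, via a disjoint copy and an isomorphism aligning the witness values, $r_S$ is correct on all $\mathit{WP}$-similar states.
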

\vspace{-1em}
\begin{proof}[Proof sketch]
We fix a bounded exploration witness $W \cup P$ for $\mathcal{A}$, and without loss of generality we can assume that it is closed under subterms. 

Consider a single state $S$ with base set $B$, the set of update sets $\boldsymbol{\Delta}_{\mathcal{A}}(S)$, and the probability measure $d_S$ on it. For $\alpha \in W$ the evaluation in $S$ yields a multiset of multisets $U$, where $U$ contains tuples $(b_0, \dots, b_n)$ of values $b_i \in B$. Each such value $b_i$ is called a {\em $W$-critical value} in $S$, and a {\em $W$-critical tuple} is a tuple of $W$-critical values. Analogously, the evaluation of a p-witness term $\varrho \in P$ in $S$ yields a multiset of tuples $(b_1, \dots, b_m, p)$ of values $b_i \in B$ and $p \in [0,1]$. The values $b_i$ are called {\em $P$-critical}, and a tuple of such values is called a {\em $P$-critical tuple}.

First we can show that whenever $((f,(a_1, \dots, a_n)), a_0)$ is an update in some $\Delta \in \boldsymbol{\Delta}_{\mathcal{A}}(S)$, then $\bar{a} = (a_0, \dots, a_n)$ is a $W$-critical tuple.

Furthermore, the slicing condition in Postulate \ref{p-bew} ensures that there exists a set $W^\prime$ of witness terms such that for the corresponding probabilistic algorithm $\mathcal{A}^\prime$ we get $\boldsymbol{\Delta}_{\mathcal{A}^\prime}(S) = \{ \Delta \}$. If $\alpha = \langle U(y_1,\dots,y_q) \mid \psi(y_1,\dots,y_q) \rangle \in W$ is sliced by $\alpha^\prime = \langle U(y_1,\dots,y_q) \mid \psi(y_1,\dots,y_q) \wedge \chi(y_1,\dots,y_q) \rangle \in W^\prime$, then we get a $q$-tuple $\bar{b}_\alpha$, such that the concatenation $\bar{b}$ of all these tuples for $\alpha \in W$ such that $\chi$ is needed characterises the reduction to $\Delta$ as the only update set. We call $\bar{b}$ a {\em characterising tuple} for the update set $\Delta \in \boldsymbol{\Delta}_{\mathcal{A}}(S)$. As $\psi$ and $\chi$ have the same free variables, characterising tuples are $P$-critical. Furthermore, $\bar{b}$ is associated with a probability $d(\bar{b}) \in [0,1]$, which is the product of the probabilities $p$ resulting from the sliced witness terms.

We exploit the related behavioural theory of parallel algorithms \cite{ferrarotti:tcs2016}, which uses sets $W$ of multiset comprehension terms and critical tuples. Note that in the following lemmata, $\mathcal{A}$ denotes a \emph{parallel} algorithm in the sense of \cite{ferrarotti:tcs2016}, i.e.\ a deterministic algorithm with a state transition function; they are applied below to the substeps $\mathcal{A}^c$ and $\mathcal{A}^p$ of the probabilistic algorithm, not to the probabilistic algorithm itself.


\begin{lemma}[{\cite[Lemma~7.5]{ferrarotti:tcs2016}}]\label{lem-indistinguishable-update}
Let $\mathcal{A}$ be a parallel algorithm, $S$ a state of $\mathcal{A}$, and let $\bar{a} = (a_0, \ldots, a_r)$, and let $(f,(a_1, \ldots, a_r), a_0) \in \Delta_{\mathcal{A}}(S)$. For every $(r+1)$-tuple of critical values $\bar{b} = (b_0, \ldots, b_r)$ with $\mathit{tp}^{\mathrm{FO}_{wo=}}_{S|_{W}}(\bar{b}) = \mathit{tp}^{\mathrm{FO}_{wo=}}_{S|_{W}}(\bar{a})$ the update $(f, (b_1, \ldots, b_r), b_0)$ also belongs to $\Delta_{\mathcal{A}}(S)$.
\end{lemma}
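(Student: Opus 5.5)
The plan is to argue inside the framework of the simplified parallel ASM thesis of \cite{ferrarotti:tcs2016}. There $\mathcal{A}$ is deterministic, $W$ is a finite set of multiset comprehension terms closed under subterms, and $S|_W$ denotes the finite structure whose elements are the $W$-critical values. The relations of $S|_W$ are the graphs of (the instances of) the terms of $W$ restricted to critical values. The type $\mathit{tp}^{\mathrm{FO}_{wo=}}_{S|_{W}}(\bar{b})$ is the set of first-order formulae without equality satisfied by $\bar{b}$ in this structure.

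The key idea is to transport the update $(f,(a_1,\ldots,a_r),a_0)$ to $(f,(b_1,\ldots,b_r),b_0)$ by an auxiliary state $S^\ast$ that agrees with $S$ on all witness terms but interchanges the roles of $\bar a$ and $\bar b$. Postulate~\ref{p-bew} (parallel version) then forces $\Delta_{\mathcal{A}}(S^\ast)=\Delta_{\mathcal{A}}(S)$, and Postulate~\ref{p-state} transports updates along isomorphisms.

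Concretely, I will proceed in three steps.

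\textbf{Step 1.} I first show that every update in $\Delta_{\mathcal{A}}(S)$ consists of critical values. This is the standard argument: if some $a_i$ were not critical, one replaces it by a fresh element via an isomorphic copy $S'$ of $S$. Then $S'$ agrees with $S$ on $W$, yet $\Delta_{\mathcal{A}}(S')=\sigma(\Delta_{\mathcal{A}}(S))$ differs from $\Delta_{\mathcal{A}}(S)$, which is a contradiction. I take this as available.

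\textbf{Step 2.} Since $\bar a$ and $\bar b$ have the same equality-free type in $S|_W$, I construct a state $S^\ast$ over an enlarged base set. $S^\ast$ is built as a disjoint-union-like structure of two copies of $S$, glued along the non-critical part, or alternatively via an isomorphism $\sigma$ onto a state in which the elements of $\bar a$ are duplicated. It must satisfy two conditions:
\begin{itemize}[noitemsep, topsep=0pt, parsep=0pt, partopsep=0pt]
\item[(i)] every term of $W$ evaluates to the same multiset in $S^\ast$ as in $S$, modulo identification of the duplicate copies;
\item[(ii)] there is an isomorphism $\sigma$ from $S$ to a state coinciding with $S^\ast$ on $W$ that maps $a_i$ to $b_i$.
\end{itemize}
Equality-freeness of the type is exactly what makes condition~(i) hold. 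Multiplicities in the multiset comprehension terms are counted by formulae of the same type, so duplicated elements are indistinguishable to $W$.

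\textbf{Step 3.} Given (i) and (ii), bounded exploration yields $\Delta_{\mathcal{A}}(S^\ast)=\Delta_{\mathcal{A}}(S)$, and Postulate~\ref{p-state} yields $\sigma(\Delta_{\mathcal{A}}(S))=\Delta_{\mathcal{A}}(\sigma S)$. Combining these, $(f,(b_1,\ldots,b_r),b_0)$ lies in $\Delta_{\mathcal{A}}(S)$.

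The main obstacle is Step~2: building $S^\ast$ so that every multiset comprehension term, including the nested ones and the multiset operations $\bigoplus$, \textit{AsSet\/} and \textit{TheUnique\/}, evaluates identically. I would first try the construction from \cite{ferrarotti:tcs2016} directly. That construction proves, by induction on the structure of terms in $W$ (closed under subterms), that the value of each term at a tuple depends only on its equality-free type in $S|_W$.

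Since this statement is quoted verbatim as Lemma~7.5 of \cite{ferrarotti:tcs2016}, in the paper I would simply cite that proof. The parallel algorithms to which it is later applied, $\mathcal{A}^c$ and $\mathcal{A}^p$, satisfy the parallel postulates, so the lemma applies to them unchanged.
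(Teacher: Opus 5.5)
Your fallback matches the paper. The paper gives no argument of its own for this statement and simply imports Lemma~7.5 of \cite{ferrarotti:tcs2016}, so citing it is fine. Your Step~1 is also the standard argument, which the paper reproduces as Lemma~\ref{lem-w-critical}.

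The self-contained sketch in Steps~2--3, however, has a genuine gap and should not be presented as the idea behind the lemma. Bounded exploration applies only if $\text{val}_{S^\ast}(\alpha)=\text{val}_S(\alpha)$ holds literally for every $\alpha\in W$. Coincidence ``modulo identification of the duplicate copies'' in (i) therefore does not give $\Delta_{\mathcal{A}}(S^\ast)=\Delta_{\mathcal{A}}(S)$. Moreover, duplicating critical elements will in general change those values. Witness terms are multiset comprehensions, so they count, and they may use the background predicate $=$. It is the logic in which $\bar a$ and $\bar b$ are compared that is equality-free and unable to count, not the terms of $W$, so your justification of (i) runs the wrong way.

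If instead (i) holds literally, $S^\ast$ is redundant. Condition (ii) then says that $\sigma(S)$ coincides with $S$ on $W$, i.e.\ $\sigma(\text{val}_S(\alpha))=\text{val}_S(\alpha)$ for all $\alpha\in W$. Step~3 is then just $\sigma(\Delta_{\mathcal{A}}(S))=\Delta_{\mathcal{A}}(\sigma(S))=\Delta_{\mathcal{A}}(S)$. That mechanism is sound, but all the work lies in producing such a $\sigma$ with $\sigma(\bar a)=\bar b$ from the mere equality $\mathit{tp}^{\mathrm{FO}_{wo=}}_{S|_W}(\bar a)=\mathit{tp}^{\mathrm{FO}_{wo=}}_{S|_W}(\bar b)$.

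In a general finite structure, equal $\mathrm{FO}_{wo=}$-types do not fix the equality pattern of a tuple. A bijection with $\sigma(a_i)=b_i$ needs $a_i=a_j\Leftrightarrow b_i=b_j$. Nor do equal types detect multiplicities. Whatever makes the types fine enough in $S|_W$, for instance the auxiliary elements $j\cdot b_0\cdots b_n\cdot a_1\cdots a_r$ encoding the instantiations, would have to be exploited explicitly. That is the substance of the cited lemma, and your (ii) assumes it rather than proving it. As a proof, your proposal therefore rests entirely on the citation, which is also all the paper provides.
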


Here $S|_{W}$ denotes a relational structure defined by the bounded exploration witness $W$ of $\mathcal{A}$, and $\mathit{tp}^{\mathrm{FO}_{wo=}}_{S|_{W}}(\bar{a})$ denotes the {\em type} of the tuple $\bar{a}$ in first-order logic without equality with respect to the structure $S|_{W}$. In a nutshell, for every critical tuple $\bar{a}$ that defines an update in some update set, any other critical tuple $\bar{b}$ with the same type also defines an update (for the same function symbol $f$) in the very same update set.

\begin{lemma}[{\cite[Lemma~7.6]{ferrarotti:tcs2016}}]\label{lem-isolating-formula}
For  every relational vocabulary $\Sigma$ with no constants, every finite structure $S$ of schema $\Sigma$, every $r \geq 0$ and every $r$-tuple $\bar{a}$ over $S$ there is a formula $\chi \in \mathit{tp}_S^{\mathrm{FO}_{wo=}}(\bar{a})$  such that for any finite relational structure $S^\prime$ of schema $\Sigma$ and every $r$-tuple $\bar{b}$ over $S^\prime$ we have $S^\prime \models \chi[\bar{b}]$ iff $\mathit{tp}_S^{\mathrm{FO}_{wo=}}(\bar{a}) = \mathit{tp}_{S^\prime}^{\mathrm{FO}_{wo=}}(\bar{b})$.
\end{lemma}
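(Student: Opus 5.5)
The plan is to reduce the problem to a finite ``reduced'' structure in which equality is, in effect, definable. The tool is the notion of a \emph{strict surjective homomorphism}. For $\Sigma$-structures $T,T^\prime$, this is a surjection $h$ from the domain of $T$ onto that of $T^\prime$ such that $T \models R(\bar{c})$ iff $T^\prime \models R(h(\bar{c}))$ for every $R \in \Sigma$ and every tuple $\bar{c}$.

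\emph{Step 1 (preservation).} I first show that $T \models \theta[\bar{c}]$ iff $T^\prime \models \theta[h(\bar{c})]$ for every $\theta \in \mathrm{FO}_{wo=}$, by induction on $\theta$. Atoms hold by strictness, and Boolean connectives are immediate. For $\exists y\,\theta$, the direction from $T$ to $T^\prime$ is direct. The converse direction uses surjectivity of $h$ to pull a witness in $T^\prime$ back to $T$. Since $\Sigma$ has no constants and no equality is available, atoms are exactly the relational atoms, so nothing else needs to be checked. Consequently $\mathit{tp}^{\mathrm{FO}_{wo=}}_T(\bar{c}) = \mathit{tp}^{\mathrm{FO}_{wo=}}_{T^\prime}(h(\bar{c}))$.

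\emph{Step 2 (reduced quotient).} On the finite structure $S$, let $c \sim c^\prime$ mean that $c$ and $c^\prime$ have the same $\mathrm{FO}_{wo=}$-type. I claim $\sim$ is a congruence for all atoms: if $c \sim c^\prime$, then swapping $c$ for $c^\prime$ in any argument position preserves every atom. This holds because an atom with parameters $\bar{d}$ and a free variable in place of $c$ gives a formula, and the parameters can be quantified out in the usual way. Hence the quotient $S^\ast = S/{\sim}$ is well defined, and the projection $\pi: S \to S^\ast$ is strict surjective. Let $e_1,\dots,e_k$ be the elements of $S^\ast$, and fix representatives $c_1,\dots,c_k \in S$ with $\pi(c_i) = e_i$.

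\emph{Step 3 (the isolating formula).} Let $\mathrm{D}(z_1,\dots,z_k)$ be the conjunction of all atoms and negated atoms over $z_1,\dots,z_k$ that describe $S^\ast$ on $e_1,\dots,e_k$. Let $m$ be the maximal arity in $\Sigma$. Let $\mathrm{Eq}(x,z)$ be the formula $\forall \bar{y}\,\bigwedge_R \bigwedge_{\text{positions}} \bigl(R(\dots x \dots) \leftrightarrow R(\dots z \dots)\bigr)$, where in each instance $x$ and $z$ occupy one fixed position and the remaining arguments are drawn from $\bar{y}$, with $\bar{y}$ of length at most $m$. If $\bar{a}=(a_1,\dots,a_r)$ and $\pi(a_j) = e_{i_j}$, set
\[\chi(x_1,\dots,x_r) \;=\; \exists z_1\dots z_k\,\Bigl(\mathrm{D}(\bar{z}) \wedge \forall x \bigvee_{i} \mathrm{Eq}(x,z_i) \wedge \bigwedge_{j} \mathrm{Eq}(x_j,z_{i_j})\Bigr).\]
Then $S \models \chi[\bar{a}]$ with witnesses $c_i$, by the congruence property of $\sim$, so $\chi \in \mathit{tp}^{\mathrm{FO}_{wo=}}_S(\bar{a})$. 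For the converse, suppose $S^\prime \models \chi[\bar{b}]$ with witnesses $d_1,\dots,d_k$. Define $g: S^\prime \to S^\ast$ by sending each $x$ to some $e_i$ with $\mathrm{Eq}(x,d_i)$, choosing arbitrarily if several apply, and choosing $i_j$ for $b_j$. Swapping arguments one at a time, each time using $\mathrm{Eq}$ with the other arguments arbitrary, shows $R(\bar{x}) \leftrightarrow R(\bar{d}_{g(\bar{x})})$. Then $\mathrm{D}$ transfers this to $S^\ast$. Hence $g$ is strict, and it is surjective because every $d_i$ maps to $e_i$. By Step 1, $\mathit{tp}_{S^\prime}(\bar{b}) = \mathit{tp}_{S^\ast}(g(\bar{b})) = \mathit{tp}_{S^\ast}(\pi(\bar{a})) = \mathit{tp}_S(\bar{a})$. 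The reverse implication of the lemma is trivial, since $\chi$ belongs to the type of $\bar{a}$.

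The main obstacle I expect is Step 3, specifically ensuring that $g$ is genuinely strict even though $\mathrm{Eq}$ only controls one argument position at a time. This is why $\mathrm{Eq}$ must quantify over \emph{arbitrary} other arguments $\bar{y}$, rather than only over arguments among the $z_i$; the one-at-a-time replacement then goes through. Finiteness of $S$ is what makes $S^\ast$ finite and hence $\chi$ a finite formula. Finiteness of $S^\prime$ is in fact not needed.
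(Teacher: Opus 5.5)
\paragraph{A false congruence claim in Step~2.} The paper does not prove this lemma; it only quotes it from \cite{ferrarotti:tcs2016}, so your argument has to stand on its own. Step~1 and the overall strategy are sound. Step~2, however, rests on a false claim: equality of parameter-free $\mathrm{FO}_{wo=}$-types is in general \emph{not} a congruence.

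Take $S=\{c,c^\prime\}$ with a single binary relation $E^S=\{(c,c),(c^\prime,c^\prime)\}$. Swapping $c$ and $c^\prime$ is an automorphism, so $c\sim c^\prime$. Yet $E(c,c)$ holds while $E(c,c^\prime)$ fails. The parameter cannot be ``quantified out'': here it is $c$ itself, and $\exists y\,E(y,x)$ holds of both elements. So $S/{\sim}$ is not a well-defined $\Sigma$-structure. Your $\chi$ is then not even true of $\bar a$ in $S$, whatever $\mathrm{D}$ is taken to be: with a single witness for $z_1$, the conjunct $\forall x\,\mathrm{Eq}(x,z_1)$ fails for both choices $c$ and $c^\prime$ (put $y=c$ in the other position of $E$). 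Hence $\chi\in\mathit{tp}^{\mathrm{FO}_{wo=}}_S(\bar a)$ is not established.

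\paragraph{How to repair it.} The fix is already implicit in your Step~3. Let $c\approx c^\prime$ iff $S\models\mathrm{Eq}[c,c^\prime]$, i.e.\ $c$ and $c^\prime$ are interchangeable in every atom with arbitrary other arguments. This is the Leibniz congruence, which refines $\sim$. It is an equivalence relation in every structure and, by one-at-a-time replacement, a congruence. So $\pi:S\to S^\ast=S/{\approx}$ is strict and surjective, and $S^\ast$ is \emph{reduced}: distinct $e_i\neq e_{i^\prime}$ are separated by some atom whose other arguments lie among $e_1,\dots,e_k$.

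You need this reducedness in Step~3, where you currently ``choose arbitrarily''. Since $\mathrm{D}$ contains the separating literals, $S^\prime\models\chi[\bar b]$ forces $d_1,\dots,d_k$ to be pairwise $\mathrm{Eq}$-inequivalent. Hence each element of $S^\prime$ is $\mathrm{Eq}$-related to exactly one $d_i$. Only this uniqueness makes your prescriptions $g(b_j)=e_{i_j}$ and $g(d_i)=e_i$ consistent with $g$ being a function; otherwise $b_j=b_{j^\prime}$ with $i_j\neq i_{j^\prime}$, or $d_i=d_{i^\prime}$, would be possible. It is also what gives surjectivity.

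With $\approx$ in place of $\sim$ and this uniqueness argument added, the rest of your proof goes through, also for infinite $S^\prime$ as you note. You also tacitly use that $\Sigma$ is finite, so that $\mathrm{D}$, $\mathrm{Eq}$ and $m$ are finite. The statement fails without this, but the vocabularies $\Sigma_W$ used in the paper are finite.
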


The formula $\chi$ in Lemma \ref{lem-isolating-formula} is called an {\em isolating formula} of the type $\mathit{tp}_S^{\mathrm{FO}_{wo=}}(\bar{a})$. Given a formula $\chi$ which isolates the $\mathrm{FO}_{wo=}$-type of a critical tuple $\bar{a}$ in a critical structure $S|_W$, we can write an equivalent term $t_\chi$ which evaluates to true in $S$ only for those tuples which have the same $\mathrm{FO}_{wo=}$-type as $\bar{a}$ in $S|_W$.

\begin{lemma}[{\cite[Lemma~7.7]{ferrarotti:tcs2016}}]\label{lem-isolating-term}
Let $S$ be a state of a parallel algorithm $\mathcal{A}$ of signature $\Sigma$ with a bounded exploration witness $W$ and let $\bar{a}$ be an $r$-tuple in $(S|_W)^r$ and $\chi$ an isolating formula for the $\mathrm{FO}_{wo=}$-type of $\bar{a}$ in $S|_W$. Then there is a term $t_\chi$ over $\Sigma$ such that for every $\bar{b} \in (S|_W)^r$ we have $\mathrm{val}_{S,\mu[\bar{x}\mapsto\bar{b}]}(t_\chi) = \texttt{true}$ iff $S|_W \models \chi[\bar{b}]$.
\end{lemma}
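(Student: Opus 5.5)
The plan is a structural induction on the isolating formula $\chi$. It turns each $\mathrm{FO}_{wo=}$ formula over the vocabulary of $S|_W$ into a Boolean term over $\Sigma$ built from $W$ and the background of Postulate~\ref{p-background}. The induction hypothesis is that for every subformula $\theta(\bar{x})$ of $\chi$ we have a term $t_\theta(\bar{x})$ with $\mathrm{val}_{S,\mu[\bar{x}\mapsto\bar{b}]}(t_\theta) = \texttt{true}$ iff $S|_W \models \theta[\bar{b}]$, for every $\bar{b}$ ranging over critical values. I will use the construction of $S|_W$ from \cite{ferrarotti:tcs2016}: its domain is the set of $W$-critical values, and it has one relation symbol $R_\alpha$ for each $\alpha \in W$, interpreted as the tuples occurring in $\mathrm{val}_S(\alpha)$. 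Because the vocabulary has no constants and no equality, the only atoms are $R_\alpha(\bar{x})$.

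\emph{Atoms.} Let $\alpha = \langle (t_0,\dots,t_n) \mid \varphi(\bar{u}) \rangle$, with the double-comprehension case handled by nesting. A tuple $\bar{b}$ lies in $\mathrm{val}_S(\alpha)$ iff some assignment to $\bar{u}$ satisfies $\varphi \wedge \bigwedge_i t_i = x_i$. I express this existential as the Boolean term
\[
\neg\bigl(\langle \mathbf{true} \mid \bar{u} : \varphi(\bar{u}) \wedge \textstyle\bigwedge_i t_i(\bar{u}) = x_i \rangle = \langle\rangle\bigr).
\]
This uses only comprehension, equality and the empty multiset, all of which are available by Postulate~\ref{p-background}. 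The multiset is finite because the original comprehension $\alpha$ is finite, and its condition only refines that of $\alpha$.

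\emph{Connectives and quantifiers.} Boolean connectives translate directly into Boolean operations on the corresponding terms. For $\exists y\,\theta$ the quantifier ranges over the domain of $S|_W$, that is, over critical values only. I first define a term $\mathit{crit}(y)$ stating that $y$ is critical: a finite disjunction, over $\alpha \in W$ and argument positions $i$, of atom-like terms saying that some tuple of $\alpha$ has $y$ in position $i$. I then translate
\[
\exists y\,\theta \quad\text{as}\quad \neg\bigl(\langle \mathbf{true} \mid y : \mathit{crit}(y) \wedge t_\theta(\bar{x},y)\rangle = \langle\rangle\bigr),
\]
and treat $\forall$ dually. Relativising to $\mathit{crit}$ is exactly what makes the evaluation in $S$ agree with satisfaction in $S|_W$. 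It also guarantees that the comprehension is over a finite set, since there are finitely many critical values.

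The main obstacle is ensuring that every comprehension introduced is finite and well-defined in the ASM sense (the meta-finiteness requirement), and that the atom translation faithfully captures membership in the structure $S|_W$ as defined in \cite{ferrarotti:tcs2016}, including its handling of multiplicities and nested comprehensions. I would first settle the exact definition of $S|_W$ and reduce the nested witness terms to flat tuple-membership tests. The rest of the induction is then routine, and setting $t_\chi$ to be the translation of $\chi$ itself gives the lemma.
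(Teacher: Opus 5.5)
Your strategy is the right one: translate $\chi$ clause by clause into a Boolean term, rendering atoms and quantifiers as non-emptiness tests of finite multiset comprehensions. The paper itself does not reprove the lemma; it imports it from \cite{ferrarotti:tcs2016}. The gap is that your clauses are written for a simplified structure, not for the $S|_W$ the lemma refers to.

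As defined in the appendix, for $\alpha_i = \langle (t_0,\dots,t_{n_i}) \mid \varphi_i(x_1,\dots,x_{r_i})\rangle$ the symbol $R_{\alpha_i}$ has arity $n_i+2$. Every tuple carries an extra last coordinate $j\cdot b_0\cdots b_{n_i}\cdot a_1\cdots a_{r_i}$, one for each witnessing assignment $\bar a$. The domain of $S|_W$ consists of \emph{all} elements occurring in such tuples, so it contains these witness elements and not only the $W$-critical values. Your atom clause only tests $(x_0,\dots,x_n)\in\mathrm{val}_S(\alpha)$, so it does not cover the actual atoms $R_\alpha(x_0,\dots,x_n,z)$. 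Relativising quantifiers to $\mathit{crit}$ also makes them range over the wrong set.

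This is not cosmetic. An isolating formula may contain a subformula such as $\exists z\,(R_\alpha(\bar x,z)\wedge R_\beta(\bar y,z))$, whose truth depends on which witness elements are shared. If the witness element encodes $\bar b$, as the notation suggests, then for $\beta=\alpha$ it even expresses $\bar x=\bar y$, which $\mathrm{FO}_{wo=}$ otherwise cannot. A translation that ignores the last coordinate, or lets $z$ range only over critical values, evaluates such subformulas incorrectly.

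To close the gap you must pin down the witness element as something a term can recognise. The appendix only says it is ``a value in $B$'' in one-one correspondence with the witnessing assignments, and for an arbitrary such choice no term could in general test the last coordinate. Since $B$ is closed under pairing, the natural choice is the tuple $(j,b_0,\dots,b_n,a_1,\dots,a_r)$, with $j$ a fixed closed tag term. Then $R_\alpha(x_0,\dots,x_n,z)$ translates to
\[
\neg\bigl(\langle \mathbf{true} \mid \bar u : \varphi(\bar u) \wedge \textstyle\bigwedge_k t_k(\bar u) = x_k \wedge z = (j,x_0,\dots,x_n,\bar u) \rangle = \langle\rangle\bigr).
\]
The predicate used to relativise quantifiers must be $\mathit{crit}(y)$ extended, for each $\alpha\in W$, by the disjunct $\neg\bigl(\langle \mathbf{true} \mid \bar u : \varphi(\bar u) \wedge y = (j,t_0(\bar u),\dots,t_n(\bar u),\bar u)\rangle = \langle\rangle\bigr)$. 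Both comprehensions are finite for the same reason as before. With these two changes the rest of your induction goes through unchanged.

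You correctly flagged this as the main obstacle, but it is exactly the non-routine part of the proof and you left it open. Separately, the nested-comprehension case you mention does not arise: the lemma concerns parallel algorithms, whose witness terms are flat.
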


The term $t_\chi$ in Lemma \ref{lem-isolating-term} is called an {\em isolating term} of the type $\mathit{tp}_S^{\mathrm{FO}_{wo=}}(\bar{a})$. With isolating terms we easily obtain a rule $r_S$ with $\Delta_{r_S}(S) = \Delta_{\mathcal{A}}(S)$. The rule $r_S$ is a bounded parallel combination of parallel rules of the form
\[ \textbf{forall } x_0, x_1, \ldots, x_r : t^{\bar{a}}_{\chi}(x_0, x_1, \ldots, x_r) \textbf{ do } f(x_1, \ldots, x_r) := x_0 \]
for all $\bar{a} = (a_0, a_1, \ldots, a_r) \in (S|_{W})^{r+1}$ with $(f, (a_1, \ldots, a_r), a_0) \in \Delta_{\mathcal{A}}(S)$, where the term $t^{\bar{a}}_{\chi}(x_0, x_1, \ldots, x_r)$ is an isolating term of the critical tuple $\bar{a}$.

\begin{lemma}[{\cite[Cor.~7.8]{ferrarotti:tcs2016}}]\label{lem-rule}
If $S$ is a state of the parallel algorithm $\mathcal{A}$ and $W$ is a bounded exploration witness for $\mathcal{A}$, then $\Delta_{r_S}(S) = \Delta_{\mathcal{A}}(S)$.
\end{lemma}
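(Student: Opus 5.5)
The statement is Lemma~\ref{lem-rule}, i.e.\ \cite[Cor.~7.8]{ferrarotti:tcs2016}: if $r_S$ is the bounded parallel combination of the rules $\textbf{forall } \bar{x} : t^{\bar{a}}_{\chi}(\bar{x}) \textbf{ do } f(x_1,\dots,x_r) := x_0$, one for each update $(f,(a_1,\dots,a_r),a_0) \in \Delta_{\mathcal{A}}(S)$, then $\Delta_{r_S}(S) = \Delta_{\mathcal{A}}(S)$. I will prove the two inclusions separately, using Lemmas~\ref{lem-indistinguishable-update}, \ref{lem-isolating-formula} and~\ref{lem-isolating-term}. Before that, I record a preliminary fact from the parallel thesis (the analogue of the first claim in the proof sketch above): every value occurring in an update of $\Delta_{\mathcal{A}}(S)$ is $W$-critical. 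Hence every such $\bar{a}$ lies in $(S|_W)^{r+1}$, and $r_S$ is well defined.

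First I show that $r_S$ is a finite rule. $S|_W$ is finite, because the witness terms evaluate to finite multisets. So there are only finitely many critical tuples and finitely many $\mathrm{FO}_{wo=}$-types of them. Updates whose tuples have the same type for the same $f$ can share one parallel rule. Therefore $r_S$ is a bounded parallel combination, and its update set is the union of the update sets of its components.

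For $\Delta_{\mathcal{A}}(S) \subseteq \Delta_{r_S}(S)$, take an update $(f,(a_1,\dots,a_r),a_0) \in \Delta_{\mathcal{A}}(S)$. By Lemma~\ref{lem-isolating-term}, $\text{val}_{S,[\bar{x}\mapsto\bar{a}]}(t^{\bar{a}}_\chi) = \textbf{true}$, since $\chi$ belongs to the type of $\bar{a}$ and $S|_W \models \chi[\bar{a}]$. So the parallel rule for $\bar{a}$ produces exactly this update.

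For the converse inclusion, let $\bar{b}$ satisfy $t^{\bar{a}}_\chi$ in $S$. Here I expect the main obstacle: Lemma~\ref{lem-isolating-term} only characterises $\bar{b}$ ranging over $(S|_W)^{r+1}$, while the \textbf{forall} ranges over all of $B$. I will try to close this gap by building $t_\chi$ so that it contains an explicit conjunct "$x_i$ occurs as a component in the value of some $\alpha \in W$". This is a term-expressible membership test using the multiset background, and it forces every satisfying $\bar{b}$ to be critical; it also keeps the \textbf{forall} range finite. With that in place, Lemma~\ref{lem-isolating-formula} gives $\mathit{tp}(\bar{b}) = \mathit{tp}(\bar{a})$. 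Lemma~\ref{lem-indistinguishable-update} then yields $(f,(b_1,\dots,b_r),b_0) \in \Delta_{\mathcal{A}}(S)$.

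Both inclusions together give $\Delta_{r_S}(S) = \Delta_{\mathcal{A}}(S)$. Consistency needs no separate argument: the two sets of updates coincide, and $\Delta_{\mathcal{A}}(S)$ is consistent.
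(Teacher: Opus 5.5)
Your proof is correct and matches the standard argument behind \cite[Cor.~7.8]{ferrarotti:tcs2016}, which the paper only cites without proving. One inclusion holds because each update tuple $\bar a$ satisfies its own isolating term. The other follows from the isolating term together with Lemma~\ref{lem-isolating-formula} and Lemma~\ref{lem-indistinguishable-update}. Your explicit criticality conjunct in $t_\chi$ is a sound way to make precise what the paper only says informally, namely that $t_\chi$ is true \emph{only} on tuples of the same type: Lemma~\ref{lem-isolating-term} as stated constrains $t_\chi$ only on $(S|_W)^r$, whereas the \textbf{forall} ranges over all of $B$.
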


These lemmata \ref{lem-indistinguishable-update}--\ref{lem-rule} only depend on the bounded exploration witness $W$, i.e. a set of multiset comprehension terms, and a critical tuple $\bar{a}$ for it. This generality is what we need to exploit for probabilistic algorithms.

If $S$ is a state of a probabilistic algorithm $\mathcal{A}$ and $W \cup P$ is
a bounded exploration witness for $\mathcal{A}$, then for any $\Delta \in
\boldsymbol{\Delta}_{\mathcal{A}}(S)$ we obtain a characterising tuple $\bar{b}$
with associated value $d(\bar{b})$, which is a $P$-critical tuple. We can think
of the step of the algorithm in state $S$ as being composed of two substeps. In
the first substep the different characterising tuples for the possible successor
states are determined in parallel and stored in some temporary locations; this
corresponds to a parallel algorithm $\mathcal{A}^c$ with bounded exploration
witness $P$, such that the characterising tuples $\bar{b}$ are $P$-critical
tuples. In the second substep we can use modified locations: if $\bar{a}$ is a critical tuple with respect to $W$ that defines an update with function symbol $f$ in $\Delta \in \boldsymbol{\Delta}_{\mathcal{A}}(S)$ and $\bar{b}$ characterises $\Delta$, then we use a new function symbol $\tilde{f}$ such that $\bar{b}$ is included in the locations used in $\Delta$. This second step (for fixed $\bar{b}$) corresponds to a parallel algorithm $\mathcal{A}^p$ with bounded exploration witness
\begin{align*}
W^p &= \Big\{ \Big\langle (t_1, \dots, t_n, y_1, \dots, y_p) \mid \varphi(x_1, \dots, x_k) \wedge \psi(y_1, \dots, y_q) \Big \rangle \mid \\
&\qquad\qquad \langle \langle (t_1, \dots, t_n) \mid \varphi(x_1, \dots, x_k) \rangle \mid \psi(y_1, \dots, y_q) \rangle \in W \Big\} 
\end{align*}
and the concatenated tuple $\bar{a}\bar{b}$ is $W^p$-critical. 
Applying Lemma \ref{lem-isolating-term} to $\mathcal{A}^c$ and $\mathcal{A}^p$, respectively, gives isolating terms $t_\chi^{\bar{b}}$ and $t_\chi^{\bar{a}\bar{b}}$. With these we define a rule
\begin{align*}
\textbf{choose}\; & y_1, \dots, y_p : t_\chi^{\bar{b}}(y_1, \dots,y_q, p(y_1, \dots,y_q)) \;\textbf{with weight}\; d(\bar{b}) \\
\textbf{do}\; & \textbf{forall}\; x_0, \dots, x_k : t_\chi^{\bar{a}\bar{b}}(x_1,\dots,x_k, y_1, \dots,y_q) \;\textbf{do}\; f(x_1,\dots,x_k) := x_0
\end{align*}
Analogously to Lemma \ref{lem-rule} we then show that the parallel composition of these rules for all characterising tuples $\bar{b}$ and all update-defining tuples in the corresponding update sets defines a rule $r_S$ with $\boldsymbol{\Delta}_{r_S}(S) = \boldsymbol{\Delta}_{\mathcal{A}}(S)$ and $d_{r_S,S} = d_S^{\mathcal{A}}$.

Now we have pASM rules $r_S$ for every state $S$ of the given PA $\mathcal{A}$, which yield the set of update sets and the associated probability measures in that state. However, there are infinitely many states $S \in \mathcal{S}$ and we need a single pASM rule to obtain a behaviourally equivalent pASM.

Therefore, we exploit again the bounded exploration witness to obtain an equivalence relation on $\mathcal{S}$ with only finitely many equivalence classes such that only the rules for representatives of each equivalence class are needed. For a state $S$ define an equivalence relation $\sim_S$ on the bounded exploration witness $W \cup P$ by $\alpha \sim_S \alpha^\prime$ iff $\text{val}_S(\alpha) = \text{val}_S(\alpha^\prime)$. Then call two states $S$, $S^\prime$ {\em $\mathit{WP}$-similar} iff $\sim_S = \sim_{S^\prime}$ holds. As $\sim_S$ defines a partition of $W \cup P$, there can only be finitely many different equivalence relations, and consequently the number of $\mathit{WP}$-similarity classes is also finite. Then show the following lemma.

\begin{lemma}[$\mathit{WP}$-similarity]\label{lem-wp-similarity}
If $\mathcal{A}$ is a probabilistic algorithm with bounded exploration witness $W \cup P$ and $S$, $S^\prime$ are $\mathit{WP}$-similar states, then $\boldsymbol{\Delta}_{r_S}(S^\prime) = \boldsymbol{\Delta}_{\mathcal{A}}(S^\prime)$ and $d_{r_S,S^\prime} = d_{S^\prime}^{\mathcal{A}}$.
\end{lemma}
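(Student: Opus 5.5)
The plan follows the pattern of the corresponding similarity lemma in the simplified parallel ASM thesis~\cite{ferrarotti:tcs2016}. The idea is to compare $S'$ with an auxiliary state $S''$ that is isomorphic to $S'$ and coincides with $S$ on $W \cup P$. Postulate~\ref{p-bew} then transfers the update sets and probabilities from $S$ to $S''$. Postulate~\ref{p-state} and isomorphism invariance of pASM rules transfer them back to $S'$.

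First, rename elements: choose a structure $S''$ with $S'' \cong S'$ via some bijection $\zeta$, chosen so that $S''$ and $S$ share their base set on all $W$- and $P$-critical values. By Postulate~\ref{p-state}, $S''$ is a state of $\mathcal{A}$. Second, I would use $\mathit{WP}$-similarity ($\sim_S = \sim_{S'} = \sim_{S''}$) to argue that $\text{val}_{S''}(\alpha) = \text{val}_S(\alpha)$ for all $\alpha \in W$ and $\text{val}_{S''}(\varrho) = \text{val}_S(\varrho)$ for all $\varrho \in P$. For this step I would adapt the argument of~\cite{ferrarotti:tcs2016}: extend the witness set by finitely many equality comparisons between critical subterms, which is harmless since the witness is closed under subterms, and use the background to fix the identities of multisets and probability values. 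Probability values in $[0,1]$ are background elements fixed by every isomorphism, so the $p$-components of the $P$-evaluations must match literally. If needed, I would strengthen the similarity notion so that equality also covers these numeric components.

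Third, with $S''$ and $S$ coinciding on $W \cup P$, Postulate~\ref{p-bew} gives $\boldsymbol{\Delta}_{\mathcal{A}}(S'') = \boldsymbol{\Delta}_{\mathcal{A}}(S)$ and $d_{S''} = d_S$. The construction of $r_S$ uses only isolating terms $t_\chi^{\bar b}$ and $t_\chi^{\bar a\bar b}$, which are built from terms of $W \cup P$ and background operations, and weights $d(\bar b)$ read off from the $p$-components. Hence $\boldsymbol{\Delta}_{r_S}(S'') = \boldsymbol{\Delta}_{r_S}(S)$ and $d_{r_S,S''} = d_{r_S,S}$, which by the preceding construction equal $\boldsymbol{\Delta}_{\mathcal{A}}(S)$ and $d^{\mathcal{A}}_S$. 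Finally, pASM rule semantics commute with isomorphisms: this is a structural induction over the rule forms, and the probabilistic choice case is fine because the weights are background values fixed by $\zeta$. The update sets and the probability measure of $\mathcal{A}$ commute with $\zeta$ by Postulate~\ref{p-state}. Pulling everything back along $\zeta^{-1}$ yields the claim for $S'$.

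I expect the main obstacle to be the second step. It requires showing that equal partitions $\sim$ really force equal (or isomorphic) evaluations of the isolating terms, and in particular that the probabilistic components agree, not just the combinatorial critical structure. If this fails directly, I would first try refining $\mathit{WP}$-similarity by adding to $W \cup P$ the finitely many isolating terms used in the $r_S$. Since only finitely many representatives are needed, this keeps the number of classes finite, and it makes invariance of $r_S$ immediate.
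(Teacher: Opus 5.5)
Your outline is the paper's proof. The paper also passes to an isomorphic copy $\zeta(S')$ with $\zeta(\text{val}_{S'}(\alpha))=\text{val}_S(\alpha)$ for all $\alpha\in W\cup P$. It then applies Lemma~\ref{lem-pa-coincidence} to that copy and transfers back with Lemma~\ref{lem-pa-isomorphism}.

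You have correctly located the weak point, and the paper does not resolve it either: it only asserts that $\zeta$ exists ``because $S$ and $S'$ are disjoint and $\mathit{WP}$-similar''. The step genuinely fails. Equality of $\sim_S$ and $\sim_{S'}$ only records which members of $W\cup P$ have equal values, not what those values look like. Isomorphisms commute with pairing and multiset formation, so they preserve the number and arrangement of entries of multisets, and, as you observe, they fix every probability value.

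\begin{itemize}
\item Suppose some $\varrho\in P$ evaluates to $\langle(0,\tfrac12),(1,\tfrac12)\rangle$ in $S$ and to $\langle(0,\tfrac13),(1,\tfrac23)\rangle$ in $S'$, while all other witness values agree. Then $\sim_S=\sim_{S'}$, yet no $\zeta$ maps $\text{val}_{S'}(\varrho)$ onto $\text{val}_S(\varrho)$.
\item The same happens if some $\alpha\in W$ has a two-entry value in $S$ and a three-entry value in $S'$.
\item Even the reduction to disjoint base sets is unavailable. Postulate~\ref{p-background} puts $[0,1]$, the truth values, \textit{undef\/} and all finite multisets and pairs over them into every base set.
\end{itemize}

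Your proposed repairs do not close the gap.
\begin{itemize}
\item Extra equalities between closed subterms add nothing beyond $\sim_S$ itself. The sizes of comprehension values and the numeric $p$-components are not recorded by finitely many such equalities.
\item Demanding literal agreement of the $p$-components changes the hypothesis, so you would be proving a different lemma. It also creates infinitely many classes, one per attainable probability value. That defeats the lemma's only purpose in the capture proof, namely that finitely many representatives $S_1,\dots,S_r$ suffice.
\item Adding the isolating terms of the rules $r_S$ to the witness is circular. Which $r_{S_i}$ occur depends on the classes of the refined relation, and adding them for all $S$ gives an infinite set.
\end{itemize}

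What is actually needed is a similarity notion with finitely many classes that still guarantees the isolating terms built at $S$ select the right tuples at $S'$. In addition, the weights in $r_S$ must be computed in the current state from the p-witness terms, e.g.\ $p(\bar y)$, rather than read off at $S$. A weight fixed at $S$ would give $(\tfrac12,\tfrac12)$ at $S'$ in the example above, whereas $\mathcal{A}$ gives $(\tfrac13,\tfrac23)$.
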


The proof of this lemma is done in the same way as the proofs of corresponding lemmata in the sequential and parallel ASM theses.

Finally, as there are only finitely many $\mathit{WP}$-similarity classes, we choose representatives $S_1, \dots, S_r$ for these. Then for $1 \le i \le r$ there exists a condition $\varphi_i$ satisfied in $S_i$, but not in the other states $S_j$, so we can define a pASM rule $r$ as a bounded parallel composition of conditional rules \textbf{if} $\varphi_i$ \textbf{then} $r_{S_i}$. It is then easy to show that for this rule we get $\boldsymbol{\Delta}_r(S) = \boldsymbol{\Delta}_{\mathcal{A}}(S)$ and $d_{r,S} = d_S^{\mathcal{A}}$  for all states $S$, which completes the proof.
\end{proof}

\vspace*{-0.5cm}
\section{Concluding Remarks}\label{sec:schluss}

In this paper we motivated and defined four postulates that define the class of probabilistic algorithms (PAs), for which choices according to some probability distribution are essential. We then showed that this class of algorithms is captured by probabilistic Abstract State Machines (pASMs), which constitutes a behavioural theory of PAs.

However, probabilities in our theory are only associated with choices, whereas there is no randomness in the states of a PA. Therefore, a possible extension is to consider also randomised states. Furthermore, we assumed that the set of possible successor states is always finite, so it is a natural question, if this restriction can be removed. 




\bibliography{pasm}

\newpage

\appendix

\section{Probabilistic Algorithms Defined by pASMs}\label{app:pasm}

As stated in Section \ref{sec:pasm} this appendix contains the full proof of Theorem \ref{thm-pasm}.

\begin{theorem}[Plausibility]
Every pASM $M$ satisfies the defining postulates for probabilistic algorithms and therefore defines a PA.
\end{theorem}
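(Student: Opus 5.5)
We verify the four postulates in turn for a pASM $M$ with signature $\Upsilon$ and closed rule $r$; the first three are routine and the real work is Postulate~\ref{p-bew}.

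\emph{Postulate~\ref{p-rbt}.} We show by structural induction on rules that, for every state $S$ and assignment $\sigma$, $\boldsymbol{\Delta}_{r,\sigma}(S)$ is finite and $d_{r,S,\sigma}$ is a probability measure on it. Skip and assignment are immediate, and conditionals inherit the property. For the parallel rule, $B_{S,\sigma}$ is finite, so $\Omega_{r,\sigma}(S)$ is a finite product. The product of the local measures is a probability measure on $\Omega_{r,\sigma}(S)$, and $d_{r,S,\sigma}$ is its push-forward under $(\Delta_{\bar a})\mapsto\bigcup_{\bar a}\Delta_{\bar a}$. For the choice rule, the normalised weights $\text{val}(p)/\Sigma_{S,\sigma}$ form a distribution on the finite set $B_{S,\sigma}$, and $d_{r,S,\sigma}$ is the resulting mixture of probability measures. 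The degenerate case $\Sigma_{S,\sigma}=0$ is handled by the skip clause. Pushing $d_{r,S}$ forward along $\Delta\mapsto S+\Delta$ gives $d_S$ on $\tau_M(S)$. Since several update sets may yield the same successor state, we sum over them.

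\emph{Postulates~\ref{p-state} and~\ref{p-background}.} States are $\Upsilon$-structures by definition, $\mathcal{S}$ and $\mathcal{I}$ are isomorphism-closed, and $S+\Delta$ has the same base set as $S$. For isomorphism invariance we prove by induction on rules that, for an isomorphism $\zeta$, $\boldsymbol{\Delta}_{r,\zeta\circ\sigma}(\zeta S)=\zeta(\boldsymbol{\Delta}_{r,\sigma}(S))$ and $d_{r,\zeta S,\zeta\circ\sigma}(\zeta\Delta)=d_{r,S,\sigma}(\Delta)$. The base of the induction is that term values commute with $\zeta$; background elements such as probability values, truth values and \textit{undef\/} are fixed by isomorphisms, so weights are unchanged. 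Then $\zeta(S+\Delta)=\zeta S+\zeta\Delta$, which gives $d_{\zeta S}(\zeta S')=d_S(S')$. Postulate~\ref{p-background} holds by the definition of pASM states.

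\emph{Postulate~\ref{p-bew}.} This is the main obstacle. Mirroring the parallel ASM thesis~\cite{ferrarotti:tcs2016}, we first normalise $r$ into a form in which the probabilistic choices are collected at the outer level:
\[
\textbf{choose}\ \bar y : \psi(\bar y)\ \textbf{with weight}\ p(\bar y)\ \textbf{do}\ r'(\bar y)
\]
with $r'$ choice-free, combined by bounded parallelism and conditionals. For nested choices we form products of weights; for choices inside \textbf{forall} we let $\bar y$ range over finite multisets or functions encoding the independent local choices, whose weight is the product given by the parallel-rule semantics. Normalisation then yields a p-witness term $\varrho=\langle(\bar y, p(\bar y)/\Sigma)\mid\psi(\bar y)\rangle$, with $\Sigma$ expressible via $\bigoplus$, which satisfies the summation requirement. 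For each assignment rule $f(\bar t):=t_0$ inside $r'$, guarded by the conjunction $\varphi$ of its enclosing conditions and \textbf{forall} ranges, we take the witness term $\langle\langle(t_0,\bar t)\mid\varphi(\bar x,\bar y)\rangle\mid\psi(\bar y)\rangle$. If $S,S'$ agree on all these terms, then they agree on the update tuples produced by each choice $\bar y$ and on their weights, which gives $\boldsymbol{\Delta}(S)=\boldsymbol{\Delta}(S')$ and $d_S=d_{S'}$.

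For the slicing clause, replacing $\psi$ by $\psi\wedge\chi$ corresponds to the pASM obtained by replacing $\psi$ by $\psi\wedge\chi$ in $r$. This restricts $B_{S,\sigma}$, so the new update sets form a subset of the old ones, and the new normalising constant makes the resulting probabilities exactly the conditionalised ones. The delicate points, which we expect to require the most care, are these. First, the encoding of choices nested in \textbf{forall} must be expressible as background multiset terms. Second, distinct choices can produce the same update set; we handle this by noting that conditionalisation commutes with summing over fibres.
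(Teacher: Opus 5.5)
For Postulates~\ref{p-rbt}--\ref{p-background} your argument is sound. It is also more careful than the paper's, which calls these postulates immediate and sets $d_S(S+\Delta)=d_{r,S}(\Delta)$ without summing over fibres. For Postulate~\ref{p-bew} you replace the paper's structural recursion for $W_r\cup P_r$ by a normal form with one outer choice, but two steps fail.

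\emph{Coincidence.} Your claim that agreement on the witness terms fixes $\boldsymbol{\Delta}(S)$ and $d_S$ is false for the terms you chose. The value of $\langle\langle(t_0,\bar t)\mid\varphi\rangle\mid\psi(\bar y)\rangle$ is a multiset of inner multisets that is not indexed by $\bar y$. So agreement on each term separately does not fix which updates of different assignments occur under the same choice. Take $\textbf{choose}\ y : y\in\{0,1\}\ \textbf{with weight}\ 1\ \textbf{do}\ (a:=g(y)\,;\ b:=h(y))$, with $a,b$ undefined. A state with $g(0)=h(0)=0$, $g(1)=h(1)=1$ and one with the values of $h$ swapped agree on all your terms. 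Yet the first yields $\{a\mapsto0,b\mapsto0\},\{a\mapsto1,b\mapsto1\}$ and the second $\{a\mapsto0,b\mapsto1\},\{a\mapsto1,b\mapsto0\}$. Moreover, $\boldsymbol{\Delta}(S)$ in Postulate~\ref{p-bew} consists of minimal update sets $\Delta(S,S+\Delta)$, so trivial updates must be visible to the witness. For $\textbf{choose}\ y : y\in\{0,1\}\ \textbf{with weight}\ 1\ \textbf{do}\ \textbf{if}\ y=0\ \textbf{then}\ c:=0$, the witness values are identical for $c=0$ and $c=5$, but $\boldsymbol{\Delta}(S)$ and $d_S$ differ. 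This is the same fibre effect you noticed for Postulate~\ref{p-rbt}. Both problems are repaired by using $\langle\langle(t_0,\bar t,f(\bar t),\bar y)\mid\varphi\rangle\mid\psi(\bar y)\rangle$. Do not use the paper as a template here: its case~(5) infers agreement on every $W_{r'(\bar a)}$ from agreement on the merged terms, which is the same defect.

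Your normal form is also not expressible in the stated background. $\bigoplus$ is multiset sum, not numerical summation, there is no division, and $\times$ is only binary. So neither $\Sigma$ nor a product of local weights over a state-dependent \textbf{forall} range is a term. The paper's $\tilde p=p/\sum p$ has the same problem, so a richer background must be assumed. Finally, nested choices need products of \emph{locally normalised} weights, including the $\Sigma=0$ skip case; renormalising raw products globally changes the distribution.

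\emph{Slicing.} This step does not give the conditionalisation that Postulate~\ref{p-bew} demands. Replacing $\psi$ by $\psi\wedge\chi$ and renormalising conditions on the choice-level event $\{\bar y\mid\chi(\bar y)\}$. The postulate instead conditions on $\tau'(S)$, i.e.\ on the \emph{whole} fibres of $\bar y\mapsto S+\Delta_{\bar y}$ over the surviving successors. The two agree only when $\{\bar y\mid\chi(\bar y)\}$ is a union of fibres, so ``conditionalisation commutes with summing over fibres'' does not apply in general. Concretely, take $\textbf{choose}\ y : y\in\{1,2,3\}\ \textbf{with weight}\ 1\ \textbf{do}\ \textbf{if}\ y=3\ \textbf{then}\ c:=1\ \textbf{else}\ c:=0$ in a state with $c=5$, and slice by $\chi\equiv y\neq2$. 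Your construction gives both successors probability $1/2$. Here $\tau'(S)=\tau(S)$, so the postulate requires $2/3$ and $1/3$.

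To close this you need one of two things. Either give the sliced machine fibre-dependent weights, rescaling each surviving $\bar y$ by the ratio of the original to the surviving mass of its fibre; this requires a term that compares the successors of different choices. Or restrict the slicing clause, for example to fibre-saturated slices, or to the single-successor slices used in the capture proof, where the formula is trivial. The paper does not settle this either. It only asserts $\boldsymbol{\Delta}_{r'}(S)\subseteq\boldsymbol{\Delta}_r(S)$ and renormalises locally inside the sliced subrule, which for nested choices is not even global conditionalisation on choices. In that respect your single outer choice is the better starting point.
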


\begin{proof}

The requirements of the abstract state and background postulates are built into
the definition of pASMs by construction, so Postulates~\ref{p-state}
and~\ref{p-background} are immediately satisfied. In particular, the sets
$\mathcal{S}$ and $\mathcal{I}$ of states and initial states of $M$ are
well-defined.


For Postulate random branching time, consider an arbitrary state $S$ of $M$, the set $\boldsymbol{\Delta}(S)$ of update sets, and the probability measure $d_{r,S}$ defined on it by the pASM rule semantics. The set of possible successor states of $S$ is $\{ S + \Delta \mid \Delta \in \boldsymbol{\Delta}(S) \}$. We define
$d_S(S + \Delta) = d_{r,S}(\Delta)$ and extend $d_S$ to be $0$ on all other
states. Since the normalisation by $\Sigma_{S,\sigma}$ in the semantics of the
probabilistic choice rule (Section~\ref{ssec:rules}) ensures that $d_{r,S}$
sums to $1$ over $\boldsymbol{\Delta}(S)$, it follows that $d_S$ is a proper
probability measure on $\mathcal{S}$, satisfying Postulate~\ref{p-rbt}.


It remains to show the satisfaction of Postulate~\ref{p-bew} (probabilistic bounded exploration). For this we construct a standard bounded exploration witness $W_r \cup P_r$ from the rule $r$ of a given pASM $M$. The construction of witness terms $\alpha \in W_r$ is done by structural recursion, and the probability defining terms in probabilistic choice rules define the p-witness terms $\varrho \in P_r$ as follows:

\begin{enumerate}\renewcommand{\labelenumi}{(\arabic{enumi})}

\item If $r$ is a \textbf{skip} rule, then we simply have $W_r = \{ \langle \langle () \mid \textbf{true} \rangle \mid \textbf{true} \rangle \}$ and $P_r = \{ \langle (1) \mid \textbf{true} \rangle \}$.

\item If $r$ is an assignment rule $f(t_1,\dots,t_n) := t_0$, then we have $P_r = \{ \langle (1) \mid \textbf{true} \rangle \}$ and
$W_r = \{ \langle \langle (t_0, \dots, t_n) \mid \textbf{true} \rangle \mid \textbf{true} \rangle \}$.

\item If $r$ is a conditional rule \textbf{if} $\varphi$ \textbf{then} $r_1$ \textbf{else} $r_2$, we have $P_r = P_{r_1} \cup P_{r_2}$ and
\begin{align*}
W_r &= \Big \{ \langle \langle \varphi \mid \textbf{true} \rangle \mid \textbf{true} \rangle \Big \} \cup
\Big \{ \langle \langle t \mid \psi \wedge \varphi \rangle \mid \chi \rangle \mid \langle \langle t \mid \psi \rangle \mid \chi \rangle \in W_{r_1} \Big \} \\
&\qquad\qquad \cup \Big \{ \langle \langle t \mid \psi \wedge \neg\varphi \rangle \mid \chi \rangle \mid \langle \langle t \mid \psi \rangle \mid \chi \rangle \in W_{r_2} \Big \} \; .
\end{align*}

\item If $r$ is a parallel rule \textbf{forall} $x_1 ,\dots, x_k : \varphi$ \textbf{do} $r^\prime(x_1 ,\dots, x_k)$, we have $P_r = P_{r^\prime}$ and
\begin{align*}
W_r &= \Big \{ \langle \langle t \mid \psi \wedge \varphi(x_1, \dots, x_k) \rangle \mid \chi \rangle \mid \langle \langle t \mid \psi \rangle \mid \chi \rangle \in W_{r^\prime(x_1,\dots,x_k)} \Big \} \\
&\qquad\qquad \cup \Big \{ \langle \langle \varphi(x_1,\dots,x_k) \mid \textbf{true} \rangle \mid \textbf{true} \rangle \Big \} \; .
\end{align*}

\item If $r$ is a choice rule \textbf{choose} $x_1 ,\dots, x_k : \varphi$ \textbf{with weight} $p(x_1 ,\dots, x_k)$ \textbf{do} $r^\prime$, we have
\begin{align*}
W_r &= \Big \{ \langle \langle t \mid \psi \rangle \mid \chi \wedge \varphi(x_1, \dots, x_k) \rangle \mid \langle \langle t \mid \psi \rangle \mid \chi \rangle \in W_{r^\prime(x_1,\dots,x_k)} \Big \} \\
&\qquad \cup \Big \{ \langle \langle \varphi(x_1,\dots,x_k) \mid \textbf{true} \rangle \mid \textbf{true} \rangle \Big \} \; \text{and} \\
P_r &= \{ \langle (x_1,\dots,x_k, y_1,\dots,y_q, \tilde{p}(x_1,\dots,x_k) \cdot p^\prime(y_1,\dots,y_q)) \mid \\
&\qquad \psi(y_1,\dots,y_q) \wedge \varphi(x_1,\dots,x_k) \rangle \mid \langle (y_1,\dots,y_q, p^\prime(y_1,\dots,y_q)) \mid \\
&\qquad\qquad \psi(y_1,\dots,y_q) \rangle \in P_{r^\prime(x_1,\dots,x_k)} \} \; ,
\end{align*}
where $\tilde{p}(x_1,\dots,x_k) = \cfrac{p(x_1,\dots,x_k)}{\sum p(x_1,\dots,x_k)}$.
\end{enumerate}

The alignment of $W$ with $P$ is obvious. 
We first show that states that coincide on $W \cup P$ yield the same sets of update sets with the same probability distribution on them. For this we use structural induction on the rule $r$, so let $S_1$ and $S_2$ be two states of $M$ and let $\zeta: V \rightarrow B$ be a variable assignment on the set $V$ of free variables of a rule $r$. Assume that with $\zeta$ the states $S_1, S_2$ coincide on $W \cup P$, where $W$ and $P$ are defined as above for the rule $r$.

{\bf (1)} Let $r$ be a \textbf{skip} rule. Then we always have $\boldsymbol{\Delta}_{r,\zeta}(S_1) = \{ \emptyset \} = \boldsymbol{\Delta}_{r,\zeta}(S_2)$. As $\emptyset$ is the only update set, we further have $d_{S_i}(\emptyset) = 1$.

{\bf (2)} If $r$ is an assignment rule $f(t_1,\dots,t_n) := t_0$ with $V = \bigcup_{0 \le i \le n} fr(t_i)$, we get 
\[ \text{val}_{S_i,\zeta}(\langle \langle (t_0, \dots, t_n) \mid \textbf{true} \rangle \mid \textbf{true} \rangle) =
\langle \langle (\text{val}_{S_i,\zeta}(t_0), \dots, \text{val}_{S_i,\zeta}(t_n)) \rangle \rangle \; . \]

Consider the locations $\ell_i = (f, (\text{val}_{S_i,\zeta}(t_1), \dots, \text{val}_{S_i,\zeta}(t_n)))$ and the values $b_i = \text{val}_{S_i,\zeta}(t_0)$. As $S_1$ and $S_2$ coincide on $W_r$ using $\zeta$, we have $\ell_1 = \ell_2$ and $b_1 = b_2$, hence $\boldsymbol{\Delta}_{r,\zeta}(S_1) = \{ \{ (\ell_1, b_1) \} \} = \{ \{ (\ell_2, b_2) \} \} = \boldsymbol{\Delta}_{r,\zeta}(S_2)$. In addition, as there is only a single update set $\Delta$, we have $d_{S_i}(\Delta) = 1$.

{\bf (3)} Let $r$ be a conditional rule \textbf{if} $\varphi$ \textbf{then} $r_1$ \textbf{else} $r_2$, and let $V_0 = fr(\varphi)$, $V_1 = fr(r_1)$ and $V_2 = fr(r_2)$. Let $\zeta: V_0 \cup V_1 \cup V_2 \rightarrow B$ be a variable assignment. As $\langle\langle \varphi \mid \textbf{true} \rangle \mid \textbf{true} \rangle \in W_r$, we have $val_{S_i,\zeta}(\varphi) = \textbf{true}$ for both $i \in \{ 1,2 \}$ or $val_{S_i,\zeta}(\varphi) = \textbf{false}$ for both $i \in \{ 1,2 \}$.

Condider the first case. Then with $\zeta$ $S_1$, $S_2$ coincide on $\langle\langle t \mid \psi \wedge \varphi \rangle \mid \chi \rangle$ for all $\langle\langle t \mid \psi \rangle \mid \chi \rangle \in W_{r_1}$, hence they also coincide on $W_{r_1}$. By induction $\boldsymbol{\Delta}_{r_1,\zeta}(S_1) = \boldsymbol{\Delta}_{r_1,\zeta}(S_2)$, and by definition $\boldsymbol{\Delta}_{r,\zeta}(S_i) = \boldsymbol{\Delta}_{r_1,\zeta}(S_i)$ holds, from which the claim follows immediately.

Analogously, in the second case $S_1$, $S_2$ coincide on $\langle\langle t \mid \psi \wedge \neg\varphi \rangle \mid \chi \rangle$ for all $\langle\langle t \mid \psi \rangle \mid \chi \rangle \in W_{r_2}$, hence they also coincide on $W_{r_2}$. By induction $\boldsymbol{\Delta}_{r_2,\zeta}(S_1) = \boldsymbol{\Delta}_{r_2,\zeta}(S_2)$, and by definition $\boldsymbol{\Delta}_{r,\zeta}(S_i) = \boldsymbol{\Delta}_{r_2,\zeta}(S_i)$ holds.

In addition, we assume that $S_1$ and $S_2$ coincide on $P_r = P_{r_1} \cup P_{r_2}$, hence they also coincide on $P_{r_1}$. By induction we get $\boldsymbol{\Delta}_{r_1,\zeta}(S_1) = \boldsymbol{\Delta}_{r_1,\zeta}(S_2)$ and $d_{r_1,S_1,\zeta} = d_{r_1,S_2,\zeta}$.
Altogether we get $\boldsymbol{\Delta}_{r,\zeta}(S_1) = \boldsymbol{\Delta}_{r,\zeta}(S_2)$ and $d_{r,S_1,\zeta} = d_{r,S_2,\zeta}$.


{\bf (4)} Let $r$ be a parallel rule \textbf{forall} $x_1 ,\dots, x_k : \varphi$ \textbf{do} $r^\prime(x_1 ,\dots, x_k)$, and for $V_0 = fr(t)$ and $V = fr(r(v))$ let $\zeta: (V \cup V_0) - \{ v \} \rightarrow B$ be a variable assignment. As $\langle \langle \varphi(x_1,\dots,x_k) \mid \textbf{true} \rangle \mid \textbf{true} \rangle \in W_r$, we have that
\[ B = \{ (a_1,\dots,a_k) \mid val_{S_i,\zeta[x_1 \mapsto a_1, \dots, x_k \mapsto a_k]}(\varphi) = \textbf{true} \} \]
is the same set of tuples for both $i \in \{ 1,2 \}$.

As $S_1$, $S_2$ with $\zeta$ coincide on $\langle \langle t \mid \psi \wedge \varphi(x_1, \dots, x_k) \rangle \mid \chi \rangle$ for all $\langle \langle t \mid \psi \rangle \mid \chi \rangle \in W_{r^\prime(x_1,\dots,x_k)}$, they also coincide on $W_{r^\prime(a_1,\dots,a_k)}$ for all $(a_1,\dots,a_k) \in B$. Hence by induction we have $\boldsymbol{\Delta}_{r^\prime(a_1,\dots,a_k)}(S_1) = \boldsymbol{\Delta}_{r^\prime(a_1,\dots,a_k)}(S_2)$ for all $(a_1,\dots,a_k) \in B$. By definition 
\[ \boldsymbol{\Delta}_{r,\zeta}(S_i) = \Big \{ \bigcup_{(a_1,\dots,a_k) \in B} \Delta_{(a_1,\dots,a_k)} \mid  \Delta_{(a_1,\dots,a_k)} \in \boldsymbol{\Delta}_{r^\prime(a_1,\dots,a_k)}(S_i) \Big \} \]
hold for $i \in \{ 1,2 \}$. Furthermore, as $S_1$ and $S_2$ coincide on $P_r$, they also coincide on $P_{r^\prime(x_1, \dots, x_k)}$. By induction we get $d_{r^\prime(x_1,\dots,x_k),S_1,\zeta} = d_{r^\prime(x_1,\dots,x_k),S_2,\zeta}$, which implies $d_{r,S_1,\zeta} = d_{r,S_2,\zeta}$.

{\bf (5)} Let $r$ be a choice rule \textbf{choose} $x_1 ,\dots, x_k : \varphi$ \textbf{with weight} $p(x_1 ,\dots, x_k)$ \textbf{do} $r^\prime$ with $V_0 = fr(t)$, $V = fr(r(v))$, and let $\zeta: (V \cup V_0) - \{ v \} \rightarrow B$ be a variable assignment. As in the previous case we have $\langle \langle \varphi(x_1,\dots,x_k) \mid \textbf{true} \rangle \mid \textbf{true} \rangle \in W_r$, and hence the set
\[ B = \{ (a_1,\dots,a_k) \mid val_{S_i,\zeta[x_1 \mapsto a_1, \dots, x_k \mapsto a_k]}(\varphi) = \textbf{true} \} \]
of tuples is the same for both $i \in \{ 1,2 \}$.

As $S_1$, $S_2$ with $\zeta$ coincide on $\langle \langle t \mid \psi \rangle \mid \chi \wedge \varphi(x_1, \dots, x_k) \rangle$ for all $\langle \langle t \mid \psi \rangle \mid \chi \rangle \in W_{r^\prime(x_1,\dots,x_k)}$, they also coincide on $W_{r^\prime(a_1,\dots,a_k)}$ for all $(a_1,\dots,a_k) \in B$. By induction we get $\boldsymbol{\Delta}_{r^\prime(a_1,\dots,a_k)}(S_1) = \boldsymbol{\Delta}_{r^\prime(a_1,\dots,a_k)}(S_2)$ for all $(a_1,\dots,a_k) \in B$. By definition we get
\[ \boldsymbol{\Delta}_{r,\zeta}(S_i) = \bigcup_{(a_1,\dots,a_k) \in B} \boldsymbol{\Delta}_{r^\prime(a_1,\dots,a_k)}(S_i) \]
for $i \in \{ 1,2 \}$.

By induction we further get $d_{r^\prime(x_1,\dots,x_k),S_1,\zeta} = d_{r^\prime(x_1,\dots,x_k),S_2,\zeta}$, which implies $d_{r,S_1,\zeta} = d_{r,S_2,\zeta}$.




Concerning the slicing condition in Postulate \ref{p-bew} any slicing of a witness term $\alpha \in W_r$ corresponds to a strengthening of a probabilistic choice subrule by replacing the condition $\psi(y_1, \dots, y_q)$ by $\psi(y_1, \dots, y_q) \wedge \chi(y_1, \dots, y_q)$. If the probability term $p(y_1, \dots, y_q)$ in this rule is replaced by $p(y_1, \dots, y_q) / \sum_{y_1^\prime, \dots, y_q^\prime} p(y_1^\prime, \dots, y_q^\prime)$, where the sum ranges over those $(y_1^\prime, \dots, y_q^\prime)$ for which $\psi(y_1^\prime, \dots, y_q^\prime) \wedge \chi(y_1^\prime, \dots, y_q^\prime)$ holds, then we obtain a modified pASM rule $r^\prime$, and the pASM defined by this rule yields sets of update sets $\boldsymbol{\Delta}_{r^\prime}(S) \subseteq \boldsymbol{\Delta}_r(S)$.
\end{proof}

\section{The Capture of Probabilistic Algorithms by pASMs}\label{app:capture}

As stated in Section \ref{sec:thesis} this appendix contains the full proof of Theorem \ref{thm-capture}. 

\begin{theorem}[Capture]
For every probabilistic algorithm $\mathcal{A}$ there exists a behaviourally equivalent pASM $M$.
\end{theorem}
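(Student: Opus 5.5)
The plan follows the proof sketch in Section~\ref{sec:thesis}, turning each of its claims into a lemma. Fix a bounded exploration witness $W \cup P$ for $\mathcal{A}$ as in Postulate~\ref{p-bew}, closed under subterms. The proof then proceeds in three stages.

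\textbf{Stage 1: criticality of updates.} For a state $S$ and any $\Delta \in \boldsymbol{\Delta}_{\mathcal{A}}(S)$, show that every update $((f,(a_1,\dots,a_n)),a_0) \in \Delta$ has a $W$-critical tuple $(a_0,\dots,a_n)$. The argument is the classical one. If some $a_i$ were not critical, take an isomorphic copy of $S$ that renames $a_i$ to a fresh element. Postulate~\ref{p-state} transports $\boldsymbol{\Delta}$ and $d$ along the isomorphism. The two states coincide on $W \cup P$, so Postulate~\ref{p-bew} gives equal update sets. This is a contradiction, since the renamed update is then a different update.

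\textbf{Stage 2: characterising tuples and the per-state rule.} Next, use the slicing clause to isolate single update sets. For each $\Delta$, choose slicing conditions $\chi_\alpha$ that pin the choice variables $\bar y$ of each $\alpha \in W$ to one tuple $\bar b_\alpha$ of values satisfying $\psi$. The sliced algorithm $\mathcal{A}'$ must then satisfy $\boldsymbol{\Delta}^{\mathcal{A}'}(S)=\{\Delta\}$. I expect this to be the main obstacle. Postulate~\ref{p-bew} only guarantees $\boldsymbol{\Delta}^{\mathcal{A}'}(S) \subseteq \boldsymbol{\Delta}^{\mathcal{A}}(S)$, so one must argue three things:
\begin{itemize}
\item some combination of pinned tuples yields exactly $\Delta$;
\item that combination yields nothing else;
\item the probability of $\Delta$ equals $d(\bar b)$, the product of the $p$-values read off $P$.
\end{itemize}
For the last point I would use the conditionalisation formula iteratively, slicing one $\alpha$ at a time. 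Each step renormalises by the mass of the retained tuples, which is read from the aligned $\varrho$. The concatenation $\bar b$ is $P$-critical, since the $\bar y$ values occur in $\mathrm{val}_S(\varrho)$.

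With this in hand, I split the step into two deterministic parallel algorithms. The first, $\mathcal{A}^c$, with witness $P$, computes the characterising tuples. The second, $\mathcal{A}^p$, with witness $W^p$ as in the sketch, computes the updates of $\Delta$ tagged by $\bar b$. Applying Lemmas~\ref{lem-indistinguishable-update}--\ref{lem-isolating-term} to both yields isolating terms $t^{\bar b}_\chi$ and $t^{\bar a\bar b}_\chi$. I then form $r_S$ as a choose rule over $\bar y$ guarded by the disjunction of the isolating terms of the characterising tuples, weighted by $d(\bar y)$, whose body is the bounded parallel composition of forall-rules given by $t^{\bar a\bar b}_\chi$. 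Lemma~\ref{lem-indistinguishable-update} guarantees that all tuples of the same type produce updates in the same update set, so no spurious updates arise. As in Lemma~\ref{lem-rule}, one checks that $\boldsymbol{\Delta}_{r_S}(S)=\boldsymbol{\Delta}_{\mathcal{A}}(S)$. Since the weights sum to $1$, the normalisation in the choice semantics is trivial, and together with Postulate~\ref{p-state}, equal-type tuples carry equal probability; hence $d_{r_S,S}=d^{\mathcal{A}}_S$. Tuples of the same type that induce the same $\Delta$ must have their weights summed rather than counted twice. To handle this I would weight each tuple by $d(\bar b)$ divided by its multiplicity, which is expressible with $\mathit{AsSet}$ and multiset sums.

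\textbf{Stage 3: globalisation.} Prove Lemma~\ref{lem-wp-similarity}. If $S'$ is $WP$-similar to $S$, then the isolating terms, being built from witness terms and equalities among their values, select in $S'$ exactly the tuples corresponding to those in $S$. The argument is the same as in the parallel thesis. Applying Postulate~\ref{p-bew} through an isomorphism/renaming argument then transfers update sets and probabilities. Finally, choose representatives $S_1,\dots,S_k$ of the finitely many similarity classes and let $\varphi_i$ be the conjunction of the equalities and inequalities among witness terms defining $\sim_{S_i}$. The rule $r$ is the parallel composition of \textbf{if} $\varphi_i$ \textbf{then} $r_{S_i}$. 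Exactly one guard fires in each state, so $r$ agrees with $\mathcal{A}$ on update sets and measures everywhere. This gives behavioural equivalence with the same signature and background.
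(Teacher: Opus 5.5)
Your three stages follow the paper's proof: Lemma~\ref{lem-w-critical}; then characterising tuples, the split into $\mathcal{A}^c$ and $\mathcal{A}^p$ with isolating terms, and Lemma~\ref{lem-pa-rule}; then Lemma~\ref{lem-wp-similarity} and the case distinction. The step you single out as the main obstacle is a genuine gap, and the paper does not close it either; it only asserts that slicing yields $\boldsymbol{\Delta}_{\mathcal{A}'}(S)=\{\Delta\}$. As formulated that claim is false, and the postulates as written do not support a repair. Take the pASM $\textbf{choose}\; x : X(x)\;\textbf{with weight}\;1\;\textbf{do}\; g:=x$, which is a PA by Theorem~\ref{thm-pasm}. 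Take a state $S$ (say with $g=\mathit{undef}$) in which $X$ holds exactly at two elements $a\neq b$ that are swapped by an automorphism $\sigma$ of $S$. A slicing condition $\chi(\bar y)$ is a parameter-free term, so it cannot separate $a$ from $b$. Moreover, whatever slicing does, the sliced $\mathcal{A}'$ is again a PA. The isomorphism invariance in Postulate~\ref{p-state}, applied to $\sigma$, then gives the successors with $g=a$ and with $g=b$ the same $d^{\mathcal{A}'}_S$-probability, so $\boldsymbol{\Delta}^{\mathcal{A}'}(S)$ is never a singleton. Worse, Postulate~\ref{p-bew} ties neither the choices to the outer conditions $\psi$ nor $d_S$ to the $p$-terms. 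The same pASM satisfies Postulate~\ref{p-bew} with the witness $\langle\langle (x,g)\mid X(x)\rangle\mid\mathbf{true}\rangle$, $\langle (1)\mid\mathbf{true}\rangle$, which has no choice variables at all. For that witness, any rule assembled from choices over $\bar y$ is deterministic. Also, conditionalisation renormalises by the $d^{\mathcal{A}}$-mass of the retained successors, not by $p$-values. So reading that mass off $\varrho$, and with it $d_S(\Delta)=d(\bar b)$, is an extra assumption. What your rule actually uses is that each $\Delta$ is generated uniformly from some $P$-critical $\bar b$ taken as parameter (here $\bar b=(a)$ and the body $g:=y$). It also uses that $d_S(\Delta)$ is the total weight (sum of products of $p$-values) of the tuples generating $\Delta$. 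This link between choice tuples, update sets and probabilities must be added to the postulates; it is the missing ingredient. Under this reading your division by multiplicities is wrong, since the choose semantics already sums the weights of tuples that produce the same $\Delta$.

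Your assembly of $r_S$ improves on the paper's. The paper's $r_S$ is a parallel composition of separate choose rules (\ref{eq-rule}), one per characterising tuple and update tuple. Under the pASM semantics these make independent choices, each normalising its constant weight $d(\bar b)$ away, and their update sets are united. Your single choose over $\bar y$, weighted by $d(\bar y)$ and with the forall rules composed in its body, has the right shape. You should conjoin $t^{\bar b}_\chi(\bar y)$ to each forall guard, since $W^p$-types need not separate $P$-types. Stage~3, however, inherits a second flaw from the paper. Isolating terms express $\mathrm{FO}_{wo=}$-properties of $S|_W$, i.e.\ membership patterns inside the multisets, not equalities among witness values, and $\mathit{WP}$-similarity does not preserve them. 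Take $\textbf{forall}\; x : f(x)=\mathbf{true}\;\textbf{do}\; k(x):=h(x)$ with witness $\langle\langle (h(x),x)\mid f(x)=\mathbf{true}\rangle\mid\mathbf{true}\rangle$, $\langle (1)\mid\mathbf{true}\rangle$, closed under ground subterms. All states show the same equalities among witness values, so all states are $\mathit{WP}$-similar. Now let $S$ have $f$ true only at $a$ with $h(a)=a$, and let $S'$ have $f$ true only at $c$ with $h(c)=d\neq c$. No isomorphism maps $\langle\langle (d,c)\rangle\rangle$ to $\langle\langle (a,a)\rangle\rangle$. The forall guard of $r_S$ isolates the type of the update tuple $(a,a)$, which contains the property ``$(x_0,x_0)$ occurs in the inner multiset''. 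This fails at $(d,c)$, so $r_S$ misses $k(c):=d$; symmetrically, $r_{S'}$ misses $k(a):=a$ in $S$. You need a finer similarity notion that still has only finitely many classes, and neither your sketch nor the paper provides one.
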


For this we want to exploit as much as possible the related behavioural theory of parallel algorithms \cite{ferrarotti:tcs2016}, so we summarise a few essential results without repeating the partly lengthy proofs.

A {\em parallel algorithm} is defined by for postulates: 

\begin{description}

\item[Sequential Time.] A parallel algorithm $\mathcal{A}$ has a set $\mathcal{S}$ of states, a set $\mathcal{I} \subseteq \mathcal{S}$ of initial states, and a state transition function $\tau: \mathcal{S} \rightarrow \mathcal{S}$.

\item[Abstract State.] There exists a signature $\Upsilon$ such that all states $S \in \mathcal{S}$ are $\Upsilon$-structures. The sets $\mathcal{S}$ and $\mathcal{I}$ are closed under isomorphisms. The successor state $\tau(S)$ of every state $S$ has the same base set as $S$. Any isomorphism between states $S$ and $S^\prime$ is also an isomorphism between $\tau(S)$ and $\tau(S^\prime)$.

\item[Background.] The background requirements are the same as in Postulate \ref{p-background}.

\item[Bounded Exploration.] There exists a finite set $W$ of multiset comprehension terms of the form $\alpha = \langle (t_1, \dots, t_n) \mid \varphi \rangle$ such that for all states $S$, $S^\prime$ that coincide on $W$ we have $\Delta_{\mathcal{A}}(S) = \Delta_{\mathcal{A}}(S^\prime)$, where $\Delta_{\mathcal{A}}(S)$ is the unique minimal consistent update set with $\tau(S) = S + \Delta(S)$ (see Lemma \ref{lem-update-set}).

\end{description}

If we fix a {\em bounded exploration witness} $W$, assuming without loss of generality that $W$ is closed under subterms, then for every state $S$ and $\alpha \in W$ we obtain a multiset $\text{val}_S(\alpha)$ of tuples of values of the base set of $S$. Any $b \in B$ that appears in one such tuple is called a {\em critical value} in state $S$. A tuple of such critical values is called a {\em critical tuple}. The following sequence of lemmata was proven in \cite{ferrarotti:tcs2016}.

\begin{lemma}[{\cite[Lemma~7.1]{ferrarotti:tcs2016}}]
For every state $S$, if $((f,(a_1, \dots, a_n)), a_0)$ is an update in $\Delta_{\mathcal{A}}(S)$, then $\bar{a} = (a_0, \dots, a_n)$ is a critical tuple in $S$.
\end{lemma}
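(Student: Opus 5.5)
The plan is the standard Gurevich-style argument by contradiction, built on the isomorphism-closure part of the Abstract State postulate. Suppose $((f,(a_1,\dots,a_n)),a_0) \in \Delta_{\mathcal{A}}(S)$, but some $a_i$ is not a critical value in $S$. We construct a state $S^\prime$ that coincides with $S$ on $W$ but whose update set cannot contain this update, which contradicts Bounded Exploration.

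\emph{Step 1: the value $a_i$ never occurs in $\text{val}_S(\alpha)$.} Assume $a_i$ is not critical. Take a fresh element $c$ not in the base set $B$ of $S$. Let $S^\prime$ be the structure with base set $(B\setminus\{a_i\})\cup\{c\}$. It is obtained from $S$ via the bijection $\sigma$ that swaps $a_i$ for $c$ and fixes every other element. Then $S^\prime=\sigma(S)$ is isomorphic to $S$, so $S^\prime \in \mathcal{S}$. By the Abstract State postulate, $\tau(S^\prime)=\sigma(\tau(S))$, and hence $\Delta_{\mathcal{A}}(S^\prime)=\sigma(\Delta_{\mathcal{A}}(S))$.

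Two technical points need care here. First, $a_i$ must not be a background element such as \textbf{true}, \textbf{false}, \textit{undef}, a probability value, or a multiset or pair built from critical values. The first three are values of ground subterms, and $W$ is closed under subterms, so they are critical. Second, the background (pairs and multisets) must be rebuilt over the new atoms, so that $\sigma$ really extends to an isomorphism of full states. I would follow \cite{ferrarotti:tcs2016} and work with the meta-finite or hereditarily finite structure built over the atoms.

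\emph{Step 2: $S$ and $S^\prime$ coincide on $W$.} For every $\alpha\in W$ we have $\text{val}_{S^\prime}(\alpha)=\sigma(\text{val}_S(\alpha))$, since isomorphisms preserve term values, including multiset comprehensions. All elements appearing in $\text{val}_S(\alpha)$ are critical, so they differ from $a_i$ and are fixed by $\sigma$. Hence $\text{val}_{S^\prime}(\alpha)=\text{val}_S(\alpha)$. Bounded Exploration then gives $\Delta_{\mathcal{A}}(S^\prime)=\Delta_{\mathcal{A}}(S)$.

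\emph{Step 3: contradiction.} The update $((f,(a_1,\dots,a_n)),a_0)$ lies in $\Delta_{\mathcal{A}}(S)=\Delta_{\mathcal{A}}(S^\prime)$. Its location or value therefore involves $a_i$, which is not an element of $S^\prime$. This is impossible, because update sets of $S^\prime$ consist only of locations and values over its own base set. So every $a_i$ is critical, and $\bar a$ is a critical tuple.

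The main obstacle is Step 1, namely making the swap of $a_i$ with a fresh element a genuine isomorphism of full states including the background. If $a_i$ is an atom, the swap extends canonically to the hereditarily finite background. If $a_i$ is a composite background object, the plan is to argue that its components are critical, or to swap an underlying atom instead. Alternatively, one can use the variant with two disjoint isomorphic copies sharing exactly the critical elements, and rename all non-critical atoms at once.
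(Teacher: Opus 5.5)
Your proof is correct and uses the same argument as the paper: rename the non-critical $a_i$ to a fresh element, note that the resulting isomorphic state agrees with $S$ on $W$, and derive a contradiction from bounded exploration because $a_i$ is no longer in the base set (the paper only cites this lemma, but proves its probabilistic analogue, Lemma~\ref{lem-w-critical}, exactly this way). Most of your background caveats disappear in the abstract setting the paper uses, where $\sigma(S)$ is just the transported structure and $c$ takes over whatever role $a_i$ played; the hereditarily finite detour would actually hurt, since the canonically extended swap moves critical pairs or multisets that contain $a_i$ and so breaks your Step~2.
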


\begin{lemma}[{\cite[Lemma~7.5]{ferrarotti:tcs2016}}]
Let $\mathcal{A}$ be a parallel algorithm, $S$ a state of $\mathcal{A}$, and let $\bar{a} = (a_0, \ldots, a_r)$, and let $(f,(a_1, \ldots, a_r), a_0) \in \Delta_{\mathcal{A}}(S)$. For every $(r+1)$-tuple of critical values $\bar{b} = (b_0, \ldots, b_r)$ with $\mathit{tp}^{\mathrm{FO}_{wo=}}_{S|_{W}}(\bar{b}) = \mathit{tp}^{\mathrm{FO}_{wo=}}_{S|_{W}}(\bar{a})$ the update $(f, (b_1, \ldots, b_r), b_0)$ also belongs to $\Delta_{\mathcal{A}}(S)$.
\end{lemma}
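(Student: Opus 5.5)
This is Lemma 7.5 of \cite{ferrarotti:tcs2016}, restated for parallel algorithms. The plan is to reconstruct its proof from the parallel postulates recalled above. The key tool is an isomorphism argument combined with bounded exploration.

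First, recall how $S|_W$ is built. It is the finite relational structure whose domain is the set of critical values of $S$. For every $\alpha = \langle (t_1,\dots,t_n) \mid \varphi\rangle \in W$ and every subterm, it carries a relation recording which tuples occur in $\text{val}_S(\alpha)$, together with their multiplicities. Since $\bar a$ and $\bar b$ have the same $\mathrm{FO}_{wo=}$-type, the map $a_i \mapsto b_i$ respects every atomic relation of $S|_W$ on these tuples.

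The natural approach is the standard trick from the sequential and parallel theses. Construct a state $S'$ that is isomorphic to $S$ via an isomorphism $\sigma$ with $\sigma(\bar a) = \bar b$, and show that $S'$ coincides with $S$ on $W$. By bounded exploration this gives $\Delta_{\mathcal{A}}(S') = \Delta_{\mathcal{A}}(S)$. By the abstract state postulate, $\sigma$ maps $\Delta_{\mathcal{A}}(S)$ onto $\Delta_{\mathcal{A}}(S')$, so the update $(f,(a_1,\dots,a_r),a_0)$ is sent to $(f,(b_1,\dots,b_r),b_0) \in \Delta_{\mathcal{A}}(S') = \Delta_{\mathcal{A}}(S)$.

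Concretely, I would proceed as follows.
\begin{enumerate}
\item Use a copy of the base set to produce a state $S_1 \cong S$ in which $\bar a$ is renamed to fresh elements. Then redefine the functions of $S_1$ on the critical part, via the type equality, so that the new state $S'$ interprets all terms of $W$ exactly as $S$ does.
\item Argue by structural induction on the terms in $W$, which is closed under subterms, that $\text{val}_{S'}(\alpha) = \text{val}_S(\alpha)$ for every $\alpha \in W$.
\end{enumerate}
The fact that the type is taken without equality matters here. It permits $a_i = a_j$ while $b_i \neq b_j$, and vice versa. This is handled by the construction in \cite{ferrarotti:tcs2016}, which identifies or splits elements in the copy, using that the background operations are determined by the critical structure.

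The main obstacle is making $S'$ genuinely a state while still agreeing with $S$ on all multiset comprehension terms. The multiplicities and the background multiset operations must be preserved, and the base set must be handled under isomorphism closure. I would follow \cite{ferrarotti:tcs2016} directly, since the present paper only invokes this lemma and does not reprove it. The argument there is an Ehrenfeucht-style reasoning over the finite structure $S|_W$.
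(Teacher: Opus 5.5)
The paper gives no proof of this lemma; it imports it from \cite{ferrarotti:tcs2016}. Your argument therefore has to stand on its own, and as written it does not: there is a genuine gap.

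The outer shell is sound. Suppose you had a genuine isomorphism $\sigma\colon S\to S'$ with $\sigma(\bar a)=\bar b$ and $\mathrm{val}_{S'}(\alpha)=\mathrm{val}_S(\alpha)$ for all $\alpha\in W$. Then the abstract state postulate gives $\sigma(\Delta_{\mathcal{A}}(S))=\Delta_{\mathcal{A}}(S')$, bounded exploration gives $\Delta_{\mathcal{A}}(S')=\Delta_{\mathcal{A}}(S)$, and you are done. But the existence of such $\sigma$ and $S'$ is the entire content of the lemma, and you defer exactly that step to the citation.

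Worse, the construction you do sketch is incompatible with the shell:
\begin{itemize}
\item Renaming $\bar a$ to fresh elements sends $\bar a$ away from $\bar b$, not onto it.
\item Once you ``redefine the functions of $S_1$ on the critical part'', or ``identify or split elements'', $S'$ is no longer an isomorphic image of $S$. You can then no longer transport $(f,(a_1,\dots,a_r),a_0)$ to $(f,(b_1,\dots,b_r),b_0)$ via the abstract state postulate.
\end{itemize}

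What the shell actually requires is a bijection $\pi$ of the critical values with $\pi(\bar a)=\bar b$ that fixes every $\mathrm{val}_S(\alpha)$ as a multiset of tuples, and which extends to an isomorphism of the whole state. Obtaining it needs several steps you do not supply.
\begin{enumerate}
\item You must show that equal $\mathrm{FO}_{wo=}$-types in $S|_W$ force $\bar a$ and $\bar b$ to have the same equality pattern; otherwise no map $a_i\mapsto b_i$ exists at all. This has to come from the index (multiplicity) component of the relations $R_\alpha$, which separates distinct tuples.
\item Your observation that $a_i\mapsto b_i$ respects the atomic relations on these tuples is only a local partial map. You need a global automorphism of the finite structure $S|_W$ sending $\bar a$ to $\bar b$. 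That follows from equality of full $\mathrm{FO}$-types, which you only get after the first step.
\item You must argue that such an automorphism preserves each $R_\alpha$ including the index component, and hence the multiplicities in $\mathrm{val}_S(\alpha)$.
\item You must check that $\pi$ extends to an isomorphism of all of $S$ that is compatible with the background constructors (pairing, multisets). This is not automatic, since critical values may themselves be compound background objects.
\end{enumerate}
Without these steps the proposal is a restatement of the strategy rather than a proof.
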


Here $S|_{W}$ denotes a relational structure defined by the bounded exploration witness $W$ of $\mathcal{A}$. The vocabulary of $S|_{W}$ is $\Sigma_W = \{ R_{\alpha_1}, \dots, R_{\alpha_m} \}$ for $W = \{ \alpha_1, \dots, \alpha_m \}$. For $\alpha_i = \langle (t_0, \dots, t_n) \mid \varphi_i(x_1, \dots, x_r) \rangle$ the relation symbol $R_{\alpha_i}$ has arity $n+2$ and the following interpretation:
\begin{align*}
R_{\alpha_i}^{S|_{W}} &= \{(b_0, \ldots, b_{n_i}, j \cdot b_0 \cdots b_{n_i}  \cdot a_1 \cdots a_{r_i}) \mid  (a_1, \ldots, a_{r_i}) \in B^{r_i}, \\
&\quad {\bf S} \models \varphi_i(x_1, \ldots, x_{r_i})[a_1, \ldots, a_{r_i}] \text{ and }\\
&\quad \text{val}_{S,\mu[x_1 \mapsto a_1, \ldots, x_{r_i} \mapsto a_{r_i}]}(t_0) = b_0, \ldots, \text{val}_{{\bf S},\mu[x_1 \mapsto a_1, \ldots, x_{r_i} \mapsto a_{r_i}]}(t_{n_i}) = b_{n_i}\} \; , 
\end{align*}
where $j \cdot b_0 \cdots b_{n_i}  \cdot a_1 \cdots a_{r_i}$ denotes a value in $B$ such that the set of these values for fixed $b_0, \dots, b_{n_i}$ is in one-one correspondence to the set of tuples $(a_1, \dots, a_{r_i})$ satisfying the required conditions for the evaluation of $\varphi_i$ and $t_k$ ($0 \le k \le n_i$). An element $a_i$ belongs to the domain of $S|_{W}$ iff for some $\alpha_i \in W$ there is a $\bar{a} \in R^{S|_{W}}_{\alpha_i}(\bar{a})$ such that $a_i$ appears in $\bar{a}$.

Furthermore, $\mathit{tp}^{\mathrm{FO}_{wo=}}_{S|_{W}}(\bar{a})$ denotes the {\em type} of the tuple $\bar{a}$ in first-order logic without equality with respect to the structure $S|_{W}$, i.e. the set of all formulae satisfied by $\bar{a}$ in the structure $S|_{W}$.

In a nutshell, the lemma states that for every critical tuple $\bar{a}$ that defines an update in some update set any other critical tuple $\bar{b}$ with the same type also defines an update (for the same function symbol $f$) in the very same update set.

\begin{lemma}[{\cite[Lemma~7.6]{ferrarotti:tcs2016}}]
For  every relational vocabulary $\Sigma$ with no constants, every finite structure $S$ of schema $\Sigma$, every $r \geq 0$ and every $r$-tuple $\bar{a}$ over $S$ there is a formula $\chi \in \mathit{tp}_S^{\mathrm{FO}_{wo=}}(\bar{a})$  such that for any finite relational structure $S^\prime$ of schema $\Sigma$ and every $r$-tuple $\bar{b}$ over $S^\prime$ we have $S^\prime \models \chi[\bar{b}]$ iff $\mathit{tp}_S^{\mathrm{FO}_{wo=}}(\bar{a}) = \mathit{tp}_{S^\prime}^{\mathrm{FO}_{wo=}}(\bar{b})$.
\end{lemma}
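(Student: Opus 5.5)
The statement to be proved is the isolating-formula lemma: for a finite relational structure $S$ without constants and an $r$-tuple $\bar a$, we want a single formula $\chi$ of $\mathrm{FO}_{wo=}$ that pins down the full $\mathrm{FO}_{wo=}$-type of $\bar a$, uniformly over all finite structures $S^\prime$ of the same schema.

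The plan is to use the characterisation of $\mathrm{FO}_{wo=}$-equivalence by a bisimulation-like back-and-forth game in which equality is not observed. Two tuples $\bar a$ in $S$ and $\bar b$ in $S^\prime$ have the same $\mathrm{FO}_{wo=}$-type iff they satisfy the same atomic formulae $R(\bar z)$, with $\bar z$ ranging over the tuple variables, and the relation generated by this condition is closed under the back-and-forth extension steps. The key point is that $S$ is finite. Therefore $\mathrm{FO}_{wo=}$-equivalence on tuples of each length over $S$ alone has finitely many classes, and the refinement process stabilises after some finite number $k$ of rounds, with $k$ depending only on $S$ and $r$.

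\textbf{Step 1.} Define inductively, for each $m$, the quantifier-rank-$m$ characteristic formulae $\chi^m_{\bar c}$ of tuples $\bar c$ over $S$.
\begin{itemize}
\item $\chi^0_{\bar c}$ is the conjunction of all atomic and negated atomic formulae, without equality, true of $\bar c$. There are finitely many, since the vocabulary is finite.
\item $\chi^{m+1}_{\bar c}$ is $\chi^m_{\bar c}\wedge\bigwedge_{e\in S}\exists y\,\chi^m_{\bar c e}\wedge\forall y\bigvee_{e\in S}\chi^m_{\bar c e}$.
\end{itemize}
These are finite formulae because $S$ is finite.

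\textbf{Step 2.} Show by induction on $m$ that for any structure $S^\prime$ and any tuple $\bar b$ we have $S^\prime\models\chi^m_{\bar c}[\bar b]$ iff $\bar b$ and $\bar c$ agree on all $\mathrm{FO}_{wo=}$ formulae of quantifier rank at most $m$. This is the standard Ehrenfeucht--Fra\"iss\'e argument, with the equality clause dropped from partial isomorphisms.

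\textbf{Step 3 (stabilisation).} On $S$ itself, the equivalence relations $\equiv_m$ on tuples of length at most $r+m$ refine one another. Because each step adds only one new element from the finite set $S$, I would take $k$ such that $\equiv_k$ and $\equiv_{k+1}$ coincide on $r$-tuples and on all extensions. Then $\chi:=\chi^{k+1}_{\bar a}$ is the candidate isolating formula.

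\textbf{Step 4.} Argue that if $S^\prime\models\chi[\bar b]$, then $\bar b$ satisfies every formula of $\mathit{tp}_S(\bar a)$. To do this, I would show that the relation $\{(\bar c,\bar d)\mid S^\prime\models\chi^k_{\bar c}[\bar d]\}$, restricted to pairs reachable from $(\bar a,\bar b)$, is a back-and-forth system without equality. The forth direction comes from the conjunct $\exists y\,\chi^k_{\bar ce}$, and the back direction from the conjunct $\forall y\bigvee_e\chi^k_{\bar ce}$. Both directions use stabilisation to keep the invariant at level $k$ rather than degrading it. By induction on formula structure, a back-and-forth system preserves all $\mathrm{FO}_{wo=}$ formulae, which gives the full type equality. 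The converse direction is trivial, since $\chi\in\mathit{tp}_S(\bar a)$.

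\textbf{Main obstacle.} I expect the main difficulty to be Step 3/4: making stabilisation precise when it is needed uniformly in the arbitrary partner structure $S^\prime$. The subtle point is that $S^\prime$ may be larger than $S$, so stabilisation inside $S$ must be transported across. I would first try to choose $k$ so that the level-$k$ classes of extensions $\bar c e$ are already determined by the level-$(k+1)$ class of $\bar c$ in $S$. I would then verify that this determination is encoded syntactically in $\chi^{k+1}_{\bar c}$, so that it holds in $S^\prime$ as well. If that does not go through, the fallback is to define $\chi$ via the finite partition of $S$-tuples into level-$k$ classes and to add, as explicit conjuncts, universal closure axioms stating that every extension falls into one of these classes with the same successor pattern.
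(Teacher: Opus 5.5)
The paper gives no proof of this lemma; it imports it from \cite{ferrarotti:tcs2016}, so I judge your argument on its own. Steps 1 and 2 are fine: they are the usual Hintikka formulas, and the equality-free Ehrenfeucht--Fra\"iss\'e argument works for arbitrary $S^\prime$.

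The gap is in Steps 3--4, and it is a real failure rather than a matter of precision. Stabilisation of $\equiv_m$ \emph{inside} $S$ says nothing about $S^\prime$, so $\chi^{k+1}_{\bar a}$ with $k$ a stabilisation index need not be isolating. Here is a counterexample.
\begin{itemize}
\item Take $\Sigma=\{R\}$ with $R$ binary, $S=\{u,v\}$ and $R^S=\{(u,v),(v,u)\}$.
\item In $S$, $R(c_i,c_j)$ holds iff $c_i\neq c_j$. So the level-$0$ type of a tuple records its equality pattern, which determines the tuple up to the automorphism swapping $u$ and $v$.
\item Hence $\equiv_0$ already equals full type equivalence on tuples of every length, and $k=0$ meets your requirement.
\item For $r=0$, however, $\chi^1_{()}$ is equivalent to $\exists y\,\neg R(y,y)\wedge\forall y\,\neg R(y,y)$. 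This holds in the one-element structure with empty $R$, a finite structure that violates $\exists x\exists y\,R(x,y)$.
\end{itemize}
In your back-and-forth system this shows up as follows. For a pair $(\bar c,\bar d)$ you only know $S^\prime\models\chi^k_{\bar c}[\bar d]$, whose conjuncts yield $\chi^{k-1}_{\bar c e}$ in $S^\prime$, not $\chi^{k}_{\bar c e}$. The invariant therefore drops one level per move, exactly as in the ordinary game.

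Your fallback fails for a related reason. The system passes through tuples of every length, so closure axioms $\forall\bar x\,(\chi^k_{\bar c}\to\chi^{k+1}_{\bar c})$ would be needed for infinitely many $\bar c$. You cannot bound the lengths by discarding repeated entries, because without equality nothing in $\chi$ forces the corresponding entries of $\bar d$ to behave alike. Even the existence of one $k$ that works uniformly for all extension lengths needs more than ``each step adds one element of $S$''. That remark only gives stabilisation on each $S^n$ separately.

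The missing idea is a definable surrogate for equality. Let $n$ be the maximal arity in $\Sigma$, and let $\bar z[x/i]$ denote $\bar z$ with its $i$-th entry replaced by $x$. The equality-free formula $x\approx y:=\bigwedge_{R\in\Sigma}\bigwedge_{i}\forall\bar z\,\bigl(R(\bar z[x/i])\leftrightarrow R(\bar z[y/i])\bigr)$ quantifies only the remaining entries, so it has quantifier rank $n-1$. It has three useful properties:
\begin{itemize}
\item it defines the largest congruence in every structure;
\item $\mathrm{FO}_{wo=}$-formulas are invariant under passing to the quotient;
\item in the quotient, $\approx$ is genuine equality.
\end{itemize}

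You can then finish in either of two ways.
\begin{itemize}
\item Take the usual with-equality formula describing the finite structure $(S/{\approx},[\bar a])$ up to isomorphism, and replace $=$ by $\approx$.
\item Keep your Hintikka formulas, but use rank $m=|S|+n$ instead of a stabilisation index. Suppose $S^\prime\models\chi^m_{\bar a}[\bar b]$, and let Spoiler pick representatives of all $\approx$-classes of $S$. At least $n$ rounds then remain, so Duplicator's answers induce a well-defined, injective, relation-preserving map $S/{\approx}\to S^\prime/{\approx}$ with $[\bar a]\mapsto[\bar b]$. The map is surjective: one more Spoiler move in $S^\prime$, followed by the $n-1$ remaining rounds, forces every element of $S^\prime$ to be $\approx$-equivalent to some answer.
\end{itemize}
Hence $(S/{\approx},[\bar a])\cong(S^\prime/{\approx},[\bar b])$. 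This gives equal $\mathrm{FO}_{wo=}$-types, and in fact for arbitrary, not only finite, $S^\prime$.
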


The formula $\chi$ in Lemma \ref{lem-isolating-formula} is called an {\em isolating formula} of the type $\mathit{tp}_S^{\mathrm{FO}_{wo=}}(\bar{a})$. Given a formula $\chi$ which isolates the $\mathrm{FO}_{wo=}$-type of a critical tuple $\bar{a}$ in a critical structure $S|_W$, we can write an equivalent term $t_\chi$ which evaluates to true in $S$ only for those tuples which have the same $\mathrm{FO}_{wo=}$-type as $\bar{a}$ in $S|_W$.

\begin{lemma}[{\cite[Lemma~7.7]{ferrarotti:tcs2016}}]
Let $S$ be a state of a parallel algorithm $\mathcal{A}$ of signature $\Sigma$ with a bounded exploration witness $W$ and let $\bar{a}$ be an $r$-tuple in $(S|_W)^r$ and $\chi$ an isolating formula for the $\mathrm{FO}_{wo=}$-type of $\bar{a}$ in $S|_W$. Then there is a term $t_\chi$ over $\Sigma$ such that for every $\bar{b} \in (S|_W)^r$ we have
\[\mathrm{val}_{S,\mu[\bar{x}\mapsto\bar{b}]}(t_\chi) = \texttt{true} \quad \text{iff} \quad S|_W \models \chi[\bar{b}].\]
\end{lemma}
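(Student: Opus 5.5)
We prove the statement (the restated Lemma~7.7 of \cite{ferrarotti:tcs2016}) by structural induction on the isolating formula $\chi$, translating it into a Boolean term $t_\chi$ over $\Sigma$ plus the background. The key observation is that $S|_W$ is \emph{defined} from $S$ by the multiset comprehension terms in $W$. Hence every atomic fact about $S|_W$ can be expressed by a term that evaluates $\alpha_i$ in $S$ and tests membership. Throughout, we work with the $\mathrm{FO}_{wo=}$ formula $\chi$ given by Lemma~\ref{lem-isolating-formula}. Since $\chi$ contains no equality atoms, its only atomic formulas are of the form $R_{\alpha_i}(z_0,\dots,z_{n_i+1})$.

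\textbf{Atomic case.} For an atom $R_{\alpha_i}(z_0,\dots,z_{n_i},z)$ with $\alpha_i=\langle (t_0,\dots,t_{n_i})\mid\varphi_i(x_1,\dots,x_{r_i})\rangle$, the main obstacle is the last ``witness'' coordinate $j\cdot b_0\cdots b_{n_i}\cdot a_1\cdots a_{r_i}$. This coordinate is only fixed up to a bijection with the tuples $\bar a$ satisfying $\varphi_i$ and yielding $(b_0,\dots,b_{n_i})$. We resolve the freedom by choosing these witness elements canonically as background values, namely the pairs $(i,(\bar b,\bar a))$; the base set is closed under pairing, so these exist. Then $R_{\alpha_i}(\bar z,z)$ is expressed by the term
\[ \pi_1(z)=i \wedge \pi_1(\pi_2(z))=(z_0,\dots,z_{n_i}) \wedge \varphi_i(\pi_2(\pi_2(z))) \wedge t_k(\pi_2(\pi_2(z)))=z_k \ (0\le k\le n_i). \]
Here $\varphi_i$ and the $t_k$ are instantiated by substituting the components of $\pi_2(\pi_2(z))$ for $x_1,\dots,x_{r_i}$. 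The equality predicate used here is the background one on $S$, not an atom of $\chi$, so this is legitimate. Since $\mathrm{FO}_{wo=}$-types are invariant under such relabelling, $\chi$ is unaffected by this choice.

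\textbf{Boolean connectives and quantifiers.} Connectives translate directly into the background Boolean operations. For a quantifier $\exists z\,\chi'$, we relativise to the domain of $S|_W$, which is finite and definable from $W$. We form the multiset term $D$ collecting, via $\bigoplus$ and \textit{AsSet}, all components of all tuples of all $R_{\alpha_i}$. This term $D$ is built from the comprehension terms $\langle (t_0,\dots,t_{n_i},\text{witness}) \mid \varphi_i\rangle$, with the witness coordinates written as pairs as above. Then $\exists z\,\chi'$ becomes
\[ \langle \textbf{true} \mid z\in D \wedge t_{\chi'}(z)\rangle \neq \langle\rangle, \]
and $\forall z$ is handled dually. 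Finiteness of $D$ holds because $W$ evaluates to finite multisets in every state, which is the standard meta-finite assumption.

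\textbf{Conclusion and obstacle.} A final induction shows $\mathrm{val}_{S,\mu[\bar x\mapsto\bar b]}(t_\chi)=\texttt{true}$ iff $S|_W\models\chi[\bar b]$ for all $\bar b$ in the domain. The expected difficulty lies entirely in making the witness coordinate and the domain of $S|_W$ term-definable uniformly in $S$, so that $t_\chi$ does not depend on $S$. The canonical pairing encoding above is what I would try first. If the structure $S|_W$ has been fixed with arbitrary witness elements, I would instead quantify over the tuples $\bar a$ inside a multiset comprehension and use that the type of $\bar b$ only depends on multiplicities, which multiset terms preserve.
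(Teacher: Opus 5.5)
Your argument is correct. The paper, however, contains no proof of this statement to compare it with: it imports the lemma verbatim from \cite{ferrarotti:tcs2016}, and it describes the last coordinate of $R_{\alpha_i}$ only abstractly, as some value of $B$ in one-one correspondence with the tuples $(a_1,\dots,a_{r_i})$ for fixed $b_0,\dots,b_{n_i}$. Your structural induction is a self-contained proof: atoms are unfolded through the defining comprehension $\alpha_i$, connectives become background Booleans, and quantifiers are relativised to the finite domain $D$ of $S|_W$, with membership expressible e.g.\ as $\textit{AsSet}(D\oplus\langle z\rangle)=\textit{AsSet}(D)$. It also makes explicit two points the citation hides. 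First, the canonical pairing encoding is not a mere convenience. For an arbitrary choice of witness values the claim can fail for tuples $\bar b$ containing them: two candidates swapped by an automorphism of $S$ are never separated by a term, yet $\chi$ may separate them. Term-definable witnesses are therefore what make ``for every $\bar b\in(S|_W)^r$'' true. Second, your $t_\chi$ depends only on $\chi$ and $W$, so the equivalence holds in every state $S'$ with respect to $S'|_W$. This is the uniformity that Lemmas~\ref{lem-pa-coincidence} and~\ref{lem-wp-similarity} tacitly use when $r_S$ is evaluated in other states. Your $D$ also supplies the guard $\bigwedge_k x_k\in D$ needed when $t_\chi$ is the condition of a \textbf{forall} ranging over all of $B$. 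For the same reason, drop your remark that $\mathrm{FO}_{wo=}$-types are invariant under the relabelling, and drop the closing fallback. The relabelling is an isomorphism only if the original witness values are injective across different $i$ and $\bar b$ and disjoint from the other critical values, which the paper's definition does not require. You should simply take your encoding as the definition of $S|_W$.
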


The term $t_\chi$ in Lemma \ref{lem-isolating-term} is called an {\em isolating term} of the type $\mathit{tp}_S^{\mathrm{FO}_{wo=}}(\bar{a})$. With isolating terms we easily obtain a rule $r_S$ with $\Delta_{r_S}(S) = \Delta_{\mathcal{A}}(S)$. The rule $r_S$ is a bounded parallel combination of parallel rules of the form
\[ \textbf{forall } x_0, x_1, \ldots, x_r : t^{\bar{a}}_{\chi}(x_0, x_1, \ldots, x_r) \textbf{ do } f(x_1, \ldots, x_r) := x_0 \]
for all $\bar{a} = (a_0, a_1, \ldots, a_r) \in (S|_{W})^{r+1}$ with $(f, (a_1, \ldots, a_r), a_0) \in \Delta_{\mathcal{A}}(S)$, where $t^{\bar{a}}_{\chi}(x_0, x_1, \ldots, x_r)$ is an isolating term of the critical tuple $\bar{a}$.

\begin{lemma}[{\cite[Cor.~7.8]{ferrarotti:tcs2016}}]
If $S$ is a state of the parallel algorithm $\mathcal{A}$ and $W$ is a bounded exploration witness for $\mathcal{A}$, then $\Delta_{r_S}(S) = \Delta_{\mathcal{A}}(S)$.
\end{lemma}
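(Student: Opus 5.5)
The plan is to prove the two inclusions $\Delta_{\mathcal{A}}(S) \subseteq \Delta_{r_S}(S)$ and $\Delta_{r_S}(S) \subseteq \Delta_{\mathcal{A}}(S)$ separately. First I fix what $r_S$ is. Since $\Delta_{\mathcal{A}}(S)$ is finite, only finitely many update-defining tuples $\bar{a} = (a_0, \dots, a_r)$ occur. So $r_S$ is a bounded parallel composition of finitely many parallel rules
\[ \textbf{forall } x_0, \dots, x_r : t^{\bar{a}}_\chi(x_0,\dots,x_r) \textbf{ do } f(x_1,\dots,x_r) := x_0 . \]
None of these rules contains a choice. Hence, by the semantics of parallel and assignment rules in Section~\ref{ssec:rules}, $r_S$ yields exactly one update set in $S$. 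That update set is the union, over all such $\bar{a}$, of the sets
\[ \{ (f,(b_1,\dots,b_r),b_0) \mid \text{val}_{S,\mu[\bar{x}\mapsto\bar{b}]}(t^{\bar{a}}_\chi) = \textbf{true} \} . \]

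Before comparing the two sides I have to make sure the range of each \textbf{forall} consists only of critical tuples. Lemma~\ref{lem-isolating-term} controls $t^{\bar{a}}_\chi$ only on tuples from $(S|_W)^{r+1}$. I would therefore replace $t^{\bar{a}}_\chi$ by its conjunction with a term expressing that every $x_i$ is a critical value. This is possible because criticality is expressible by a term: $x_i$ must occur in some tuple of some $\text{val}_S(\alpha)$, $\alpha \in W$, and membership in such a multiset can be tested with the background multiset operations on the finitely many witness terms. This same step also gives finiteness of the \textbf{forall} range, as the semantics of parallel rules requires, since there are only finitely many critical values. I expect this bookkeeping to be the main technical obstacle: the term must agree with $t^{\bar{a}}_\chi$ on critical tuples and be false on all others.

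For $\Delta_{\mathcal{A}}(S) \subseteq \Delta_{r_S}(S)$, take an update $(f,(a_1,\dots,a_r),a_0) \in \Delta_{\mathcal{A}}(S)$. By \cite[Lemma~7.1]{ferrarotti:tcs2016}, $\bar{a}$ is a critical tuple, so it lies in $(S|_W)^{r+1}$. Trivially $\bar{a}$ has its own $\mathrm{FO}_{wo=}$-type, so $S|_W \models \chi[\bar{a}]$ by Lemma~\ref{lem-isolating-formula}. Lemma~\ref{lem-isolating-term} then gives that $t^{\bar{a}}_\chi$ evaluates to \textbf{true} at $\bar{a}$. Hence the rule for $\bar{a}$ contributes the update $(f,(a_1,\dots,a_r),a_0)$.

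For $\Delta_{r_S}(S) \subseteq \Delta_{\mathcal{A}}(S)$, every update produced by $r_S$ comes from some rule for $\bar{a}$ and some $\bar{b}$ at which the guard is true. By the criticality conjunct, $\bar{b}$ is critical. By Lemma~\ref{lem-isolating-term}, $S|_W \models \chi[\bar{b}]$, and by Lemma~\ref{lem-isolating-formula} this means $\mathit{tp}^{\mathrm{FO}_{wo=}}_{S|_W}(\bar{b}) = \mathit{tp}^{\mathrm{FO}_{wo=}}_{S|_W}(\bar{a})$. Lemma~\ref{lem-indistinguishable-update} therefore gives $(f,(b_1,\dots,b_r),b_0) \in \Delta_{\mathcal{A}}(S)$.

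The two inclusions give $\Delta_{r_S}(S) = \Delta_{\mathcal{A}}(S)$. Consistency of the update set of $r_S$ then follows, since it equals the consistent set $\Delta_{\mathcal{A}}(S)$.
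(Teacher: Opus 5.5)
Your proof is correct and follows essentially the same route as the argument behind \cite[Cor.~7.8]{ferrarotti:tcs2016}, which the paper cites without proof. That route is: Lemma~7.1 plus $\chi \in \mathit{tp}^{\mathrm{FO}_{wo=}}_{S|_W}(\bar a)$ and Lemma~7.7 give $\Delta_{\mathcal{A}}(S)\subseteq\Delta_{r_S}(S)$, and Lemmas~7.7, 7.6 and 7.5 give the converse. The criticality conjunct you add to each guard is a sound refinement, not a change of approach. It makes explicit that each \textbf{forall} ranges over finitely many critical tuples, which is exactly where Lemma~7.7 controls $t^{\bar a}_\chi$ and Lemma~7.5 applies. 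The paper's formulation leaves this point implicit.
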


We will exploit that these lemmata only depend on the bounded exploration witness $W$, i.e. a set of multiset comprehension terms, and a critical tuple $\bar{a}$ for it. This generality is what we need for probabilistic algorithms. 

From now on let $\mathcal{A}$ denote again a probalistic algorithm. We fix a bounded exploration witness $W \cup P$ for $\mathcal{A}$, and without loss of generality we can assume that it is closed under subterms. 

First consider a single state $S$ with base set $B$, the set of update sets $\boldsymbol{\Delta}_{\mathcal{A}}(S)$, and the probability measure $d_S$ on it. For $\alpha \in W$ the evaluation in $S$ yields a multiset of multisets $U$, where $U$ contains tuples $(b_0, \dots, b_n)$ of values $b_i \in B$. Each such value $b_i$ is called a {\em $W$-critical value} in $S$, and a {\em $W$-critical tuple} is a tuple of $W$-critical values. Analogously, the evaluation of a p-witness term $\varrho \in P$ in $S$ yields a multiset of tuples $(b_1, \dots, b_m, p)$ of values $b_i \in B$ and $p \in [0,1]$. The values $b_i$ are called {\em $P$-critical}, and a tuple of such values is called a {\em $P$-critical tuple}.

First we can show that the values appearing in the updates in one of the update sets in $\boldsymbol{\Delta}_{\mathcal{A}}(S)$ are again critical values.

\begin{lemma}\label{lem-w-critical}
For every state $S$, if $((f,(a_1, \dots, a_n)), a_0)$ is an update in $\Delta \in \boldsymbol{\Delta}_{\mathcal{A}}(S)$, then $\bar{a} = (a_0, \dots, a_n)$ is a $W$-critical tuple in $S$.
\end{lemma}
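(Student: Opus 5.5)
We adapt the standard argument from the sequential and parallel theses (Lemma~7.1 of \cite{ferrarotti:tcs2016}), which proceeds by contradiction via an isomorphic copy of the state. Fix $S$ with base set $B$, an update set $\Delta \in \boldsymbol{\Delta}_{\mathcal{A}}(S)$, and an update $((f,(a_1,\dots,a_n)),a_0) \in \Delta$. Suppose some $a_i$ is not $W$-critical in $S$.

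We build a state $S^\ast$ as follows. Take a fresh element $b \notin B$ and let $\sigma$ be the bijection from $B$ onto $(B\setminus\{a_i\}) \cup \{b\}$ that swaps $a_i$ for $b$ and fixes every other element. Let $S^\ast = \sigma(S)$. By Postulate~\ref{p-state}, $S^\ast$ is a state and $d_{S^\ast}(\sigma(S')) = d_S(S')$. Transporting update sets along $\sigma$ then gives $\boldsymbol{\Delta}_{\mathcal{A}}(S^\ast) = \{\sigma(\Delta') \mid \Delta' \in \boldsymbol{\Delta}_{\mathcal{A}}(S)\}$, with the same probabilities. Note that $a_i$ does not occur in $S^\ast$ at all.

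The next step is to show that $S$ and $S^\ast$ coincide on $W \cup P$. Since $W \cup P$ is closed under subterms, each value of a witness or p-witness term in $S$ is built only from critical values and background values (pairing, multisets, truth values, probabilities). The map $\sigma$ fixes all of these except $a_i$, which by assumption does not occur. Hence $\text{val}_{S^\ast}(\alpha) = \sigma(\text{val}_S(\alpha)) = \text{val}_S(\alpha)$ for every $\alpha \in W$.

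This is the delicate point. It needs the following facts:
\begin{itemize}
\item Background elements are fixed by any isomorphism, in particular truth values, \textit{undef\/}, and probability values, which are named by the background constants and operations.
\item The values of the p-witness terms $\varrho \in P$ are also unchanged. Their tuples consist of $P$-critical values and probabilities, and because $W$ and $P$ are aligned, every $P$-critical value already occurs as a variable binding in a witness term. We treat the outer choice variables $y_j$ of $\alpha$ as contributing critical values; this is how the double comprehension should be read here, and we make it explicit.
\item The construction of $S^\ast$ also needs care. We do not simply rename $a_i$ inside $B$, which would change nothing. Instead we take an isomorphic copy of $S$ in which $a_i$ is replaced by a fresh element. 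This is the same device Gurevich uses in the sequential case.
\end{itemize}

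By Postulate~\ref{p-bew}, we then get $\boldsymbol{\Delta}_{\mathcal{A}}(S^\ast) = \boldsymbol{\Delta}_{\mathcal{A}}(S)$. So $\Delta$ belongs to $\boldsymbol{\Delta}_{\mathcal{A}}(S^\ast)$, and $\Delta$ contains an update mentioning $a_i$. On the other hand, every update set of $S^\ast$ has the form $\sigma(\Delta')$ and mentions only elements of $\sigma(B)$, which does not contain $a_i$. This contradiction shows that each $a_i$ is $W$-critical, so $\bar a$ is a $W$-critical tuple.
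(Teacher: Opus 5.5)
Your argument follows the paper's: replace a non-critical $a_i$ by a fresh element, note that the copy is an isomorphic state, check that it agrees with $S$ on the bounded exploration witness, apply Postulate~\ref{p-bew}, and get a contradiction because $a_i$ no longer occurs. You also correctly spot what the paper's proof skips. It checks agreement only on $W$, but Postulate~\ref{p-bew} needs agreement on $W\cup P$.

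Your patch for this changes a definition rather than supplying a proof step. In the paper, the $W$-critical values are only the entries of the inner tuples in $\text{val}_S(\alpha)$. The outer bindings $\bar b$ with $\psi(\bar b)$ are the $P$-critical values, and the paper merges the two only later, via $W^p$. Under the paper's definition your $P$-step fails, and so does the statement. Here is an example:
\begin{itemize}
\item Take a nullary $g$, a unary predicate $R$, and weights $w>0$ on $R$ with $\sum_{R(y)}w(y)=1$.
\item Let the algorithm choose $y$ with $R(y)$ with probability $w(y)$ and set $g:=y$.
\item Then $W=\{\langle\langle(g)\mid\textbf{true}\rangle\mid R(y)\rangle\}$ and $P=\{\langle(y,w(y))\mid R(y)\rangle\}$ form a bounded exploration witness.
\item In a state with $g=\textit{undef}$, a chosen atom $y_0$ is an update value but is not $W$-critical; the only $W$-critical value is $\textit{undef}$.
\item Renaming $y_0$ changes $\text{val}_S(\varrho)$, so Postulate~\ref{p-bew} cannot be invoked.
\end{itemize}
What you actually prove is the lemma for ``$W\cup P$-critical'' values, or the lemma under the explicit assumption that $W$ contains, for each $\alpha$, a term such as $\langle\langle(y_1,\dots,y_q)\mid\textbf{true}\rangle\mid\psi\rangle$. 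That assumption has to be stated as such; it cannot be presented as the right reading of the double comprehension.

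Even the enlarged version leaves the weight components open. You argue that the probabilities inside $\text{val}_S(\varrho)$ are unchanged because background values are fixed by every isomorphism. But your $\sigma$ is itself an isomorphism that moves $a_i$, and nothing rules out $a_i\in[0,1]$. Suppose $a_i$ equals a weight occurring in $\text{val}_S(\varrho)$. If states must contain $[0,1]$ literally, then $S^\ast$ is not a legitimate state. Otherwise $\text{val}_{S^\ast}(\varrho)\neq\text{val}_S(\varrho)$. Either way Postulate~\ref{p-bew} does not apply.

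Extend the example by the update $h:=w(y)$ and take $W=\{\langle\langle(y,g,h)\mid\textbf{true}\rangle\mid R(y)\rangle\}$. In a state with $g=h=\textit{undef}$, the update value $w(y_0)$ is critical in neither sense. So the weights $p(\bar y)$ must also be exposed in $W$ (or counted as critical), and background-valued $a_i$ need separate treatment. These are hypotheses the lemma itself is missing, and the paper's one-line proof has the same holes. They are not defects specific to your route.
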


\begin{proof}
Assume that $a_i$ is not $W$-critical. Then choose a structure $S_1$ by replacing $a_i$ by a fresh value $b$ without changing anything else. Then $S_1$ is isomorphic to $S$ and thus a state by the Postulate \ref{p-state}.

Let $\alpha \in W$ be a critical term. Then we must have $\text{val}_S(\alpha) = \text{val}_{S_1}(\alpha)$, hence $S$ and $S_1$ coincide on $W$. Postulate \ref{p-bew} implies $\boldsymbol{\Delta}_{\mathcal{A}}(S) = \boldsymbol{\Delta}_{\mathcal{A}}(S_1)$ and hence $(f(a_1,\dots,a_n),a_0) \in \Delta \in \boldsymbol{\Delta}_{\mathcal{A}}(S_1)$. However, $a_i$ does not appear in the structure $S_1$ and thus cannot appear in $\Delta \in \boldsymbol{\Delta}_{\mathcal{A}}(S_1)$, which gives a contradiction.
\end{proof}

Furthermore, the slicing condition in Postulate \ref{p-bew} ensures that there exists a set $W^\prime$ of witness terms such that for the corresponding probabilistic algorithm $\mathcal{A}^\prime$ we get $\boldsymbol{\Delta}_{\mathcal{A}^\prime}(S) = \{ \Delta \}$. If $\alpha = \langle U(y_1,\dots,y_q) \mid \psi(y_1,\dots,y_q) \rangle \in W$ is sliced by $\alpha^\prime = \langle U(y_1,\dots,y_q) \mid \psi(y_1,\dots,y_q) \wedge \chi(y_1,\dots,y_q) \rangle \in W^\prime$, then we get a $q$-tuple $\bar{b}_\alpha$, such that the concatenation $\bar{b}$ of all these tuples for $\alpha \in W$ such that $\chi$ is needed characterises the reduction to $\Delta$ as the only update set. We call $\bar{b}$ a {\em characterising tuple} for the update set $\Delta \in \boldsymbol{\Delta}_{\mathcal{A}}(S)$. As $\psi$ and $\chi$ have the same free variables, characterising tuples are $P$-critical. Furthermore, $\bar{b}$ is associated with a probability $d(\bar{b}) \in [0,1]$, which is the product of the probabilities $p$ resulting from the sliced witness terms.

Then we can think of the step of the algorithm in state $S$ being the sequence of two substeps. In the first step the different characterising tuples for the possible successor states are determined in parallel and stored in some temporary locations. Then in a second step one of these characterising tuples is chosen, and the corresponding update set is built and applied. 

The first step corresponds to a parallel algorithm $\mathcal{A}^c$ with bounded exploration witness $P$, such that the characterising tuples $\bar{b}$ of any update set are $P$-critical tuples. In the second step we can use modified locations: if $\bar{a}$ is a critical tuple with respect to $W$ that defines an update with function symbol $f$ in $\Delta \in \boldsymbol{\Delta}_{\mathcal{A}}(S)$ and $\bar{b}$ characterises $\Delta$, then we use a new function symbol $\tilde{f}$ such that $\bar{b}$ is included in the locations used in $\Delta$. This second step (for fixed $\bar{b}$) corresponds to a parallel algorithm $\mathcal{A}^p$ with bounded exploration witness
\begin{align*}
W^p &= \bigg\{ \Big\langle (t_1, \dots, t_n, y_1, \dots, y_p) \mid \varphi(x_1, \dots, x_k) \wedge \psi(y_1, \dots, y_q) \Big \rangle \mid \\
&\qquad \langle \langle (t_1, \dots, t_n) \mid \varphi(x_1, \dots, x_k) \rangle \mid \psi(y_1, \dots, y_q) \rangle \in W \bigg\} 
\end{align*}
and the concatenated tuple $\bar{a}\bar{b}$ is $W^p$-critical. We then can apply the results for parallel algorithms, which give us isolating terms $t_\chi^{\bar{b}}$ and $t_\chi^{\bar{a}\bar{b}}$ for $\mathcal{A}^c$ and $\mathcal{A}^p$, respectively. With these we can define a rule
\begin{align}
& \textbf{choose}\; y_1, \dots, y_p : t_\chi^{\bar{b}}(y_1, \dots,y_q, p(y_1, \dots,y_q)) \;\textbf{with weight}\; d(\bar{b}) \notag\\
&\quad \textbf{do}\; \textbf{forall}\; x_0, \dots, x_k : t_\chi^{\bar{a}\bar{b}}(x_1,\dots,x_k, y_1, \dots,y_q) \;\textbf{do}\; f(x_1,\dots,x_k) := x_0 \label{eq-rule}
\end{align}

With these we define a rule $r_S$ as the parallel composition of all rules as in (\ref{eq-rule}) for all characterising tuples $\bar{b}$ and all update-defining tuples in the corresponding update sets. 

\begin{lemma}\label{lem-pa-rule}
If $S$ is a state of the probabilistic algorithm $\mathcal{A}$, $W$ is a bounded exploration witness for $\mathcal{A}$ and $r_S$ is the rule above with respect to $W$, then we get $\boldsymbol{\Delta}_{r_S}(S) = \boldsymbol{\Delta}_{\mathcal{A}}(S)$ and $d_{r_S,S} = d_S^{\mathcal{A}}$.
\end{lemma}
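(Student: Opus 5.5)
The plan is to compute $\boldsymbol{\Delta}_{r_S}(S)$ and $d_{r_S,S}$ directly from the pASM semantics of Section~\ref{ssec:rules}. I would first reorganise $r_S$ so that there is one outer probabilistic choice over characterising tuples, rather than several independent choices composed in parallel. For each characterising tuple $\bar{b}$ of some $\Delta\in\boldsymbol{\Delta}_{\mathcal{A}}(S)$, all the rules~(\ref{eq-rule}) that share the guard $t_\chi^{\bar{b}}$ are grouped under a single \textbf{choose}. The body of this choice is the bounded parallel composition, over all update-defining $W$-critical tuples $\bar{a}$ of $\Delta$ (Lemma~\ref{lem-w-critical}), of the inner \textbf{forall} rules.

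The concern is that a plain parallel composition of several \textbf{choose} rules would multiply independent choices and yield unions of update sets belonging to different $\bar{b}$. I therefore read ``parallel composition over $\bar{b}$'' as a single choice whose guard is the disjunction of the isolating terms. Equivalently, I would use one \textbf{choose}~$\bar{y}$ with body the conditional on $t_\chi^{\bar{b}}(\bar{y})$ for each representative $\bar{b}$. This is the form I would actually verify.

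\textbf{Step 1 (the alternatives).} Using Lemma~\ref{lem-isolating-formula} and Lemma~\ref{lem-isolating-term} applied to $\mathcal{A}^c$ with witness $P$, the tuples $\bar{y}$ satisfying the guard in $S$ are exactly the $P$-critical tuples with the same $\mathrm{FO}_{wo=}$-type as some characterising tuple. The analogue of Lemma~\ref{lem-indistinguishable-update} for $\mathcal{A}^c$ shows that every such tuple is itself characterising for some update set of $\mathcal{A}$. By Postulate~\ref{p-state}, together with the slicing clause of Postulate~\ref{p-bew}, equal type also gives equal slicing probability.

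\textbf{Step 2 (the update set for a fixed choice).} Fix $\bar{y}=\bar{b}$ characterising $\Delta$. The inner rule is a parallel rule of the deterministic algorithm $\mathcal{A}^p$ with witness $W^p$. Lemma~\ref{lem-indistinguishable-update} and Lemma~\ref{lem-rule} for $\mathcal{A}^p$ give that its unique update set is exactly the set of $(f,\bar{x},x_0)$ with $\bar{a}\bar{b}$ of the right type. Because $\bar{b}$ is part of the type, this set equals $\Delta$. Completeness follows because every update of $\Delta$ is covered by some $\bar{a}$. Soundness follows because type-equivalent tuples also give updates of $\Delta$, by the lemma applied to $\mathcal{A}^p$ with $\bar{b}$ fixed. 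It follows that $\boldsymbol{\Delta}_{r_S}(S)=\boldsymbol{\Delta}_{\mathcal{A}}(S)$.

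\textbf{Step 3 (probabilities).} By the choice-rule semantics,
\[
d_{r_S,S}(\Delta)=\sum_{\bar{b}\text{ char.\ for }\Delta} d(\bar{b})\Big/\sum_{\bar{b}'} d(\bar{b}').
\]
I then need two facts. The first is that $d(\bar{b})$ equals the probability that $\mathcal{A}$ assigns to the slice selected by $\bar{b}$, which is the product of the p-witness values. The second is that these slices partition the alternatives determined by $\psi$. Given both, the denominator is $1$ by the normalisation requirement on p-witness terms, and the numerator is $d^{\mathcal{A}}_S(\Delta)$. The numerator identity comes from the conditionalisation formula in Postulate~\ref{p-bew}: slicing to the single update set $\Delta$ yields $d^{\mathcal{A}'}_S(\Delta)=1=d^{\mathcal{A}}_S(\Delta)/\sum_{\tau'(S)}d^{\mathcal{A}}_S$.

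\textbf{Main obstacle.} I expect the hardest part to be this last bookkeeping. Several characterising tuples may select the same $\Delta$, and distinct $\Delta$'s must not share a tuple, so $d(\bar{b})$ must sum correctly and nothing may be double-counted. I would try to settle this by choosing characterising tuples canonically, as the full tuple of outer variables $\bar{y}$ over all $\alpha\in W$. Then the slices $\psi\wedge(\bar{y}=\bar{b})$ are disjoint and exhaust the support of $P$, and the conditionalisation identity gives $d^{\mathcal{A}}_S(\Delta)=\sum_{\bar{b}\mapsto\Delta}d(\bar{b})$.
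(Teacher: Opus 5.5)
Your overall architecture matches the paper's. Step~2 is exactly the paper's first paragraph: the inner \textbf{forall} is built from isolating terms for $\mathcal{A}^p$ over $W^p$. Collapsing the literal parallel composition of \textbf{choose} rules into one outer choice is a sensible repair, since parallel \textbf{choose} rules would union independently chosen update sets. The trouble lies in the two probabilistic claims. This is exactly where the paper's proof only asserts that the choice rule ``yields all characterising tuples $\bar b$ together with their probabilities $d(\bar b)$''.

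\textbf{Step~1: equal type does not give equal probability.} Your claim that equal $\mathrm{FO}_{wo=}$-type gives equal probability is false, because the critical structure $S|_P$ carries no arithmetic on $[0,1]$. Take $D=\{a,b\}$, $q(a)=1/3$, $q(b)=2/3$, let $\mathcal{A}$ pick $y\in D$ with probability $q(y)$, and let $P=\{\langle (y,q(y))\mid D(y)\rangle\}$. Swapping $a\leftrightarrow b$ and $1/3\leftrightarrow 2/3$ (together with the auxiliary tag elements) is an automorphism of $S|_P$ and of $S|_W$. So both alternatives have the same type but different probabilities. Postulate~\ref{p-state} cannot be invoked, because this swap is not an isomorphism of states: isomorphisms of states respect $\times$, $+_{[0,1]}$ and $\leq$, and hence fix $[0,1]$ pointwise. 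A rule with one constant weight per type class, as in (\ref{eq-rule}), gives $1/2,1/2$ here. You must instead weight each alternative by the term $p(\bar y)$, i.e.\ the product of the p-witness terms at $\bar y$.

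\textbf{Step~3: the probability identity is not derived.} Step~3 does not yield $d^{\mathcal{A}}_S(\Delta)=\sum_{\bar b\mapsto\Delta}d(\bar b)$. For a slice whose only surviving update set is $\Delta$, the conditionalisation formula reads $1=d^{\mathcal{A}}_S(\Delta)/d^{\mathcal{A}}_S(\Delta)$, which is vacuous. More generally, that clause relates $d^{\mathcal{A}'}$ to $d^{\mathcal{A}}$ only. Nowhere does Postulate~\ref{p-bew} compare the values of the p-witness terms with $d^{\mathcal{A}}_S$.

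In the example above you may put $q$ into $W$, e.g.\ $\langle\langle (y,q(y))\mid\mathbf{true}\rangle\mid D(y)\rangle$, and let $P$ report a distribution $q'$ on $D$ unrelated to $q$. The postulate still holds, but the rule built from the $P$-values assigns the wrong probabilities.

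Your canonical slices $\psi\wedge(\bar y=\bar b)$ do not help either. A slicing condition $\chi$ must be a term in $\bar y$ alone that makes sense in every state, so it cannot name the elements of $\bar b$. The postulate also guarantees only $\boldsymbol{\Delta}^{\mathcal{A}'}(S)\subseteq\boldsymbol{\Delta}^{\mathcal{A}}(S)$, not which update sets survive a given slice.

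So the link ``the p-witness values are $\mathcal{A}$'s branch probabilities, and each positive-weight alternative determines a unique update set'' has to be stated as an explicit hypothesis. The paper tacitly assumes it. With that hypothesis and weights $p(\bar y)$, your Step~3 computation reduces to the semantics of the choice rule.
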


\begin{proof}
By definition, one update set $\Delta \in \boldsymbol{\Delta}_{r^S}(S)$ is characterised by $\bar{b}$, and $((f,(a_1,\dots,a_k)),a_0)$ is an update in $\Delta$. Then due to the construction of isolating tuples the parallel rule yields also all updates $((f,(a_1^\prime,\dots,a_k^\prime)),a_0^\prime)$, where the type of $\bar{a}^\prime\bar{b}$ (for $\bar{a}^\prime = (a_0^\prime, \dots, a_k^\prime)$) is the same as the type of $\bar{a}^\prime\bar{b}$ (with respect to $W^p$). As $r_S$ is the parallel composition of rules as defined by (\ref{eq-rule}). the rule $r_S$ yields the complete update set $\Delta$.

Analogously, due to the construction of isolating terms, the probabilistic choice rule yields all characterising tuples $\bar{b}$ for the update sets in $\boldsymbol{\Delta}_{\mathcal{A}}(S)$ together with their probabilities $d(\bar{b})$, which implies $\boldsymbol{\Delta}_{r^S}(S) = \boldsymbol{\Delta}_{\mathcal{A}}(S)$ and $d_{r_S,S} = d_S^{\mathcal{A}}$ as claimed.
\end{proof}

Now we have pASM rules $r_S$ for every state $S$ of the given PA $\mathcal{A}$, which yield the set of update sets and the associated probability measures in that state. However, there are infinitely many states $S \in \mathcal{S}$ and we need a single pASM rule to obtain a behaviourally equivalent pASM.

Therefore, we exploit again the bounded exploration witness to obtain an equivalence relation on $\mathcal{S}$ with only finitely many equivalence classes such that only the rules for representatives of each equivalence class are needed. 

For a state $S$ define an equivalence relation $\sim_S$ on the bounded exploration witness $W \cup P$ by $\alpha \sim_S \alpha^\prime$ iff $\text{val}_S(\alpha) = \text{val}_S(\alpha^\prime)$. Then call two states $S$, $S^\prime$ {\em $\mathit{WP}$-similar} iff $\sim_S = \sim_{S^\prime}$ holds. As $\sim_S$ defines a partition of $W \cup P$, there can only be finitely many different equivalence relations, and consequently the number of $\mathit{WP}$-similarity classes is also finite. Then we show the following three lemmata.

\begin{lemma}\label{lem-pa-isomorphism}
Let $S$ be a state of $\mathcal{A}$ and let $S_1, S_2$ be isomorphic states of $\mathcal{A}$. If $\boldsymbol{\Delta}_{r_S}(S_1) = \boldsymbol{\Delta}_{\mathcal{A}}(S_1)$ and $d_{r_S,S_1} = d_{S_1}^{\mathcal{A}}$ hold, then also $\boldsymbol{\Delta}_{r_S}(S_2) = \boldsymbol{\Delta}_{\mathcal{A}}(S_2)$ and $d_{r_S,S_2} = d_{S_2}^{\mathcal{A}}$ hold.
\end{lemma}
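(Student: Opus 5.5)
Let $\sigma : B_1 \to B_2$ be an isomorphism from $S_1$ to $S_2$. The strategy is to show that both sides of each claimed equation transform under $\sigma$ in the same way. Concretely, for an update set $\Delta$ over $B_1$, write $\sigma(\Delta) = \{ ((f,(\sigma(a_1),\dots,\sigma(a_n))),\sigma(a_0)) \mid ((f,(a_1,\dots,a_n)),a_0) \in \Delta \}$. We then aim to establish
\begin{align*}
\boldsymbol{\Delta}_{\mathcal{A}}(S_2) = \sigma(\boldsymbol{\Delta}_{\mathcal{A}}(S_1)), &\quad d^{\mathcal{A}}_{S_2}(\sigma(\Delta)) = d^{\mathcal{A}}_{S_1}(\Delta), \\
\boldsymbol{\Delta}_{r_S}(S_2) = \sigma(\boldsymbol{\Delta}_{r_S}(S_1)), &\quad d_{r_S,S_2}(\sigma(\Delta)) = d_{r_S,S_1}(\Delta).
\end{align*}
Once these four identities hold, the hypothesis on $S_1$ transfers directly to $S_2$, since $\sigma$ induces a bijection between update sets.

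For the algorithm side, we use Postulate~\ref{p-state}. The isomorphism $\sigma$ extends to every successor $S_1'$ of $S_1$, because successors have the same base set. The image $\sigma(S_1')$ is again a state, and $\Delta(S_2,\sigma(S_1')) = \sigma(\Delta(S_1,S_1'))$ follows directly from the construction in Lemma~\ref{lem-update-set}. The postulate's invariance condition $d_{\sigma(S)}(\sigma(S')) = d_S(S')$ then gives the probabilities. Here we read it as $d_{S_2}(\sigma(S_1')) = d_{S_1}(S_1')$, which is the intended meaning. Applying the same argument to $\sigma^{-1}$ shows that every possible successor of $S_2$ arises this way.

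For the rule side, we argue by structural induction on pASM rules, carrying a variable assignment $\zeta$ and comparing $S_1,\zeta$ with $S_2,\sigma\circ\zeta$. The base fact is that term evaluation commutes with isomorphisms, $\text{val}_{S_2,\sigma\circ\zeta}(t) = \sigma(\text{val}_{S_1,\zeta}(t))$. This requires that $\sigma$ fix the background: truth values, \textit{undef}, and probability values in $[0,1]$, so that the weights are literally equal. It also requires that $\sigma$ act on multisets and pairs elementwise. The cases then go as follows:
\begin{itemize}
\item For skip and assignment rules the claim is immediate.
\item For a conditional rule, the guard receives the same truth value in both states.
\item For forall and choose rules, the range sets satisfy $B_{S_2,\sigma\circ\zeta} = \sigma(B_{S_1,\zeta})$. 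The weights coincide, so the normalising sums $\Sigma$ are equal. The product and sum formulas for $d$ are then reindexed by the bijection $\sigma$, which preserves unions of update sets.
\end{itemize}
Applying this to the closed rule $r_S$ gives the rule-side identities.

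The main obstacle is the claim that $\sigma$ is the identity on background elements, in particular on probability values. The isomorphism notion in Section~\ref{ssec:abstract-states} only requires $\sigma(\textit{undef}) = \textit{undef}$. To settle this, I would first argue that the background operations pin these values down. The constants $0,1$, \textbf{true} and \textbf{false} are fixed, and the field-like operations $\times$, $+_{[0,1]}$ and $\le$ on $[0,1]$ are preserved; order-preservation together with truncated addition forces $\sigma$ to fix $[0,1]$ pointwise, since $[0,1]$ with this structure is rigid. If that argument proves too delicate, I would instead stipulate, as is standard in the ASM literature, that isomorphisms are the identity on the background. With the background fixed, the remaining steps are routine bookkeeping.
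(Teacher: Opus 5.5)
Your proposal is correct and takes essentially the paper's route: transport update sets and probability measures along $\sigma$ on both the algorithm side and the rule side, so that the equalities assumed at $S_1$ carry over to $S_2$. Your version is more complete than the paper's two-line argument, which only asserts $\sigma(\boldsymbol{\Delta}_{\mathcal{A}}(S_1)) = \boldsymbol{\Delta}_{\mathcal{A}}(S_2)$ and the invariance of $d_{r_S}$ without justification. You supply all four identities, the structural induction over rules, and the point that $\sigma$ must fix $[0,1]$ pointwise for the weights to agree; your rigidity argument for that is sound once one notes that $\sigma$ maps $[0,1]$ onto itself.
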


\begin{proof}
Let $\sigma$ be an isomorphism with $\sigma(S_1) = S_2$. As $\sigma$ is a bijection between the base sets $B_1$ and $B_2$ of the states $S_1$ and $S_2$, respectively, it can be extended to locations, updates, update sets and also sets of update sets, hence $\sigma(\boldsymbol{\Delta}_{\mathcal{A}}(S_1)) = \boldsymbol{\Delta}_{\mathcal{A}}(S_2)$ holds, and $d_{r_S,S_1}(S^\prime) = d_{r_S,S_2}(\sigma(S^\prime))$ holds for all states $S^\prime$ with base set $B$. This immediately implies the claim.
\end{proof}

\begin{lemma}\label{lem-pa-coincidence}
If $S, S^\prime$ are states of the probabilistic algorithm $\mathcal{A}$, which coincide on the bounded exploration witness $W \cup P$, then we get $\boldsymbol{\Delta}_{r_S}(S^\prime) = \boldsymbol{\Delta}_{\mathcal{A}}(S^\prime)$ and $d_{r_S,S^\prime} = d_{S^\prime}^{\mathcal{A}}$.
\end{lemma}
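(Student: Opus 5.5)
The plan is to compare two quantities that both depend only on the values of $W \cup P$, and to chain the equalities through $S$ itself. The first is the rule semantics $\boldsymbol{\Delta}_{r_S}(\cdot)$, $d_{r_S,\cdot}$; the second is the algorithm's $\boldsymbol{\Delta}_{\mathcal{A}}(\cdot)$, $d^{\mathcal{A}}_\cdot$. By Lemma~\ref{lem-pa-rule} we already have $\boldsymbol{\Delta}_{r_S}(S) = \boldsymbol{\Delta}_{\mathcal{A}}(S)$ and $d_{r_S,S} = d^{\mathcal{A}}_S$. Since $S$ and $S^\prime$ coincide on $W \cup P$ and have the same base set, the first part of Postulate~\ref{p-bew} gives $\boldsymbol{\Delta}_{\mathcal{A}}(S^\prime) = \boldsymbol{\Delta}_{\mathcal{A}}(S)$ and $d^{\mathcal{A}}_{S^\prime} = d^{\mathcal{A}}_S$ as measures on update sets. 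It therefore suffices to show that
\[ \boldsymbol{\Delta}_{r_S}(S^\prime) = \boldsymbol{\Delta}_{r_S}(S) \quad\text{and}\quad d_{r_S,S^\prime} = d_{r_S,S}. \]
If $S$ and $S^\prime$ have different base sets, I would first replace $S^\prime$ by an isomorphic copy and then transfer the result back via Lemma~\ref{lem-pa-isomorphism}.

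For the displayed equalities I would inspect the shape of $r_S$. It is a bounded parallel composition of rules of the form~(\ref{eq-rule}). Each such rule uses the following ingredients:
\begin{itemize}
\item a choice condition $t^{\bar b}_\chi$, the isolating term for the $P$-critical tuple $\bar b$ with respect to $S|_P$;
\item a constant weight $d(\bar b)$;
\item a parallel condition $t^{\bar a\bar b}_\chi$, the isolating term with respect to $S|_{W^p}$;
\item an assignment whose arguments are variables only.
\end{itemize}
The key step is to show that the relational structures $S|_P$ and $S^\prime|_P$ coincide, and likewise $S|_{W^p}$ and $S^\prime|_{W^p}$. Each relation $R_\alpha$ is defined from $\text{val}(\alpha)$ together with the witnessing variable tuples. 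Because the terms agree as multisets, the relations agree up to the choice of the auxiliary elements $j\cdot\bar b\cdot\bar a$, and I would fix these identically in both structures. For $W^p$ I would note that its members are flattenings of the double comprehensions in $W$, so their values are determined by the values of $W$ (closure under subterms helps here). Hence the critical values coincide, and for every tuple of critical values the isolating terms evaluate identically in $S$ and $S^\prime$ by Lemma~\ref{lem-isolating-term}. Off the critical values both terms are false, since the isolating formulas only speak about the domain of the critical structure.

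Once the conditions coincide, I would argue by structural induction over the pASM semantics of Section~\ref{ssec:rules}. The sets $B_{S,\sigma}$ of the choice rules are equal, with equal weights and hence equal normalisers $\Sigma_{S,\sigma}$. The index sets of the inner \textbf{forall} rules are equal. The assignments produce the same updates because they only read variables. Therefore every component yields the same set of update sets with the same distribution, and the product-and-sum formulas for the parallel composition give equal results.

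I expect the main obstacle to be the claim that coincidence on $W \cup P$ forces identical critical structures $S|_P$ and $S|_{W^p}$, including the auxiliary multiplicity elements and the $W^p$ terms, which are not literally in $W$. I would handle this first: I would argue that multiplicities are recoverable from the multiset values, and I would enlarge the witness by $W^p$ if necessary. This enlargement is harmless because the bounded exploration property is preserved under adding terms.
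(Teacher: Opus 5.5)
Your route differs from the paper's. You keep $r_S$ fixed, show that it evaluates identically in $S$ and $S'$, and chain $\boldsymbol{\Delta}_{r_S}(S') = \boldsymbol{\Delta}_{r_S}(S) = \boldsymbol{\Delta}_{\mathcal{A}}(S) = \boldsymbol{\Delta}_{\mathcal{A}}(S')$, which is the Gurevich-style argument. The paper instead applies Lemma~\ref{lem-pa-rule} at $S'$ and argues that the construction returns literally the same rule, $r_S = r_{S'}$.

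Your version does not need the construction of $r_S$ (characterising tuples, isolating formulas, auxiliary elements) to be canonical. In exchange, since \textbf{forall} and \textbf{choose} range over the whole base set of $S'$, you need the isolating terms to be false on non-critical tuples. Lemma~\ref{lem-isolating-term} does not give you this: it only concerns tuples from the critical domain, and a naive translation of a negated atom is true at non-critical elements. So the property must be built into $t_\chi$ by relativising it to critical values. Your detour through an isomorphic copy of $S'$ on the base set of $S$ is unnecessary, since Postulate~\ref{p-bew} needs no common base set and none is assumed. It is also in general impossible without destroying coincidence on $W \cup P$.

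The genuine gap is your key step for $W^p$. For $P$ the step is fine: the tuples of $\varrho$ contain all its bound variables, so $\text{val}_S(\varrho)$ determines $S|_P$ up to auxiliary elements. But the values of $W^p$ are not determined by those of $W$, nor by those of $W \cup P$. The value $\text{val}_S(\alpha)$ is a multiset of unlabelled inner multisets, so it forgets which $\bar y$ produced which of them. $P$ records the $\bar y$ but not this link.

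Here is a counterexample. Let $\alpha = \langle \langle (g(y)) \mid \mathbf{true} \rangle \mid \psi(y) \rangle$, with $\psi$ true exactly at $c_1, c_2$ and weight $\tfrac12$ on each. Let $S$ have $g(c_1) = u$, $g(c_2) = v$, and let $S'$ differ only in $g(c_1) = v$, $g(c_2) = u$.
\begin{itemize}
\item $\alpha$ and $\varrho$ have the same values in both states.
\item Postulate~\ref{p-bew} is respected: for the algorithm that chooses $y$ and sets $h := g(y)$, both states yield $\{h := u\}$ and $\{h := v\}$ with probability $\tfrac12$ each.
\item Yet $\langle (g(y), y) \mid \psi(y) \rangle$ evaluates to $\langle (u,c_1),(v,c_2) \rangle$ in $S$ and to $\langle (v,c_1),(u,c_2) \rangle$ in $S'$.
\end{itemize}
Hence $S|_{W^p} \neq S'|_{W^p}$ and the type of $(u,c_1)$ changes. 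The branch $y = c_1$ of $r_S$ then yields $h := u$ in $S$ but not in $S'$, so the branch-by-branch comparison underlying your induction breaks. The paper's proof asserts the same coincidence of $W^p$-types without justification, so this gap is shared with the paper rather than specific to your route.

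Your fallback of enlarging the witness ``if necessary'' inside this lemma does not prove the statement as given, for two reasons. The hypothesis only provides coincidence on the witness already fixed. Moreover, bare $W^p$-terms are not admissible witness terms, since they are not double comprehensions aligned with a p-witness term.

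The repair belongs where the witness is fixed, alongside closure under subterms. For each $\alpha \in W$, add the labelled variant $\langle \langle (t_0,\dots,t_n,y_1,\dots,y_q) \mid \varphi \rangle \mid \psi \rangle$, which is aligned with the same $\varrho$. This preserves the first clause of Postulate~\ref{p-bew}, because coincidence on more terms is a stronger hypothesis. The corresponding $W^p$-value is the general sum $\bigoplus$ of the variant's value. With that convention, $S|_{W^p}$ and $S'|_{W^p}$ coincide up to auxiliary elements, and the rest of your argument goes through.
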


\begin{proof}
By Postulate \ref{p-bew} we have $\boldsymbol{\Delta}_{\mathcal{A}}(S) = \boldsymbol{\Delta}_{\mathcal{A}}(S^\prime)$ and $d_S^{\mathcal{A}} = d_{S^\prime}^{\mathcal{A}}$. With Lemma \ref{lem-pa-rule} we get $\boldsymbol{\Delta}_{r_{S^\prime}}(S^\prime) = \boldsymbol{\Delta}_{\mathcal{A}}(S^\prime)$ and $d_{r_{S^\prime},S^\prime} = d_{S^\prime}^{\mathcal{A}}$.

Furthermore, the relational structures $S \mid_W$ and $S^\prime$ must be isomorphic, and the isomorphism $\zeta$ maps critical values to critical values. Therefore, for every update $((f,(a_1,\dots,a_n),a_0)) \in \Delta \in \boldsymbol{\Delta}(S)$ the type $\mathit{tp}^{\mathrm{FO}_{wo=}}_{S|_{W^p}}(\bar{a}\bar{b})$ coincides with the type $\mathit{tp}^{\mathrm{FO}_{wo=}}_{S^\prime|_{W^p}}(\bar{a}\bar{b})$, where $\bar{b}$ is a characterising tuple for $\Delta$ and the same holds for the type of $\bar{b}$ with respect to structures $S|_P$ and $S^\prime|_P$.
Hence $r_S = r_{S^\prime}$ holds by construction of the rule $r_S$, and we get $\boldsymbol{\Delta}_{r_S}(S^\prime) = \boldsymbol{\Delta}_{r_{S^\prime}}(S^\prime)$. Combining these equalities implies the claim.
\end{proof}

\begin{lemma}[$\mathit{WP}$-similarity]
If $\mathcal{A}$ is a probabilistic algorithm with bounded exploration witness $W \cup P$ and $S$, $S^\prime$ are $\mathit{WP}$-similar states, then $\boldsymbol{\Delta}_{r_S}(S^\prime) = \boldsymbol{\Delta}_{\mathcal{A}}(S^\prime)$ and $d_{r_S,S^\prime} = d_{S^\prime}^{\mathcal{A}}$.
\end{lemma}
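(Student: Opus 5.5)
We follow the standard pattern of the sequential and parallel ASM theses: reduce $\mathit{WP}$-similarity to the two preceding lemmata, Lemma~\ref{lem-pa-isomorphism} (isomorphism invariance) and Lemma~\ref{lem-pa-coincidence} (coincidence on $W \cup P$). Given $\mathit{WP}$-similar states $S$ and $S^\prime$, the aim is to construct an intermediate state $S^{\prime\prime}$ that is isomorphic to $S^\prime$ and coincides with $S$ on $W \cup P$. Lemma~\ref{lem-pa-coincidence} then gives $\boldsymbol{\Delta}_{r_S}(S^{\prime\prime}) = \boldsymbol{\Delta}_{\mathcal{A}}(S^{\prime\prime})$ and $d_{r_S,S^{\prime\prime}} = d^{\mathcal{A}}_{S^{\prime\prime}}$. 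Lemma~\ref{lem-pa-isomorphism}, applied with $S_1 = S^{\prime\prime}$ and $S_2 = S^\prime$, transfers both equalities to $S^\prime$.

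\emph{Step 1.} We first replace $S^\prime$ by an isomorphic copy $S^\ast$ whose base set is disjoint from that of $S$; this is possible by Postulate~\ref{p-state}. The background elements must be handled separately. Truth values, \textit{undef}, and probability values in $[0,1]$ are fixed by the background, so they are kept. Pairs and multisets are built from atoms, so we rename only the atoms and extend the renaming to pairs and multisets. By Lemma~\ref{lem-pa-isomorphism}, it suffices to prove the claim for $S^\ast$.

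\emph{Step 2.} Next we build $S^{\prime\prime}$ by a Gurevich-style gluing. Its base set is that of $S^\ast$, and we map each element of $S^\ast$ that arises as the value of a witness term to the corresponding value in $S$. Concretely, for $\alpha \in W \cup P$ and each subterm $t$ of $\alpha$ (recall that $W \cup P$ is closed under subterms), we want $\text{val}_{S^{\prime\prime}}(t) = \text{val}_S(t)$. The definition of $\sim_S = \sim_{S^\ast}$ is exactly what makes this assignment consistent: if two terms have equal values in $S^\ast$, they also have equal values in $S$. Hence the map from $W \cup P$-values in $S^\ast$ to $W \cup P$-values in $S$ is well defined and injective, and it extends to a bijection $\zeta$ of base sets. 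Setting $S^{\prime\prime} = \zeta(S^\ast)$ makes $S^{\prime\prime}$ isomorphic to $S^\ast$, hence a state. Since $\zeta$ fixes the background elements, in particular the probability components of p-witness terms, $S^{\prime\prime}$ coincides with $S$ on $W \cup P$.

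\emph{Main obstacle.} The hardest step is Step 2, specifically whether the values of witness terms in $S$ and $S^\ast$ can really be matched by a bijection. Witness and p-witness terms evaluate to multisets of tuples, possibly containing probability values, and the background interpretation of multisets and reals must be preserved. Equality of the partitions $\sim_S$ and $\sim_{S^\ast}$ gives consistency only at the level of whole term values. If this turns out too weak, we would strengthen the similarity notion to range over all subterms, including those occurring inside the comprehension terms. This enlarges $W \cup P$ but keeps it finite, so finitely many similarity classes remain, and the final assembly of the capture rule is unaffected. If the renaming leaves a probability value fixed that is not shared between $S$ and $S^\ast$, we would additionally include equations between p-witness terms and the constants of the background in the similarity notion.
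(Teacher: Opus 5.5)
Your reduction is the paper's own route: pass to a disjoint isomorphic copy, build $\zeta$ with $\zeta(\text{val}_{S'}(\alpha)) = \text{val}_S(\alpha)$ for all $\alpha \in W \cup P$, apply Lemma~\ref{lem-pa-coincidence} to $\zeta(S')$, and transfer back with Lemma~\ref{lem-pa-isomorphism}. The paper justifies $\zeta$ only by saying that $S$ and $S'$ are disjoint and $\mathit{WP}$-similar. Your worry in the last paragraph is justified: Step~2 does not go through, so your argument, like the paper's one-line version, has a genuine gap there.

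Equality of $\sim_S$ and $\sim_{S'}$ only records which members of $W \cup P$ have equal values, not what those values look like. The values are background objects: multisets of multisets of tuples, and for p-witness terms multisets of tuples with entries in $[0,1]$. An isomorphism, by contrast, is a renaming that fixes \textbf{true}, \textbf{false}, \textit{undef\/} and every probability value, and commutes with pairing and multiset formation. So $\zeta$ can send $\text{val}_{S'}(\alpha)$ to $\text{val}_S(\alpha)$ only if both have the same multiplicities at every level and the same probability entries.

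Take $W = \{\langle\langle (x) \mid R(x)\rangle \mid \textbf{true}\rangle\}$ and $P = \{\langle (1) \mid \textbf{true}\rangle\}$, a witness for, e.g., $\textbf{forall}\; x : R(x)\;\textbf{do}\; g(x) := x$. The two terms never have equal values, so all states are $\mathit{WP}$-similar. Yet no isomorphism makes a state with $|R| = 2$ coincide on $W$ with one where $|R| = 3$. Likewise, p-witness values $\langle (a,0.3),(b,0.7)\rangle$ and $\langle (c,0.5),(d,0.5)\rangle$ can never be matched. Your remark that $\zeta$ fixes the probability components therefore works against you. The claim that the map on witness values ``extends to a bijection $\zeta$ of base sets'' treats structured values as atoms, which is fine in Gurevich's sequential setting but not here.

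Your proposed repairs do not close the gap. Terms inside a comprehension contain its bound variables and have no value in a state on their own. The closed subterms, which $W \cup P$ already contains by subterm closure, do not constrain the sizes in the example above. Equations with finitely many background constants cannot pin down arbitrary probability vectors.

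More fundamentally, any similarity notion fine enough to guarantee $\zeta$ must record the cardinalities of the witness multisets and the exact probabilities. These range over infinitely many possibilities (e.g.\ randomized Quicksort on arrays of arbitrary length). So your claim that ``finitely many similarity classes remain'' is exactly what breaks, and the final assembly of the rule $r$ depends on it. A correct proof cannot, in general, reduce to making $S'$ coincide with $S$ on $W \cup P$. It would have to show directly that $r_S$, with its isolating terms $t_\chi^{\bar b}$, $t_\chi^{\bar a \bar b}$ and its weights, yields $\boldsymbol{\Delta}_{\mathcal{A}}(S')$ and $d^{\mathcal{A}}_{S'}$ when evaluated in $S'$. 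That needs a similarity notion of finite index under which this evaluation is uniform, and neither your proposal nor the paper supplies one.
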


\begin{proof}
Without loss of generality we can assume that the base sets $B, B^\prime$ of $S$ and $S^\prime$, respectively, are disjoint. Otherwise, we can replace $S^\prime$ by an isomorphic copy $\sigma(S^\prime)$, prove the lemma for this state instead, so we get $\boldsymbol{\Delta}_{r_S}(\sigma(S^\prime)) = \boldsymbol{\Delta}_{\mathcal{A}}(\sigma(S^\prime))$ and $d_{r_S,\sigma(S^\prime)} = d_{\sigma(S^\prime)}^{\mathcal{A}}$, and apply Lemma \ref{lem-pa-isomorphism} to show the claim.

Now due to Lemma \ref{lem-pa-isomorphism} it suffices to show the claim for a state $\bar{S}$ that is isomorphic to $S^\prime$. So we construct an isomorphism $\zeta$ with $\zeta(\text{val}_{S^\prime}(\alpha) = \text{val}_S(\alpha)$ for all $\alpha \in W \cup P$, and consider $\bar{S} = \zeta(S^\prime)$. This is possible, because $S$ and $S^\prime$ are disjoint and $WP$-similar.

Then by construction $S$ and $\bar{S}$ coincide on $W \cup P$, so we get $\boldsymbol{\Delta}_{r_S}(\bar{S}) = \boldsymbol{\Delta}_{\mathcal{A}}(\bar{S})$ and $d_{r_S,\bar{S}} = d_{\bar{S}}^{\mathcal{A}}$ by Lemma \ref{lem-pa-coincidence}, which completes the proof.
\end{proof}

Finally, as there are only finitely many $\mathit{WP}$-similarity classes, we choose representatives $S_1, \dots, S_r$ for these. Then for $1 \le i \le r$ there exists a condition $\varphi_i$, which is satisfied in $S_i$, but not in the other states $S_j$, so we can define a pASM rule $r$ as a bounded parallel composition of conditional rules \textbf{if} $\varphi_i$ \textbf{then} $r_{S_i}$. Then it is rather straightforward to show that for this rule we get $\boldsymbol{\Delta}_(S) = \boldsymbol{\Delta}_{\mathcal{A}}(S)$ and $d_{r,S} = d_S^{\mathcal{A}}$, which completes the proof of Theorem \ref{thm-capture}.

\end{document}